\newtheorem{theorem}{Theorem}[section]
\newtheorem{proposition}[theorem]{Proposition}
\newtheorem{lemma}[theorem]{Lemma}
\newtheorem{corollary}[theorem]{Corollary}
\newtheorem{remarks}{Remark}[section]
\newtheorem{definition}{Definition}[section]
\theoremstyle{nonumberplain}
\newtheorem{proof}{Proof}
\newcommand{\znum}{\mathbb{Z}}
\newcommand{\fnum}{\mathbb{F}}
\newcommand{\point}[1]{\bm{#1}}
\newcommand{\p}{\point}
\newcommand{\field}[1]{\mathbb{#1}}
\newcommand{\tr}[1]{{#1}^t}
\newcommand{\fk}{\field{K}}
\newcommand{\ff}{\mathbb{F}}
\newcommand{\term}[1]{\mathcal{#1}}
\newcommand{\kx}{\field{K}[x_1,\ldots,x_n]}
\newcommand{\grobner}{Gr\"{o}bner }
\DeclareMathOperator{\nf}{NormalForm}
\newcommand{\ideal}[1]{\langle #1 \rangle}
\newcommand{\inner}{\ideal}
\DeclareMathOperator{\lt}{lt}
\DeclareMathOperator{\Next}{Next}
\DeclareMathOperator{\pre}{Pre}
\DeclareMathOperator{\fail}{Fail}
\DeclareMathOperator{\Span}{Span}
\DeclareMathOperator{\im}{i}
\DeclareMathOperator{\zero}{Z}
\DeclareMathOperator{\cat}{cat}
\DeclareMathOperator{\sakata}{BMSUpdate}
\DeclareMathOperator{\fglm}{FGLM}
\DeclareMathOperator{\reduce}{Reduce}
\DeclareMathOperator{\IsGB}{IsGB}
\DeclareMathOperator{\BM}{BerlekampMassey}
\DeclareMathOperator{\sqrfree}{Sqrfree}
\DeclareMathOperator{\shaped}{ShapeDet}
\DeclareMathOperator{\shapen}{ShapePro}
\DeclareMathOperator{\BMSbased}{BMSbased}
\DeclareMathOperator{\toplevel}{TopLevel}
\renewcommand\footnotemark{}
\begin{document}

\title{Sparse FGLM algorithms\let\thefootnote\relax\footnote{\emph{Email addresses:} Jean-Charles.Faugere@inria.fr (Jean-Charles Faug\`{e}re), Chenqi.Mou@lip6.fr (Chenqi Mou)}}

\date{}

\author[a]{Jean-Charles Faug\`{e}re}
\author[b,a]{Chenqi Mou}
\affil[a]{PolSys Project, LIP6, INRIA Paris-Rocquencourt--UPMC--CNRS, 4 place Jussieu, 75005 Paris, France}
\affil[b]{LMIB--School of Mathematics and Systems Science, Beihang University, Beijing 100191, China}
\maketitle

\begin{abstract}
Given a  zero-dimensional ideal $I \subset \kx$ of degree $D$, the transformation of the ordering of its \grobner basis from DRL to LEX  is a key step in polynomial system solving and turns out to be the bottleneck of the whole solving process. Thus it is of crucial importance to design efficient algorithms to perform the change of ordering.

The main contributions of this paper are several efficient methods for the change of ordering which take advantage of the sparsity of multiplication matrices in the classical {\sf FGLM} algorithm.  Combing all these methods, we propose a deterministic top-level algorithm that automatically detects which method to use depending on the input. As a by-product, we have a fast implementation that is able to handle ideals of degree over $40000$. Such an implementation outperforms the {\sf Magma} and {\sf Singular} ones, as shown by our experiments.

First for the shape position case, two methods are designed based on the Wiedemann algorithm: the first is probabilistic and its complexity to complete the change of ordering is $O(D(N_1+n\log (D)))$, where $N_1$ is the number of nonzero entries of a multiplication matrix; the other is deterministic and computes the LEX \grobner basis of $\sqrt{I}$ via Chinese Remainder Theorem. Then for the general case, the designed method is characterized by the Berlekamp--Massey--Sakata algorithm from Coding Theory to handle the multi-dimensional linearly recurring relations. Complexity analyses of all proposed methods are also provided.

Furthermore, for generic polynomial systems, we present an explicit formula for the estimation of the sparsity of one main multiplication matrix, and prove its construction is free. With the asymptotic analysis of such sparsity, we are able to show for generic systems the complexity above becomes $O(\sqrt{6/n \pi} D^{2+\frac{n-1}{n}})$.
\end{abstract}

\noindent{\small {\bf Key words: }\grobner bases, Change of ordering, Zero-dimensional ideals, Sparse matrix, Wiedemann algorithm, {\sf BMS} algorithm}

\section{Introduction}

\subsection{Motivation}
\grobner basis is an important tool in computational ideal theory \cite{B85G, CLO1998U, B93G}, especially for polynomial system solving. For a given ideal and term ordering, the \grobner basis of this ideal with respect to (w.r.t.) the term ordering is a set of generators with good properties, such that manipulation of the ideal can be achieved with these generators.

The term ordering plays an important role in the theory of \grobner bases. It is well-known that \grobner bases w.r.t.\ different term orderings are also different and possess different theoretical and computational properties. For example,  \grobner bases w.r.t.\ the lexicographical ordering (LEX) have good algebraic structures and are convenient to use for polynomial system solving, while those w.r.t.\ the degree reverse lexicographical ordering (DRL) are computationally easy to obtain. Therefore, the common strategy to solve a polynomial system is to first compute the \grobner basis of the ideal defined by the system w.r.t.\ DRL, change its ordering to LEX, and perhaps further convert the LEX \grobner basis to triangular sets \cite{L1992S} or Rational Univariate Representation \cite{R1999S}. That is one of the main usages of algorithms for the change of ordering.

However, the computation of \grobner bases greatly enhanced recently \cite{F1999A, F2002A}, the step to change the ordering  of \grobner bases has become the bottleneck of the whole solving process (see Section \ref{sec:exp}). Hence it is of crucial significance to design efficient algorithms for the change of ordering. The purpose of this paper is precisely to provide such efficient algorithms.

Furthermore, some practical problems can be directly modeled as the change of ordering of \grobner bases. For example, the \grobner basis of an ideal derived from the AES-128 cryptosystem w.r.t.\ a certain term ordering (other than LEX) has been obtained \cite{BPW06}, and it may lead to a successful cryptanalysis on this system if one is able to convert its term ordering to LEX. And the decoding of some cyclic codes can also be regarded as a problem of changing the term ordering \cite{LY1997O}.

\subsection{Related works}
Several algorithms for the change of ordering have already existed, for example the {\sf FGLM} algorithm for the zero-dimensional case \cite{FGLM93E} and the \grobner walk for the generic case \cite{CKM97C}.  Similar algorithms have also been proposed to change the orderings of triangular sets \cite{PC06,DJMS08} or  using the {\sf LLL} algorithm \cite{BF03C} in the bivariate case.

Among them, the {\sf FGLM} algorithm, only applicable to the zero-dimensional case, is an efficient one. The number of field operations it needs to complete the change of ordering is $O(nD^3)$, where $n$ is the number of variables, and $D$ is the degree of the given ideal $I\subset \kx$. Its efficiency may be due to the fact that it reduces the problem of change of ordering to linear algebra operations. Such a connection is achieved through the multiplication matrix $T_i~ (i=1, \ldots, n)$ used in this algorithm, which represents the multiplication by  \(x_i\) in the quotient ring $\fk[x_1, \ldots, x_n]/I$ viewed as a vector space. These matrices are sparse, even when the input polynomial system is dense (see Section \ref{sec: random}). And in this paper we take advantage of  this sparsity structure to obtain fast {\sf FGLM} algorithms with good complexity and performances.

\subsection{Our contributions}
We first study the particular but important case when the zero-dimensional ideal $I$ is in shape position. Two methods based on the Wiedemann algorithm are proposed to compute the \grobner bases of $I$ or $\sqrt{I}$ w.r.t.\ LEX. They both make use of the sparsity by constructing the linearly recurring sequence
$$[\inner{\p{r}, T_1^i \p{e}}: i = 0, \ldots, 2D-1],$$
where $\p{r}$ is a vector and $\p{e} = (1, 0, \ldots)^t$ is the vector representing $\p{1}$ in $\kx/I$. It is easy to see that the minimal polynomial \(f_{1}\) in $\fk[x_1]$ of this linearly recurring sequence is indeed a polynomial in the \grobner basis of $I$ w.r.t LEX ($x_1 < \cdots < x_n$) when \(\deg(f_{1})=D\), and it can be computed by applying the Berlekamp--Massey algorithm \cite{W86S}. Furthermore, we show how to recover efficiently the other polynomials in the \grobner basis by solving structured (Hankel) linear systems. Hence, we are able to complete the first method for the change of ordering to LEX for ideals in shape position with complexity $O(D(N_1+n\log (D)))$, where $N_1$ is the number of nonzero entries in $T_1$. When \(n\ll D\) this almost matches the complexity of computing the minimal polynomial.

The other method for the shape position case uses the deterministic Wiedemann algorithm, which can always return the correct univariate polynomial in the \grobner basis w.r.t.\ LEX. Making use of the Chinese Remainder Theorem, this method adapts and extends the previous one to recover the \grobner basis of $\sqrt{I}$, instead of $I$. Thus it is suitable to those problems where the zeros, instead of the multiplicities, are of interest. For ideals in shape position, this deterministic method can always return the \grobner basis of $\sqrt{I}$ w.r.t.\ LEX with the complexity $O(D(N_1 + D (\log(D) \log\log(D) + n)))$.

We also briefly discuss how to apply an incremental variant of the Wiedemann algorithm to compute the univariate polynomial, which is of special importance among all the polynomials in the \grobner basis. Such an variant has a complexity sensitive to the output, namely the degree of the univariate polynomial, and is efficient when this degree is small.

Then for general ideals to which the methods above may be no longer applicable, we follow the idea above by generalizing the linearly recurring sequence to a $n$-dimensional mapping
$$E:(s_1, \ldots, s_n) \longmapsto \ideal{\p{r}, T_1^{s_1}\cdots T_n^{s_n} \p{e}}.$$
The minimal set of generating polynomials (w.r.t.\ a term ordering) for the linearly recurring relation determined by $E$ is essentially the \grobner basis of the ideal defined by $E$, and this polynomial set can be obtained via the Berlekamp--Massey--Sakata ({\sf BMS} for short hereafter) algorithm from Coding Theory \cite{S88F, S90E}. With modifications of the {\sf BMS} algorithm, we design a method to change the ordering in the general case.
The complexity of this algorithm is  $O(nD(N+\hat{N}\bar{N}D))$, where $N$ is the maximal number of nonzero entries in matrices $T_1, \ldots, T_n$, while $\hat{N}$ and $\bar{N}$ are respectively the number of polynomials and the maximal term number of all polynomials in the resulting \grobner basis.

Combing all these methods above, we present a deterministic top-level algorithm, which is able to choose automatically which method to use according to the input. The efficiency of the proposed methods is verified by experiments. The current implementation outperforms those of {\sf FGLM} in {\sf Magma} and {\sf Singular}. Take a randomly generated quadratic polynomial system of $13$ variables for example, it generates an ideal in shape position of degree $8192$. For such an ideal, the change of ordering to LEX can be achieved in $193.5$ seconds: this is $54$ times faster than the corresponding {\sf Magma} function. As shown in Table \ref{tab:time}, zero-dimensional ideals over a prime field of degree greater than $40000$ are now tractable.


Furthermore, the performances of these methods are heavily dependent on the sparsity of the multiplication matrices, especially $T_1$ for the shape position case. In general we assume the multiplication matrices known. However, for generic polynomial systems consisting of $n$-variate polynomials of degree $d$, the sparsity of $T_1$ is investigated, and we are able to give an explicit formula to compute the number of dense columns in $T_1$ and show indeed its construction is free. These results furnish a complete complexity analysis of the proposed method for generic polynomial systems. Then with an asymptotic analysis of the number of dense columns as $d$ tends to $+\infty$, we show the complexity of the first method for the shape position case becomes $O(\sqrt{6/n \pi} D^{2+\frac{n-1}{n}})$ for generic systems. Such simplified complexity is better than that of {\sf FGLM} with smaller constant and exponent.

\subsection{What is new}
To be self-contained, this paper also includes results obtained in \cite{FM11} in a refined way for description. However, several original extensions have also been presented here, making the discussion on this subject more comprehensive: (1) For ideals in shape position, one new algorithm is proposed based on the deterministic Wiedemann algorithm. Compared with the previous probabilistic one, this algorithm becomes deterministic and aims at the \grobner basis of the radical of the input ideal. (2) The multiplication matrices are assumed known in \cite{FM11}, but here for the multiplication matrix $T_1$ which is of special importance, its sparsity, together with the asymptotic behaviors, and construction cost are analyzed for generic polynomial systems. Such a study furnishes a complete understanding for the complexity of the change of ordering for generic systems, with construction of multiplication matrices also considered. (3) The proof of Theorem \ref{THM:LOOP} is further simplified via introduction of known results in the literature.



\subsection{Paper structure}
The organization of this paper is as follows. Related preparatory algorithms used in this paper, along with some notations, are first reviewed in Section \ref{sec:pre}. Then Section \ref{sec:shape} is devoted to the shape position case, where two methods with their complexity analyses are exploited. The method based on the {\sf BMS} algorithm for the general case is presented in Section \ref{sec:general}. Section \ref{sec:main} combines all the previous methods to a top-level algorithm. The sparsity of $T_1$ is studied in Section \ref{sec: random} and experimental results are provided in Section \ref{sec:exp}. 

\section{Backgrounds: FGLM and BMS algorithms}\label{sec:pre}

Let $\kx$ be the $n$-variate polynomial ring over a field $\fk$, with variables ordered as $x_1 < \cdots < x_n$. Suppose $G_1$ is the \grobner basis of a 0-dimensional ideal $I\subset \kx$ w.r.t.\ a term ordering $<_1$. Given another term ordering $<_2$, one wants to compute the \grobner basis $G_2$ of $I$ w.r.t.\ it. Denote by $D$ the degree of $I$, that is, the dimension of $\kx/I$ as a vector space. These notations are fixed hereafter in this paper.

\subsection{FGLM algorithm}

The {\sf FGLM} algorithm is one to perform the change of ordering of \grobner bases of 0-dimensional ideals efficiently \cite{FGLM93E}. The reason why it is fast may be due to the idea that it reduces the problem of ordering change to linear algebra operations in the quotient ring $\kx/I$. Such a reduction is realized in the following way.

First one computes the canonical basis of $\kx/\ideal{G_1}$ and orders its elements according to $<_1$. Let $B=[\epsilon_1, \ldots, \epsilon_D]$ be the ordered basis. Then $\epsilon_1$ will always equal $\p{1}$, for $<_1$ is a term ordering. Given a variable $x_i$, for each element $\epsilon_j$ in $B$, one can compute the normal form of $\epsilon_j x_i$ w.r.t.\ $G_1$, denoted by $\nf(\epsilon_j x_i)$. This normal form, viewed as an element of $\kx/\ideal{G_1}$, can be further written as a linear combination of $B$. Writing the coefficients as a column vector, one can construct a $D\times D$ matrix $T_i$ by adjoining all the column vectors for $j=1, \ldots, D$. This matrix is called the \emph{multiplication matrix} of $x_i$. It is not hard to verify that all $T_i$ commute: $T_i T_j = T_j T_i$ for $i, j = 1, \ldots, n$.

Next one handles all the terms in $\kx$ one by one following $<_2$. For each term $\p{x}^{\p{s}}$ with $\p{s} = (s_1, \ldots, s_n)$, its coordinate vector w.r.t.\ $B$ can be computed by
$$\p{v}_{\p{s}}= T_1^{s_1}\cdots T_n^{s_n}\p{e},$$
where $\p{e}=(1, 0, \ldots, 0)^t$ is the coordinate vector of $\p{1}$. Then criteria proposed in {\sf FGLM} guarantee that once a linear dependency of the coordinate vectors of computed terms
\begin{equation}\label{eq: fglm-linear}
  \sum_{\p{s}\in S} c_{\p{s}}\p{v}_{\p{s}}=0
\end{equation}
is found, a polynomial $f\in G_2$ can be directly derived in the following form
\begin{equation}\label{eq: fglm-gb}
f= \p{x}^{\p{l}}+\sum_{\p{s} \in S, \p{s}\ne \p{l}}\frac{c_{\p{s}}}{c_{\p{l}}}\p{x}^{\p{s}},
\end{equation}
where $\p{x}^{\p{l}}$ is the leading term of $f$ w.r.t.\ $<_2$ (denoted by $\lt(f)$) \cite{FGLM93E}.

As can be seen now, all one needs to do to obtain the \grobner basis $G_2$ is computing the coordinate vector of each term one by one, and checking whether a linear dependency of these vectors occurs after a new vector is computed, which can be realized by maintaining an echelon form of the matrix whose columns are coordinate vectors of previously computed terms. These steps are merely matrix manipulations from linear algebra. A trivial upper bound for the number of terms to consider is $D+1$ because of the vector size.


\subsection{BMS algorithm}
The {\sf BMS} algorithm from Coding Theory is a decoding algorithm to find the generating set of the error locator ideal in algebraic geometry codes \cite{S88F, S90E, SH2002A}. From a more mathematical point of view, it computes the set of minimal polynomials (w.r.t.\ a term ordering $<$) of a linearly recurring relation generated by a given multi-dimensional array. It is a generalization of the Berlekamp--Massey algorithm, which is applied to Reed--Solomon codes to find the generating error locator polynomial, or mathematically the minimal polynomial of a linearly recurring sequence.

The {\sf BMS} algorithm, without much modification, can also be extended to a more general setting of order domains \cite{CLO1998U, H1998a}. Combining with the Feng--Rao majority voting algorithm \cite{FR1993D}, this algorithm can often decode codes with more with $(d_{\min}-1)/2$ errors if the error locations are general \cite{BO2006C}, where $d_{\min}$ is the minimal distance. Next a concise description of the {\sf BMS} algorithm is given, focusing on its mathematical meanings.

As a vector $\p{u}=(u_1, \ldots, u_n)\in \znum_{\geq 0}^n$ and a term $\p{x}^{\p{u}}=x_1^{u_1}\cdots x_n^{u_n}\in \kx$ are 1--1 corresponding, usually we do not distinguish one from the other. A mapping $E: \znum_{\geq 0}^n \longrightarrow \fk$ is called a \emph{$n$-dimensional array}. In Coding Theory, the array $E$ is usually a syndrome array determined by the error word \cite{SH2002A}. Besides the term ordering, we define the following partial ordering: for two terms $\p{u}=(u_1,
\ldots, u_n)$ and $\p{v}=(v_1, \ldots, v_n)$, we say that  $\p{u}\prec \p{v}$ if $u_i \leq v_i$ for $i=1,\ldots, n$.

\begin{definition}
  Given a polynomial $f=\sum_{\p{s}}f_{\p{s}}
x^{\p{s}}\in \kx$, a $n$-dimensional mapping $E$ is said to satisfy the \emph{$n$-dimensional linearly recurring relation} with \emph{characteristic polynomial} $f$ if
\begin{equation}\label{eqn:charPoly}
\sum_{\p{s}} f_{\p{s}}E_{\p{s}+\p{r}}=0, \quad \forall \p{r}\succ 0.
\end{equation}
\end{definition}

The set of all characteristic polynomials of the $n$-dimensional linearly
recurring relation for the array $E$ forms an ideal, denoted by
$I(E)$. Again in the setting of decoding when $E$ is a syndrome array, this ideal is called the \emph{error locator ideal} for $E$, and its elements are called \emph{error locators}. The definition of $I(E)$ used here in this paper follows \cite{SH2002A}, and one can easily see that this definition is equivalent to that in \cite{CLO1998U} by \cite[Thereom 23]{SH2002A}.

Furthermore, the set of minimal polynomials for $I(E)$ w.r.t.\ $<$, which the
{\sf BMS} algorithm computes, is actually the \grobner basis of $I(E)$ w.r.t.\ $<$ \cite[Lemma 5]{S90E}. The canonical basis of $\kx/I(E)$ is also called the \emph{delta set} of $E$, denoted by $\Delta(E)$. The term ``delta set" comes from the property that if $\p{u}\in \znum_{\geq 0}^n$ is contained in $\Delta(E)$, then $\Delta(E)$ also contains all elements $\p{v} \in \znum_{\geq 0}^n$ such that $\p{v}\prec \p{u}$.

Instead of studying the infinite array $E$ as a whole, the {\sf BMS}
algorithm deals with a truncated subarray of $E$ up to
some term $\p{u}$ according to the given term
ordering $<$. A polynomial $f$ with $\lt(f)=\p{s}$ is said to be
\emph{valid for $E$ up to $\p{u}$} if either $\p{u}\not \succ \p{s}$ or
$$\sum_{\p{t}} f_{\p{t}}E_{\p{t}+\p{r}}=0, \quad \forall \p{r} ~(0\prec \p{r} \leq\p{u}-\p{s}).$$
$E$ may be omitted if no ambiguity occurs. A polynomial set is said to be valid up to $\p{u}$ if each its polynomial is so.

Similarly to {\sf FGLM}, the {\sf BMS} algorithm also handles terms in $\kx$ one by one according to $<$, so that the polynomial set $F$ it maintains is valid up to the new term. Suppose $F$ is valid up to some term $\p{u}$. When the next term of $\p{u}$ w.r.t.\ $<$, denoted by $\Next(\p{u})$, is considered, the {\sf BMS} algorithm will update $F$ so that it keeps valid up to $\Next(\p{u})$. Meanwhile, terms determined by $\Next(\p{u})$ are also tested whether they are members of $\Delta(E)$. Therefore, more and more terms will be verified in $\Delta(E)$ as the {\sf BMS} algorithm proceeds. The set of verified terms in $\Delta(E)$ after the term $\p{u}$ is called the \emph{delta set up to $\p{u}$} and denoted by $\Delta(\p{u})$. Then we have
$$\Delta(\p{1}) \subset \cdots \subset \Delta(\p{u}) \subset \Delta(\Next(\p{u})) \subset \cdots \subset \Delta(E).$$
After a certain number of terms are considered, $F$ and $\Delta(\p{u})$ will grow to the \grobner basis of $I(E)$ and $\Delta(E)$ respectively.

Next only the outlines of the update procedure mentioned above, which is also the main part of the {\sf BMS} algorithm, are presented as Algorithm \ref{alg: bmsUpdate} for convenience of later use. More details will also be provided in Section \ref{sec:general}. One may refer to \cite{SH2002A, CLO1998U} for a detailed description. In Algorithm \ref{alg: bmsUpdate} below, the polynomial set $G$, called the \emph{witness set}, is auxiliary and will not be returned with $F$ in the end of the {\sf BMS} algorithm.

\begin{algorithm}[h]\label{alg:sakata}
    \KwIn{~\\
    \Indp
    $F$, a minimal polynomial set valid up to $\p{u}$;\\
    $G$, a witness set up to $\p{u}$;\\
    $\Next(\p{u})$, a term;\\
    $E$, a $n$-dimensional array up to $\Next(\p{u})$.}
    \KwOut{~\\
    \Indp
    $F^+$, a minimal polynomial set valid up to $\Next(\p{u})$;\\
    $G^+$, a witness set up to $\Next(\p{u})$.
    }
    \BlankLine
    \begin{enumerate}
    \Indm
    \item Test whether every polynomial in $F$ is valid up to $\Next(\p{u})$
    \item Update $G^{+}$ and compute the new delta set up to $\Next(\p{u})$ accordingly
    \item Construct new polynomials in $F^{+}$ such that they are valid up to $\Next(\p{u})$
    \end{enumerate}
\caption{$(F^+, G^+) :=\sakata(F, G, \Next(\p{u}), E)$}\label{alg: bmsUpdate}
\end{algorithm}

\section{Shape position case: probabilistic, deterministic and incremental algorithms}\label{sec:shape}

In this section, the case when the ideal $I$ is in shape position is studied.

\begin{definition}
An ideal $I\subset \kx$ is said to be \emph{in shape position} if its \grobner basis w.r.t LEX is of the following form
\begin{equation}\label{eq: shape}
[f_1(x_1), x_2-f_2(x_1), \ldots, x_n-f_n(x_1)].
\end{equation}
\end{definition}
One may easily see that $I$ here is 0-dimensional and $\deg(f_1)=D$.

Such ideals take a large proportion in all the consistent ideals and have been well studied and applied \cite{B1994T, R1999S}. The special structure of their \grobner bases enables us to design specific and efficient methods to change the term ordering to LEX. In the following, methods designed for different purposes, along with their complexity analyses, are exploited.

Throughout this section, we assume the multiplication matrix $T_1$ is nonsingular. Otherwise, one knows by the Stichelberger's theorem (cf. \cite[Theorem 2.1]{R1999S}) that $x_1 = 0$ will be a root of the univariate polynomial in $I$'s \grobner basis w.r.t.\ LEX, and sometimes the polynomial system can be further simplified.

\subsection{Probabilistic algorithm to compute \grobner basis of the ideal}\label{subsec: shape-normal}

\subsubsection{Algorithm description}
Given a 0-dimensional ideal $I$, if the univariate polynomial $f_1(x_1)$ in its \grobner basis w.r.t.\ LEX is of degree $D$, then we know $I$ is in shape position.

The way to compute such a univariate polynomial is the Wiedemann algorithm. Consider now the following linearly recurring sequence
\begin{equation}\label{eqn:seq}
s=[\inner{\p{r}, T_1^i \p{e}}: i = 0, \ldots, 2D-1],
\end{equation}
where $\p{r}$ is a randomly generated vector in $\fk^{(D\times 1)}$, $T_1$ is the multiplication matrix of $x_1$, $\p{e}$ is the coordinate vector of $\p{1}$ w.r.t the canonical basis of $\kx/I$, and $\inner{\cdot, \cdot}$ takes the inner product of two vectors. It is not hard to see that the minimal polynomial $\tilde{f}_1$ of the sequence $s$ is a factor of $f_1$. As $D$ is always a bound on the size of the linearly recurring sequence, the Berlekamp--Massey algorithm can be applied to the sequence $s$ to compute $\tilde{f}_1$. Furthermore, if $\deg(\tilde{f}_1)=D$, then $\tilde{f}_1 = f_1$ and $I$ can be verified in shape position.

Suppose $\deg(\tilde{f}_1)=D$ holds and $f_i$ in \eqref{eq: shape} is of the form $f_i = \sum_{k=0}^{D-1} c_{i,k}x_1^k$ for $i=2, \ldots, n$. Then computing the whole \grobner basis of $I$ w.r.t.\ LEX reduces to determining all the unknown coefficients $c_{i,k}$. Before we show how to recover them, some basic results about linearly recurring sequences are recalled.

%

\begin{definition}
\label{def:hankel}
Let \(s=[s_0,s_1,s_2,\cdots]\) be a sequence of elements in \(\mathbb{K}\) and \(d\) an integer. The \(d\times d\) \emph{Hankel matrix} is defined as
$$
H_{d}(s)=
 \left[ \begin {array}{ccccc}
 { s_0}&{ s_1}&{ s_2}&\cdots&{ s_{d-1}}\\
 { s_1}&{ s_2}&{ s_3}&\cdots&{ s_{d}}\\
 \vdots& \vdots & \vdots & \ddots & \vdots\\
{ s_{d-1}}&{ s_{d}}&{ s_{d+1}}&\cdots&{ s_{2d-2}}
\end {array} \right].
$$
\end{definition}

\begin{theorem}[{\cite{Jo1989}}]\label{thm:hankel}
Let \(s=[s_0,s_1,s_2,\cdots]\) be a linearly recurring sequence. Then the minimal polynomial \(M^{(s)}(x)=\sum_{i=0}^{d}m_{i}x^{i}\)of the sequence \(s\) is such that:

\begin{itemize}
\item[(i)] \(d={\rm rank}(H_{d}(s))= {\rm rank}(H_{i}(s)) \)  for all \(i> d\);
\item[(ii)] \(\ker(H_{d+1}(s))\) is a vector space of dimension \(1\) generated by \(\tr{(m_0,m_1,\ldots,m_{d})}\).
\end{itemize}
\end{theorem}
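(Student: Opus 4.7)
The plan is to exploit the annihilator ideal of $s$: the polynomials $f=\sum_j a_j x^j$ with $\sum_j a_j s_{i+j}=0$ for all $i\geq 0$ form a principal ideal of $\fk[x]$ generated by $M^{(s)}(x)$, which is monic of degree exactly $d$ by minimality. From this algebraic viewpoint both claims reduce to a rank computation on the shifted ``column'' vectors of the Hankel matrices, which is a routine but careful argument.

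For (i), I would study the columns $c_j := \tr{(s_j, s_{j+1}, \ldots, s_{j+d-1})}$ of $H_d(s)$, viewed as elements of $\fk^d$. Applying the recurrence coordinatewise gives the vector identity $\sum_{k=0}^d m_k c_{j+k}=0$ for every $j\geq 0$, so the whole family $\{c_i\}_{i\geq 0}$ lies in the span of $c_0,\ldots,c_{d-1}$. Conversely, suppose $c_0,\ldots,c_{d-1}$ were linearly dependent, say $\sum_{j<d} a_j c_j = 0$. Setting $t_i := \sum_{j} a_j s_{i+j}$ produces a sequence with $t_0=\cdots=t_{d-1}=0$ that satisfies the same recurrence as $s$ (since convolution with $M^{(s)}$ commutes with linear combinations), forcing $t_i=0$ for all $i$; this puts the nonzero polynomial $\sum_j a_j x^j$ of degree less than $d$ into the annihilator ideal, contradicting the minimality of $M^{(s)}$. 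Hence $c_0,\ldots,c_{d-1}$ are independent and $\mathrm{rank}(H_d(s))=d$. For $i>d$, the taller columns $c_j^{(i)}\in\fk^i$ satisfy the same vector recurrence, so the first $d$ already span the column space; they remain independent because their top $d$ coordinates recover the independent $c_0,\ldots,c_{d-1}$. Thus $\mathrm{rank}(H_i(s))=d$ as well.

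For (ii), membership of $\tr{(m_0,\ldots,m_d)}$ in $\ker(H_{d+1}(s))$ is immediate: the $j$-th row of $H_{d+1}(s)$ paired with $\tr{(m_0,\ldots,m_d)}$ equals $\sum_k m_k s_{j+k}=0$ for $j=0,\ldots,d$. Having established in (i) that $\mathrm{rank}(H_{d+1}(s))=d$, the rank--nullity theorem forces $\dim\ker(H_{d+1}(s))=1$, so the kernel is exactly the line spanned by $\tr{(m_0,\ldots,m_d)}$.

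The only delicate step is the propagation argument in (i): one must upgrade a linear relation among $d$ consecutive columns, which a priori only gives $d$ equations on $s$, into a relation valid at every shift. The trick above---showing that $t_i=\sum_j a_j s_{i+j}$ itself satisfies the recurrence of $s$ and hence vanishes identically once it exhibits $d$ consecutive zeros---handles this cleanly and is the unique place where the minimality hypothesis on $M^{(s)}$ is essential.
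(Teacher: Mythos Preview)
The paper does not prove this theorem; it is quoted from \cite{Jo1989} and used as a black box, so there is no ``paper's own proof'' to compare against. Your argument is correct and is essentially the classical one: the annihilator ideal of $s$ is principal with generator $M^{(s)}$, the recurrence propagates column dependencies, and rank--nullity finishes (ii) once (i) is established.

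The only place worth a second look is your propagation step in (i), and you have handled it properly. From $\sum_{j<d}a_j c_j=0$ you get $t_0=\cdots=t_{d-1}=0$; since $t$ satisfies the degree-$d$ recurrence (with $m_d\neq 0$), these $d$ consecutive zeros force $t\equiv 0$, so the nonzero polynomial $\sum_j a_j x^j$ of degree $<d$ annihilates $s$, contradicting minimality. That is exactly the spot where the \emph{degree} of $M^{(s)}$, not just its existence, is used. For $H_i(s)$ with $i>d$, your two observations---that the taller columns obey the same vector recurrence and that projection onto the first $d$ rows preserves the independence of $c_0^{(i)},\ldots,c_{d-1}^{(i)}$---are sufficient.
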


For each $i=2, \ldots, n$, as $x_i-\sum_{k=0}^{D-1} c_{i,k}x_1^k \in I$, one has $\nf( x_i - \sum_{k=0}^{D-1} c_{i,k}x_1^k)=0$, thus
$$\p{v}_{i}:=T_i\p{e} = \sum_{k=0}^{D-1} c_{i,k}\cdot T_1^k \p{e}.$$
Multiplying $T_1^j$ and taking the inner product with a random vector $\p{r}$ to both hands for $j=1,\ldots, D-1$, one can further construct $D$ linear equations
\begin{equation}\label{eqn:Hankel}
  \ideal{\p{r}, T_1^j \p{v}_i} = \sum_{k=0}^{D-1}c_{i,k} \cdot \ideal{\p{r}, T_1^{k+j} \p{e}}, \quad  j=0, \ldots, D-1.
\end{equation}
With $c_{i,k}$ considered as unknowns, the coefficient matrix $H$ with entries $\ideal{\p{r}, T_1^{k+j} \p{e}}$ is indeed a $D\times D$ Hankel matrix, and thus invertible by Theorem \ref{thm:hankel}. Furthermore, the linear equation set \eqref{eqn:Hankel} with the Hankel matrix $H$ can be efficiently solved \cite{Brent1980}. All the solutions of these linear systems for $i=2, \ldots, n$ will lead to the \grobner basis we want to compute.

The method above is summarized in the following algorithm, whose termination and correctness are direct results based on previous discussions. The subfunction $\BM()$ is the Berlekamp--Massey algorithm, which takes a sequence over $\fk$ as input and returns the minimal polynomial of this sequence \cite{W86S}.

\begin{algorithm}[h]
    \KwIn{$G_1$, \grobner basis of a 0-dimensional ideal
    $I \subset \kx$ w.r.t.\ $<_1$}
    \KwOut{$G_2$, \grobner basis of $I$ w.r.t.\ LEX if the polynomial returned by $\BM()$ is of degree $D$;
     {\sf Fail}, otherwise.}

\BlankLine

Compute the canonical basis of $\kx/\ideal{G_1}$ and multiplication matrices $T_1, \ldots, T_n$\;

$\p{e} := (1, 0, \ldots, 0)^{t}\in \fk^{(D \times 1)}$\;


Choose $\p{r}_{0}=\p{r}\in \fk^{(D\times 1)}$ randomly\;
\For{$i=1,\ldots,2D-1$}
{
    \(\p{r}_{i} := (T_1^t) \p{r}_{i-1}\)\;\label{line:matrix-vector}
}

Generate the sequence $s := [\inner{\p{r}_i, \p{e}}:\, i=0, \ldots, 2D-1]$\; \label{line:berlekamp-begin}

$f_1 := \BM(s)$\; \label{line:bm}

\eIf{$\deg(f_1)=D$}
{
    \(H:=H_{D}(s)\)  \hfill// {\it Construct the Hankel matrix}\\
    \For{$i=2, \ldots, n$}
    {
         \(\p{b} :=\tr{\left(\inner{\p{r}_j,T_i \p{e}} :\, j=0,\ldots,D-1\right)}\)\;
        Compute $\p{c}=\tr{(c_{1},\ldots,c_{D})}\ :=H^{-1} \p{b}$\;
        $f_i := \sum_{k=0}^{D-1}\p{c}_{k+1} x_1^k$\;
    }
    \Return $[f_1, x_2-f_2, \ldots, x_n - f_n]$\; \label{line:berlekamp-end}
}
{\Return{\sf Fail}\;}
\caption{Shape position (probabilistic)
 $G_2 := \shapen(G_1, <_1)$}\label{alg: shapeN}
\end{algorithm}

\begin{remarks}\label{rm: normal}
  As can be seen from the description of Algorithm \ref{alg: shapeN}, such a method is a probabilistic one. That is to say, it can return the correct \grobner basis w.r.t.\ LEX with probabilities, and may also fail even when $I$ is indeed in shape position.
\end{remarks}

\subsubsection{Complexity}
In this complexity analysis and others to follow, we assume that the multiplication matrices are all known and neglect their construction cost.

Suppose the number of nonzero entries in $T_1$ is $N_1$. The Wiedemann algorithm (both construction of the linearly recurring sequence and computation of its minimal polynomial with the Berlekamp--Massey algorithm) will take $O(D\,(N_1+\log(D)))$ field operations to return the minimal polynomial $\tilde{f}_1$ \cite{W86S}.

Next we show how the linear system \eqref{eqn:Hankel} can be generated for free. Note that for any $\p{a}, \p{b}\in \fk^{(D\times 1)}$ and $T\in \fk^{(D\times D)}$, we have $\inner{\p{a}, T\p{b}}=\inner{T^t \p{a}, \p{b}}$, where $T^t$ denotes the transpose of $T$. Thus in \eqref{eqn:seq} and \eqref{eqn:Hankel}
\begin{equation*}
    \inner{\p{r}, T_1^i \p{e}} = \inner{(T_1^t)^i \p{r}, \p{e}}, \quad
    \inner{\p{r}, T_1^j \p{v}_i} = \inner{(T_1^t)^j \p{r}, \p{v}_i}.
\end{equation*}
 Therefore, when computing the sequence \eqref{eqn:seq}, we can record $(T_1^t)^i \p{r}~(i=0, \ldots, 2D-1)$ and use them for construction of the linear equation set \eqref{eqn:Hankel}.

First, as each entry $\ideal{\p{r}, T_1^{k+j} \p{e}}$ of the Hankel matrix $H$ can be extracted from the sequence \eqref{eqn:seq}, the construction of $H$ is free of operations. What is left now is the computation of \(\inner{(T_1^t)^j \p{r}, \p{v}_i}\), where \((T_1^t)^j \p{r} \) has already been computed and \(\p{v}_{i}=T_i \p{e}=\nf(x_i)\). Without loss of generality, we can assume that \(\nf(x_i)=x_i\) (this is not true only if there is a linear equation \(x_i+\cdots\) in the Gr\"obner basis \(G_{1}\), and in that case we can eliminate the variable \(x_{i}\)). Consequently \(\p{v}_{i}\) is a vector with all its components equal to $0$ except for one component equal to \(1\). Hence computing \(\inner{(T_1^t)^j \p{r}, \p{v}_i}\) is equivalent to extracting some component from the vector \((T_1^t)^j \p{r}\) and there is not additional cost.

For each $i=2, \ldots, n$, solving the linear equation set $H\p{c}=\p{b}_i$ only needs $O(D\log (D))$ operations
if fast polynomial multiplication is used~\cite{Brent1980}. Summarizing the analyses above, we have the following complexity result for this method.

\begin{theorem}
\label{prop:complexity shape}
Assume that  \(T_1\)\ is constructed (note that \(T_2, \ldots, T_n\) are not needed). If the minimal polynomial of \eqref{eqn:seq} computed by the Berlekamp--Massey algorithm is of degree $D$, then the complexity of this method is bounded by
\begin{equation*}
    O(D(N_1+\log(D))+(n-1)D\log (D))=O(D(N_1+n\log (D))).
    \end{equation*}
\end{theorem}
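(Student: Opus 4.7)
The plan is to break the total cost into three phases—the Wiedemann/Berlekamp--Massey phase, the setup of the Hankel linear systems, and the solving of those Hankel systems—and show that the first dominates except for an additive term linear in $n$.

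First I would handle the Wiedemann phase. The sequence $s=[\ideal{\p{r},T_1^i\p{e}}:i=0,\ldots,2D-1]$ is built by iterating $\p{r}_i:=T_1^t\p{r}_{i-1}$, which is $2D-1$ sparse matrix-vector products with $T_1^t$. Since $T_1$ has $N_1$ nonzero entries, each product costs $O(N_1)$, so generating $s$ costs $O(DN_1)$. Applying $\BM$ to a sequence of length $2D$ then costs $O(D\log D)$ by the standard fast Berlekamp--Massey bound cited from \cite{W86S}, giving $O(D(N_1+\log D))$ for the whole first phase.

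Next I would argue that constructing the Hankel system $H\p{c}=\p{b}_i$ for each $i=2,\ldots,n$ is essentially free. For the coefficient matrix, every entry $\ideal{\p{r},T_1^{k+j}\p{e}}=s_{k+j}$ is already in the sequence $s$, so no new arithmetic is needed. For the right-hand side, using the inner-product identity $\ideal{\p{r},T_1^j\p{v}_i}=\ideal{(T_1^t)^j\p{r},\p{v}_i}$ and the fact that the vectors $(T_1^t)^j\p{r}$ were already recorded while generating $s$, each entry of $\p{b}_i$ is a single inner product of a known vector with $\p{v}_i=T_i\p{e}=\nf(x_i)$. After the reduction step that rewrites any linear element of $G_1$ so that $\nf(x_i)=x_i$, the vector $\p{v}_i$ is a standard basis vector, and the inner product reduces to reading off a single coordinate. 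Hence producing all the $\p{b}_i$ costs $O((n-1)D)$, which is absorbed.

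Finally, for each $i$ we solve a $D\times D$ Hankel system; by the classical result of Brent, Gustavson and Yun (as cited via \cite{Brent1980}), this costs $O(D\log D)$ field operations using fast polynomial arithmetic. Summing over $i=2,\ldots,n$ contributes $O((n-1)D\log D)$. Adding the three phases yields
\begin{equation*}
O(D(N_1+\log D)) + O((n-1)D) + O((n-1)D\log D) = O(D(N_1+n\log D)),
\end{equation*}
which is the claimed bound. The only delicate point—and the main obstacle in making the argument airtight—is justifying the ``free'' right-hand side construction: one must be careful that after eliminating variables whose normal form is nontrivial, the remaining $\p{v}_i$ really are coordinate vectors, so that no hidden sparse matrix-vector product with $T_i$ is incurred. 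Everything else is a direct aggregation of known complexity bounds.
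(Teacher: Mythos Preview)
Your proposal is correct and follows essentially the same approach as the paper's own complexity analysis: the same three-phase decomposition (Wiedemann/Berlekamp--Massey, free construction of the Hankel systems via the transpose trick and the observation that $\p{v}_i$ is a unit vector, and fast Hankel solving via \cite{Brent1980}), with the same aggregation to the stated bound. The only cosmetic difference is that you charge $O((n-1)D)$ for assembling the right-hand sides whereas the paper declares this step literally free of field operations, but this is absorbed either way.
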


This complexity almost matches that of computing the minimal polynomial of the multiplication matrix $T_1$ if $n$ is small compared with $D$.

\subsubsection{Example}
We use the following small example to show how this method applies to ideals in shape position. Given the \grobner basis of a 0-dimensional ideal $I\subset \ff_{11}[x_1, x_2, x_3]$ w.r.t.\ DRL
$$G_1 = [x_2^2+9\,x_2+2\,x_1+6, ~x_1^2+2\,x_2+9, ~x_3+9],$$
we first compute the degree of $I$ as $D=4$, the canonical basis $B = [1, x_1, x_2, x_1x_2]$, and the multiplication matrices $T_1$, $T_2$ and $T_3$.

With the random vector $\p{r} = (8,4,8,6)^t \in \fk^{(4\times 1)}$, we can construct the linearly recurring sequence
$$s = [8,4,0,7,6,8,10,10].$$
Then the Berlekamp--Massey algorithm is applied to $s$ to obtain the minimal polynomial $\tilde{f}_1 = x_1^4+8\,x_1+9$. From the equality $\deg(\tilde{f}_1) = D=4$, we know now the input ideal $I$ is in shape position.

The Hankel coefficient matrix
$$H = \left(
\begin{tabular}{cccc}
  8 &4&0&7\\
  4&0&7&6\\
  0&7&6&8\\
  7&6&8&10
\end{tabular}
\right)
$$
is directly derived from $s$. Next take the computation of the polynomial $x_2 - f_2(x_1) \in G_2$ for example, the vector $\p{b}=(8,6,8,3)^t$ is constructed. The solution of the linear equation set $H\p{c} =\p{b}$ being $\p{c} = (1,0,5,0)^t$, we obtain the polynomial in $G_2$ as $x_2 + 6\,x_1^2 + 10$. The other polynomial $x_3 - f_3(x_1)$ can be similarly computed. In the end, we have the \grobner basis of $I$ w.r.t.\ LEX
$$G_2 = [x_1^4+8\,x_1+9, ~x_2 + 6\,x_1^2 + 10, ~x_3 + 9].$$

\subsection{Deterministic algorithm to compute Gr\"obner basis
of radical of the ideal}

As already explained in Remarks \ref{rm: normal}, the classical Wiedemann algorithm is a probabilistic one. For a vector chosen at random, it may only return a proper factor $\tilde{f}_1$ of the polynomial $f_1$, i.e., $\tilde{f}_1 | f_1$ but $\tilde{f}_1 \neq f_1$. In fact, the deterministic Wiedemann algorithm can be applied to obtain the univariate polynomial $f_1$, then one knows for sure whether $I$ is in shape position or not. The main difficulty is to compute the other polynomials \(f_{2},\ldots,f_{n}\) in a deterministic way.

In the following we present an algorithm to compute the Gr\"obner basis of the radical of the ideal \(I\). Indeed, in most applications, only the zeros of a polynomial system are of interest and we do not need to keep  their multiplicities. Hence it is also important to design an efficient method to perform the change of ordering of \grobner basis of an ideal $I$ in a way that the output is the \grobner basis of $\sqrt{I}$.

\subsubsection{Deterministic version of the Wiedemann algorithm}

The way how this deterministic variant of the Wiedemann algorithm
proceeds is first recalled. Instead of a randomly chosen vector in the classical Wiedemann algorithm, in the deterministic version all the vectors of the canonical basis of $\fk^{(D\times 1)}$
$$\p{e}_1 = (1, 0,\ldots, 0)^t, \p{e}_2 = (0,1, 0, \ldots, 0)^t, \ldots, \p{e}_D = (0, \ldots, 0, 1)^t$$
are used. One first computes the minimal polynomial $f_{1,1}$ of the linearly recurring sequence
\begin{equation}\label{eq: deterministic-1}
  [\inner{\p{e}_1, T_1^j \p{e}}: j = 0, \ldots, 2D-1].
\end{equation}
Suppose $d_1 = \deg(f_{1,1})$, and $\p{b}_1 = f_{1,1}(T_1)\p{e}$. If $\p{b}_1 = \p{0}$, one has $f_{1,1}=f_1$ and the algorithm ends; else it is not hard to see that the minimal polynomial $f_{1,2}$ of the sequence
$$[\inner{\p{e}_2, T_1^j \p{b}_1}: j = 0, \ldots, 2(D-d_1)-1]$$
is indeed a factor of $f_1/f_{1,1}$, a polynomial of degree $\leq D-d_1$ (that is why only the first $2(D-d_1)$ terms are enough in the above sequence). Next, one computes $\p{b}_2 = f_{1,1}f_{1,2}(T)\p{e}$ and checks whether $\p{b}_2 = \p{0}$. If not, the above procedure is repeated and so on. This method ends with $r~(\leq D)$ rounds and one finds $f_1 = f_{1, 1} \cdots f_{1,r}$.

\subsubsection{Deterministic algorithm description}

First we study the general case when a factor of $f_1$ is found. Suppose a vector $\p{w} \in \fk^{(D\times 1)}$ is chosen to construct the linearly recurring sequence
\begin{equation}\label{eq: det-seq}
  [\inner{\p{w}, T_1^i \p{e}}: i = 0, \ldots, 2D-1],
\end{equation}
and the minimal polynomial of this sequence is $\tilde{f}_1$, a proper factor of $f_1$ of degree $d$. We show how to recover the \grobner basis of $I+\ideal{\tilde{f}_1}$ w.r.t.\ LEX. Since the ideal \(I\) is in shape position, it is not hard to see that the ideal $I+\ideal{\tilde{f}_1}$ is also in shape position, and its \grobner basis w.r.t.\ LEX is indeed $[\tilde{f}_1, x_2-\tilde{f}_2, \ldots, x_n - \tilde{f}_n]$, where $\tilde{f}_i$ is the remainder of $f_i$ modulo $\tilde{f}_1$ for $i=2, \ldots, n$.

Now for each $i$, we can construct the linear system similar to \eqref{eqn:Hankel}
\begin{equation}\label{eq: det-eqn}
  \ideal{\p{w}, T_1^j T_i \p{e}} = \sum_{k=0}^{d-1}y_k \cdot \ideal{\p{w}, T_1^{k+j} \p{e}}, \quad  j=0, \ldots, d-1,
\end{equation}
where $y_0, \ldots, y_{d-1}$ are the unknowns. As the $d\times d$ Hankel matrix of \eqref{eq: det-seq} is invertible by Theorem \ref{thm:hankel}, there is a unique solution $c_{i, 0}, c_{i,1}, \ldots, c_{i, d-1}$ for \eqref{eq: det-eqn}. Next we will connect this solution and a polynomial in the \grobner basis of $I +\ideal{\tilde{f}_1}$, and the following lemma is useful to show this connection.

\begin{lemma}
\label{lem:share}
Suppose \(\tilde{f}_1\) is the minimal polynomial of \eqref{eq: det-seq}
for some  $\p{w} \in \fk^{(D\times 1)}$,
\(\tilde{T}_1\) the multiplication matrix of $x_1$ of the ideal
\(I+\ideal{\tilde{f}_1}\) w.r.t.\ $<_1$, and $\tilde{\p{e}} = (1, 0, \ldots, 0) \in \fk^{(d\times 1)}$ the canonical basis of $\p{1}$ in $\kx/(I+\ideal{\tilde{f}_1})$. Then \(\tilde{f}_1\) is also the minimal polynomial of $[\tilde{\p{e}}, \tilde{T}_1 \tilde{\p{e}}, \tilde{T}_1^2 \tilde{\p{e}}, \ldots]$.
 \end{lemma}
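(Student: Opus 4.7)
The plan is to exploit the shape-position description of $I + \ideal{\tilde{f}_1}$ established in the paragraph preceding the lemma, which reduces the statement to a routine fact about cyclic vectors in a univariate polynomial quotient ring.

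First I would invoke the preceding observation: since $\tilde{f}_1$ is produced as the minimal polynomial of the scalar sequence \eqref{eq: det-seq}, it annihilates $T_1$ on $\p{e}$, so it divides the univariate polynomial $f_1$ of the shape-position Gröbner basis of $I$. Therefore $I + \ideal{\tilde{f}_1}$ is again in shape position, with LEX Gröbner basis $[\tilde{f}_1, x_2 - \tilde{f}_2, \ldots, x_n - \tilde{f}_n]$, so there is a $\fk[x_1]$-algebra isomorphism
\begin{equation*}
  \kx/(I + \ideal{\tilde{f}_1}) \;\cong\; \fk[x_1]/\ideal{\tilde{f}_1},
\end{equation*}
sending $x_i$ to $\tilde{f}_i(x_1)$ for $i \geq 2$. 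Under this iso the canonical basis of the left-hand side corresponds to $\{1, x_1, \ldots, x_1^{d-1}\}$, and the class of $\p{1}$ corresponds to $\tilde{\p{e}}$.

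Next I would translate the vector sequence through this identification. By construction, $\tilde{T}_1$ represents multiplication by $x_1$ on $\kx/(I + \ideal{\tilde{f}_1})$, hence $\tilde{T}_1^{i} \tilde{\p{e}}$ is the coordinate vector of $x_1^i \bmod \tilde{f}_1$ in the chosen basis. Consequently, for any polynomial $g \in \fk[x_1]$, the equality $g(\tilde{T}_1)\tilde{\p{e}} = \p{0}$ is equivalent to $g(x_1) \equiv 0 \pmod{\tilde{f}_1}$, that is, to $\tilde{f}_1 \mid g$. The monic polynomial of least degree with this property is $\tilde{f}_1$ itself, which is by definition the minimal polynomial of the sequence $[\tilde{\p{e}}, \tilde{T}_1 \tilde{\p{e}}, \tilde{T}_1^2 \tilde{\p{e}}, \ldots]$.

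The argument has essentially no hard step: the only facts to verify are (i) the divisibility $\tilde{f}_1 \mid f_1$, which is immediate from the Wiedemann construction, and (ii) the shape-position description of $I + \ideal{\tilde{f}_1}$, which the authors already take for granted. It is worth emphasizing that the particular vector $\p{w}$ used to build \eqref{eq: det-seq} plays no direct role in the proof; it enters only through the consequence $\tilde{f}_1 \mid f_1$. In particular, the dimension of the new quotient equals $\deg(\tilde{f}_1) = d$, so the minimal polynomial produced by the argument has exactly the expected degree.
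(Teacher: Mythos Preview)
Your argument is correct and matches the paper's approach: both deduce from the shape-position structure of $I+\ideal{\tilde{f}_1}$ that $\tilde{f}_1$ is the minimal univariate relation in the quotient, the paper phrasing this via the FGLM ``first linear dependency'' criterion and you via the explicit isomorphism to $\fk[x_1]/\ideal{\tilde{f}_1}$. One minor correction in your preamble: being the minimal polynomial of the scalar sequence \eqref{eq: det-seq} does \emph{not} give $\tilde{f}_1(T_1)\p{e}=\p{0}$ (that would force $\tilde{f}_1=f_1$); the divisibility $\tilde{f}_1\mid f_1$ holds rather because the minimal polynomial of any scalar projection divides that of the vector sequence $[T_1^i\p{e}]$---but as you yourself note, this step is already granted by the paper and does not affect your main argument.
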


\begin{proof}
Suppose $\tilde{f}_1 = x_1^d + \sum_{k=0}^{d-1} a_k x_1^k$. Then according to {\sf FGLM} criteria, for the ideal \(I+\ideal{\tilde{f}_1}\),
$$\tilde{T}_1^d \p{e} = \sum_{k=0}^{d-1} a_k \tilde{T}_1^k \p{e}$$
is the first linear dependency of the vectors $\tilde{\p{e}}, \tilde{T}_1 \tilde{\p{e}}, \tilde{T}_1^2 \tilde{\p{e}}, \ldots$ when one checks the vector sequence
\([\tilde{\p{e}}, \tilde{T}_1 \tilde{\p{e}}, \tilde{T}_1^2 \tilde{\p{e}}, \ldots]\). That is to say, \(\tilde{f}_1\) is also the minimal polynomial of \([\tilde{\p{e}}, \tilde{T}_1 \tilde{\p{e}}, \tilde{T}_1^2 \tilde{\p{e}}, \ldots]\).
\end{proof}

\begin{proposition}\label{prop: det}
  Suppose $\p{w} \in \fk^{(D\times 1)}$ is such a vector that a proper factor $\tilde{f}_1$ of $f_1$ of degree $d<D$ is found from the linearly recurring sequence \eqref{eq: det-seq}. Then for each $i=2, \ldots, n$, the polynomial $x_i - \sum_{k=0}^{d-1} c_{i, k} x_1^k$, where $c_{i, 0}, c_{i,1}, \ldots, c_{i, d-1}$ is the unique solution of \eqref{eq: det-eqn}, is in the \grobner basis of $I+\ideal{\tilde{f_1}}$ w.r.t.\ LEX.
\end{proposition}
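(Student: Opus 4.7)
The plan is to identify the polynomial $x_i-\sum_{k=0}^{d-1}c_{i,k}x_1^k$ produced by the linear system with the polynomial $x_i-\tilde f_i$, where $\tilde f_i$ is the remainder of $f_i$ modulo $\tilde f_1$. As already noted in the paragraph preceding the statement, $I+\ideal{\tilde f_1}$ is again in shape position, and $[\tilde f_1,x_2-\tilde f_2,\ldots,x_n-\tilde f_n]$ is its LEX \grobner basis. So it suffices to show that the coefficient vector $(a_{i,0},\ldots,a_{i,d-1})$ of $\tilde f_i=\sum_{k=0}^{d-1}a_{i,k}x_1^k$ solves \eqref{eq: det-eqn}, since Theorem \ref{thm:hankel}(i) guarantees the $d\times d$ Hankel matrix of \eqref{eq: det-seq} has rank $d$ and therefore this solution is unique.

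First I would translate the minimality of $\tilde f_1$ for \eqref{eq: det-seq} into an orthogonality statement at the level of vectors. Writing $\tilde f_1=x_1^d+\sum_{k=0}^{d-1}\tilde f_{1,k}x_1^k$, the linear recurrence satisfied by the sequence $s_j=\inner{\p{w},T_1^j\p{e}}$ is $\sum_{k=0}^{d}\tilde f_{1,k}\,s_{k+j}=0$ for every $j\ge 0$, i.e.\ $\inner{\p{w},\tilde f_1(T_1)T_1^j\p{e}}=0$. Since this holds for all $j$, $\fk$-linearity extends it to
\[
\inner{\p{w},\tilde f_1(T_1)\,p(T_1)\,\p{e}}=0 \qquad \text{for every } p\in\fk[x_1].
\]

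Next I would use that $I$ is in shape position: the relation $x_i-f_i(x_1)\in I$ gives $T_i\p{e}=f_i(T_1)\p{e}$, and the Euclidean decomposition $f_i=\tilde f_i+\tilde f_1\cdot g_i$ then yields $T_i\p{e}=\tilde f_i(T_1)\p{e}+\tilde f_1(T_1)g_i(T_1)\p{e}$. Inserting this into the left-hand side of \eqref{eq: det-eqn} and using the commutativity of polynomials in $T_1$, I get
\[
\inner{\p{w},T_1^jT_i\p{e}}=\inner{\p{w},T_1^j\tilde f_i(T_1)\p{e}}+\inner{\p{w},\tilde f_1(T_1)\,T_1^jg_i(T_1)\,\p{e}}=\sum_{k=0}^{d-1}a_{i,k}\inner{\p{w},T_1^{k+j}\p{e}},
\]
for $j=0,\ldots,d-1$, the second summand vanishing by the orthogonality above. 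Thus $(a_{i,0},\ldots,a_{i,d-1})$ satisfies \eqref{eq: det-eqn}, and by uniqueness $c_{i,k}=a_{i,k}$, so $x_i-\sum_{k=0}^{d-1}c_{i,k}x_1^k=x_i-\tilde f_i$ lies in the LEX \grobner basis of $I+\ideal{\tilde f_1}$.

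I do not foresee a genuine obstacle: the only delicate point is the step from the scalar recurrence to the vector-level identity $\inner{\p{w},\tilde f_1(T_1)p(T_1)\p{e}}=0$, and this is just $\fk$-linearity applied to the defining recurrence of the minimal polynomial. Lemma \ref{lem:share} can be invoked as an alternative to phrase the argument intrinsically inside $\kx/(I+\ideal{\tilde f_1})$, but the direct route above avoids having to manipulate the $d\times d$ matrix $\tilde T_1$ and its link with $T_1$.
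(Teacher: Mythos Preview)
Your proof is correct and takes a genuinely more direct route than the paper's. The paper passes to the quotient ring $\kx/(I+\ideal{\tilde f_1})$, introduces its multiplication matrices $\tilde T_1,\ldots,\tilde T_n$, and manufactures an auxiliary vector $\tilde{\p{w}}\in\fk^{(d\times 1)}$ so that the two Hankel systems \eqref{eq: det-eqn} and \eqref{eq: det-eqn2} coincide entry by entry; Lemma~\ref{lem:share} is invoked to match the minimal polynomials of the two sequences. You instead stay entirely in the original $D$-dimensional space: the single observation $\inner{\p{w},\tilde f_1(T_1)p(T_1)\p{e}}=0$ for all $p\in\fk[x_1]$, combined with the Euclidean decomposition $f_i=\tilde f_i+\tilde f_1 g_i$ and $T_i\p{e}=f_i(T_1)\p{e}$, kills the unwanted piece in one line. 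This is shorter, avoids constructing $\tilde{\p{w}}$ or the $\tilde T_j$, and makes Lemma~\ref{lem:share} unnecessary for this proposition. The paper's approach, on the other hand, makes the intrinsic picture in $\kx/(I+\ideal{\tilde f_1})$ explicit, which may be conceptually helpful when explaining \emph{why} the reduced system computes the right coefficients; but for the bare verification, your argument is cleaner.
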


\begin{proof}
  Let $\tilde{T}_1, \ldots, \tilde{T}_d$ be the multiplication matrices of the ideal $I+\ideal{\tilde{f}_1}$ w.r.t.\ $<_1$.

For each $i=2, \ldots, n$, suppose $x_i - \sum_{k=0}^{d-1} \tilde{c}_{i, k} x_1^k$ is the corresponding polynomial in the \grobner basis of $I+\ideal{\tilde{f}_1}$ w.r.t.\ LEX. Then $\tilde{T}_i \tilde{\p{e}} = \sum_{k=0}^{d-1} \tilde{c}_{i,k} \tilde{T}_1^k \tilde{\p{e}}$ holds, and for any vector $\tilde{\p{w}} \in \fk^{(d\times 1)}$, we have
  \begin{equation*}
  \ideal{\tilde{\p{w}}, \tilde{T}_1^j \tilde{T}_i \tilde{\p{e}}} = \sum_{k=0}^{d-1}\tilde{c}_{i,k} \cdot \ideal{\tilde{\p{w}}, \tilde{T}_1^{k+j} \tilde{\p{e}}}, \quad  j=0, \ldots, d-1.
\end{equation*}
As long as $\tilde{\p{w}}$ is chosen such that the coefficient matrix is invertible, the coefficients $\tilde{c}_{i, 0}, \tilde{c}_{i,1}, \ldots, \tilde{c}_{i, d-1}$ will be the unique solution of the linear equation set
  \begin{equation}\label{eq: det-eqn2}
  \ideal{\tilde{\p{w}}, \tilde{T}_1^j \tilde{T}_i \tilde{\p{e}}} = \sum_{k=0}^{d-1}y_k \cdot \ideal{\tilde{\p{w}}, \tilde{T}_1^{k+j} \tilde{\p{e}}}, \quad  j=0, \ldots, d-1.
\end{equation}

Therefore, to prove the correctness of the proposition, it suffices to show that there exists $\tilde{\p{w}} \in \fk^{(d\times 1)}$ such that the coefficient matrix of \eqref{eq: det-eqn2} is invertible, and that the two linear equation sets \eqref{eq: det-eqn} and \eqref{eq: det-eqn2} share the same solution. In particular, we will prove \eqref{eq: det-eqn} and \eqref{eq: det-eqn2} are the same themselves for some $\tilde{\p{w}}$.

To prove that, we need to show the two Hankel matrices and the vectors in the left hands of \eqref{eq: det-eqn} and \eqref{eq: det-eqn2} are the same. That is, for some vector $\tilde{\p{w}}$
\begin{enumerate}
  \item[(i)] $\inner{\p{w}, T_1^j \p{e}} = \inner{\tilde{\p{w}}, \tilde{T}_1^j \tilde{\p{e}}}$,  for $j=0, \ldots, 2d-2$; \label{item: det-1}
  \item[(ii)] $\inner{\p{w}, T_1^j T_i \p{e}} = \inner{\tilde{\p{w}}, \tilde{T}_1^j \tilde{T}_i \tilde{\p{e}}}$, for $j=0, \ldots, d-1$. \label{item:det-2}
\end{enumerate}
Next we will prove these two arguments respectively.

(i) We take the first $d$ equations in (i)
\begin{equation*}
  \inner{\p{w}, T_1^j \p{e}} = \inner{\tilde{\p{w}}, \tilde{T}_1^j \tilde{\p{e}}} ,\quad  j=0, \ldots, d-1.
\end{equation*}
As the vectors $\tilde{\p{e}}, \tilde{T}_1 \tilde{\p{e}}, \ldots, \tilde{T}_1^{d-1} \tilde{\p{e}}$ are linearly independent, the above linear equation set has a unique solution
$\overline{\p{w}}$  for the unknown $\tilde{\p{w}}$. From Lemma~\ref{lem:share}, the vector sequence $[\tilde{\p{e}}, \tilde{T}_1 \tilde{\p{e}}, \tilde{T}_1^2 \tilde{\p{e}}, \ldots]$ and the sequence \eqref{eq: det-seq} share the same minimal polynomial $\tilde{f}_1$ of degree $d$. Thus there exist $a_0, \ldots, a_{d-1} \in \fk$ such that
\begin{equation*}
    \tilde{T}_1^d \tilde{\p{e}}= \sum_{k=0}^{d-1} a_k \tilde{T}_1^k \tilde{\p{e}}, \quad \inner{\p{w}, T_1^d \p{e}} = \sum_{k=0}^{d-1} a_k \inner{\p{w}, T_1^k \p{e}}.
\end{equation*}
Hence
\begin{equation*}
    \inner{\overline{\p{w}}, \tilde{T}_1^d\tilde{\p{e}}} =  \inner{\overline{\p{w}}, \sum_{k=0}^{d-1} a_k \tilde{T}_1^k \tilde{\p{e}}}
    = \sum_{k=0}^{d-1}a_k \inner{\overline{\p{w}}, \tilde{T}_1^k \tilde{\p{e}}} = \sum_{k=0}^{d-1} a_k \inner{\p{w}, T_1^k \p{e}}
    = \inner{\p{w}, T_1^d \p{e}}.
\end{equation*}
Other equalities in (i) for $j=d+1, \ldots, 2d-2$ can also be proved similarly. Actually, the equality $\inner{\p{w}, T_1^j \p{e}} = \inner{\overline{\p{w}}_0, \tilde{T}_1^j \tilde{\p{e}}}$ holds for any $j = 0, 1, \ldots$.

(ii) Since there is a polynomial $x_i - \sum_{k=0}^{D-1} a'_k x_1^k$ in the \grobner basis of $I$ w.r.t.\ LEX, where $a'_0, \ldots, a'_{D-1} \in \fk$, we know $T_i \p{e} = \sum_{k=0}^{D-1} a'_k T_1^k \p{e}$. Then on one hand, for the vector $\p{w}$ and any $i=0, \ldots, d-1$, we have
$$\inner{\p{w}, T_1^j T_i \p{e} }= \sum_{k=0}^{D-1} a'_k \inner{\p{w}, T_1^{k+j} \p{e}}.$$
On the other hand, as $x_i - \sum_{k=0}^{D-1} a'_k x_1^k \in I$, we have $x_i - \sum_{k=0}^{D-1} a'_k x_1^k \in I+\ideal{\tilde{f}_1}$, and thus $\tilde{T}_i \tilde{\p{e}} = \sum_{k=0}^{D-1} a'_k \tilde{T}_1^k \tilde{\p{e}}$. Therefore for the vector $\overline{\p{w}}$ and any $j=0, \ldots, d-1$,
$$\inner{\overline{\p{w}}, \tilde{T}_1^j \tilde{T}_i \tilde{\p{e}}} = \sum_{k=0}^{D-1} a'_k \inner{\overline{\p{w}}, \tilde{T}_1^{k+j}  \tilde{\p{e}}} =\sum_{k=0}^{D-1} a'_k \inner{\p{w}, T_1^{k+j}\p{e}} = \inner{\p{w}, T_1^j T_i \p{e} }. $$
This ends the proof.
\end{proof}


Now let us return to the special case of the deterministic Wiedemann algorithm, where unit vectors are used to find $f_1= f_{1,1} \cdots f_{1,r}$ with $r \leq D$ and $\deg(f_{1, i}) = d_i$. Suppose $\deg(f_1) = D$ so the ideal $I$ is verified in shape position. In the $i$th step of the algorithm, the unit vector $\p{e}_i$ is applied to construct the linearly recurring sequence
$$[\inner{\p{e}_i, T_1^j \p{b}_{i-1}}: j = 0, \ldots, 2(D-\prod_{k=1}^{i-1} d_k)-1],$$
where $\p{b}_{i-1} = \prod_{k=1}^{i-1}f_{1, k}(T_1) \p{e}$. With this sequence the factor $f_{1,i}$ is computed. As the above sequence is the same as
$$[\inner{(\prod_{k=1}^{i-1}f_{1, k}(T_1))^t\p{e}_i, T_1^j \p{e}}: j = 0, \ldots, 2(D-\prod_{k=1}^{i-1} d_k)-1],$$
from Proposition \ref{prop: det} we can recover efficiently the \grobner basis of $I+\ideal{f_{1,i}}$ w.r.t.\ LEX by constructing and solving linear equation sets with Hankel coefficient matrices.

So we have at hands the factorization $f_1= f_{1,1} \cdots f_{1,r}$,  together with the \grobner basis of $I+\ideal{f_{1, i}}$ w.r.t.\ LEX for $i=1, \ldots, r$. Suppose the \grobner basis for $i$ is
\begin{equation}\label{eq: pi}
P_i = [f_{1, i}, x_2 - f_{2, i}, \ldots, x_n - f_{n, i}].
\end{equation}
Then to recover the polynomials $f_j$ in \eqref{eq: shape} for $j=2, \ldots, n$, we have the following modulo equation set constructed from $P_1, \ldots, P_r$:
\begin{equation}\label{eq: crt}
   \left\{\begin{split}
        f_j &\equiv f_{j,1} \mod f_{1,1}\\
        &\cdots \\
        f_j &\equiv f_{j,r} \mod f_{1,r}
   \end{split}
   \right..
\end{equation}
Now it is natural to give a try of the Chinese Remainder Theorem (short as {\sf CRT} hereafter).

%

To use the {\sf CRT}, we have to check first whether $f_{1, 1}, \ldots, f_{1,r}$ are pairwise coprime. One simple case is when $f_1$ is squarefree, or in other words the input ideal $I$ is radical itself. In that case, the direct application of {\sf CRT} will lead to the \grobner basis $G$ of $I$ w.r.t.\ LEX, and the change of ordering ends.

When the polynomial $f_1$ is not squarefree, the {\sf CRT} does not apply directly. In this case, the \grobner basis of $\sqrt{I}$ w.r.t.\ LEX is our aim. Before the study on how to recover this \grobner basis, we first make clear how a polynomial set of form \eqref{eq: shape} can be split to a series of polynomial sets with a certain zero relation according to some factorization of $f_1$. The following proposition is a direct result of \cite[Proposition 5(i)]{L1992S}, and it is actually a splitting technique commonly used in the theory of  triangular sets \cite{W2001E}. In what follows, $\zero(F)$ denotes the common zeros of a polynomial set $F\in \kx$ in $\overline{\fk}^n$, where $\overline{\fk}$ is the algebraic closure of $\fk$.

\begin{proposition}\label{prop: triangular}
  Let $T\subset \kx$ be a polynomial set in the form
  $$[t_1(x_1), x_2 - t_2(x_1), \ldots, x_n-t_n(x_1)], $$
  and $t_1=t_{1,1} \cdots t_{1,r}$. For $i=1, \ldots, r$, define
\begin{equation*}
T(i) = [t_{1,i}, x_2-t_{2,i}, \ldots, x_n - t_{n,i}],
\end{equation*}
where $t_{j, i}$ is the remainder of $t_j$ modulo $t_{1,i}$ for $j=2, \ldots, n$. Then we have the following zero relation
\begin{equation}\label{et:ideal-relation}
\zero(T) = \bigcup_{i=1}^r \zero(T(i)).
\end{equation}
\end{proposition}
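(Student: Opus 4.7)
The plan is to establish the two set inclusions separately, both of which hinge on a single elementary observation: if $t_{1,i}(a_1) = 0$, then for every $j \geq 2$ the value $t_j(a_1)$ equals $t_{j,i}(a_1)$. Indeed, by definition of $t_{j,i}$ as the remainder of $t_j$ modulo $t_{1,i}$, there exists $q_j \in \fk[x_1]$ with $t_j = q_j \cdot t_{1,i} + t_{j,i}$, so evaluating at any root of $t_{1,i}$ kills the first summand.

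For the inclusion $\supseteq$, I would fix an index $i \in \{1, \ldots, r\}$ and take an arbitrary point $\p{a} = (a_1, \ldots, a_n) \in \zero(T(i))$. The first defining equation $t_{1,i}(a_1) = 0$, combined with the factorization $t_1 = t_{1,1} \cdots t_{1,r}$, immediately gives $t_1(a_1) = 0$. Then from $a_j = t_{j,i}(a_1)$ and the observation above I obtain $a_j = t_j(a_1)$ for $j = 2, \ldots, n$, so $\p{a} \in \zero(T)$.

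For the reverse inclusion $\subseteq$, start from $\p{a} = (a_1, \ldots, a_n) \in \zero(T)$. The relation $t_1(a_1) = 0$ together with the factorization forces some factor $t_{1,i}$ to vanish at $a_1$; pick any such $i$. The observation again yields $t_{j,i}(a_1) = t_j(a_1) = a_j$ for $j = 2, \ldots, n$, so $\p{a} \in \zero(T(i)) \subseteq \bigcup_{k=1}^r \zero(T(k))$.

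There is no real obstacle here: the whole proposition reduces to the fact that polynomial remainders preserve evaluation at roots of the divisor, together with the splitting of $t_1$'s roots across its factors. The statement is explicitly attributed to \cite[Proposition 5(i)]{L1992S}, so in the final write-up it would suffice either to cite that reference directly or to include the two-line argument above; no induction on $n$ or additional machinery from the theory of triangular sets is needed, because the variables $x_2, \ldots, x_n$ appear only linearly in $T$ and $T(i)$.
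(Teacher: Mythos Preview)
Your argument is correct. The paper itself does not give a proof of this proposition at all: it simply states that the result is ``a direct result of \cite[Proposition 5(i)]{L1992S}'' and moves on. Your write-up already acknowledges this citation and then supplies the explicit two-inclusion verification, which is exactly the elementary content behind the cited reference in this special shape-position form; so there is nothing to compare, and your proposal is strictly more detailed than what the paper provides.
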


Let $\overline{f}_1$ be the squarefree part of $f_1$. As each $P_i$ in \eqref{eq: pi} satisfies the form in Proposition \ref{prop: triangular}, we can compute $t$ new polynomial sets $\overline{P}_j$ whose univariate polynomials in $x_1$ is $\overline{f}_{1,j}$ for $j=1, \ldots, t$, such that $\overline{f}_1 = \prod_{j=1}^t \overline{f}_{1,j}$, and $\overline{f}_{1, j}$ are pairwise coprime. These new polynomial sets can be found in the following way. Set $p = \overline{f}_1$. We start with $j=1$ and computes $\overline{f}_{1,j} =\gcd(f_{1,j}, p)$. As long as this polynomial is not equal to $1$, a new polynomial set $\overline{P}_j$ whose univariate polynomial is $\overline{f}_{1,j}$ is constructed from $P_j$ by Proposition \ref{prop: triangular}. Next set $p := p/ \overline{f}_{1,j}$ and check whether $p= 1$. If so, we know we already have enough new polynomial sets; otherwise $j := j+1$, and the process above is repeated.

Now we reduce the current case to the earlier one with $\overline{f}_1$ squarefree and $\overline{P}_1, \ldots, \overline{P}_t$ to construct the modulo equation sets. Thus the \grobner basis of $\sqrt{I}$ w.r.t.\ LEX can be obtained similarly (note that extracting the squarefree part of $f_1$ results in the radical of $I$).

The whole method based on the deterministic Wiedemann algorithm is summarized in Algorithm \ref{alg:deterministic} below. The subfunction $\sqrfree()$ returns the squarefree part of the input polynomial. The operator ``$\cat$" means concatenating two sequences.

\begin{algorithm}[!h]\label{alg:deterministic}
     \KwIn{$G_1$, \grobner basis of a 0-dimensional ideal $I \subset \kx$ w.r.t.\ $<_1$}
     \KwOut{$G_2$, \grobner basis of $\sqrt{I}$ w.r.t.\ LEX if $I$ is in shape position;
     {\sf Fail}, otherwise.}
     \BlankLine

     Compute the canonical basis of $\kx/\ideal{G_1}$ and multiplication matrices $T_1, \ldots, T_n$\;
     $\p{e}_1 = (1, 0,\ldots, 0)^t, \p{e}_2 = (0,1, 0, \ldots, 0)^t, \ldots, \p{e}_D = (0, \ldots, 0, 1)^t \in \fk^{(D \times 1)}$\;
     $k := 1$; ~$F := [\,]$; ~$f ;= 1$; ~$d := 0$; ~$\p{b} = \p{e}_1$; ~$S =[\,]$\;
     \While{$\p{b} \neq \p{0}$}
        {
            $s := [\inner{\p{e}_k, T_1^i \p{b}}: i = 0, 1, \ldots, 2(n-d)-1]$\;
            $g := \BM(s)$\;
            $f := f \cdot g$; ~$d := \deg(f)$; ~$F := F \cat~[g]$; ~$\p{b} := g(T_1)\p{b}$; ~$S := S \cat~[s]$\;
            $k := k+1$\;
        }

     (Suppose $F = [f_{1,1}, \ldots, f_{1, r}]$) $f_1 := \prod_{i=1}^r f_{1,i}$\;

     \eIf{$\deg(f_1) \neq D$}
     {\Return{{\sf Fail}}}
     {
     \For{$i=1, \ldots, r$}
     {
        $d_i := \deg(f_{1, i})$\;
        \For{$j=2, \ldots, n$}
        {
            Construct the Hankel matrix $H_j$ and the vector $\p{b}$ from $S$\;
            Compute $\p{c} = (c_1, \ldots, c_{d_i})^t := H_j^{-1} \p{b}$;  ~$f_{i, j} := \sum_{k=0}^{d_i}\p{c}_{k+1}x_1^k$\;
        }
     }

     $\overline{f}_1 := \sqrfree(f_1)$\;
     \If{$\overline{f}_1 \neq f_1$}
     {
        Compute $\{[\overline{f}_{1,j}, x_2 - \overline{f}_{2, j}, \ldots, x_n-\overline{f}_{n, j}]: j = 1, \ldots, t\}$ from $\{[f_{1,i}, x_2 - f_{2, i}, \ldots, x_n-f_{n, i}]: i = 1, \ldots, r\}$ by Proposition \ref{prop: triangular} such that $\overline{f}_1 = \prod_{j=1}^t \overline{f}_{1,j}$ and $\overline{f}_{1, j}$ are pairwise coprime\;
     }\label{line: splitting}

     \For{$j=2, \ldots, n$}
     {
        Solve the modulo equation set \eqref{eq: crt} to get $f_j$\;
     }

     \Return{$[\overline{f}_1, x_2-f_2, \ldots, x_n-f_n]$}
     }

\caption{Shape position (deterministic) $G_2 := \shaped(G_1, <_1)$}
\end{algorithm}

\begin{remarks}\label{rm: deterministic}
  If the factors $f_{1, 1}, \ldots, f_{1,r}$ of $f_1$ returned by the deterministic Wiedemann algorithm are pairwise coprime (which needs extra computation to test), the \grobner basis of $I$ w.r.t.\ LEX can be computed from the {\sf CRT}.

  The method of the deterministic version described above is also applicable to the Wiedemann algorithm with several random vectors. To be precise, when the first random vector does not return the correct polynomial $f_1$, one may perform a similar procedure as the deterministic Wiedemann algorithm by updating the sequence with a newly chosen random vector (instead of $\p{e}_i$ in the basis) and repeating \cite{W86S}. In that case, the method above with {\sf CRT} can also be used to compute the \grobner basis of $\sqrt{I}$ w.r.t.\ LEX.
\end{remarks}

\subsubsection{Complexity}
Next the computational complexity, namely the number of field operations needed, for the deterministic method for ideals in shape position is analyzed.

(1) In total the deterministic Wiedemann algorithm needs
$$O(D(N_1 + D\log(D)\log\log(D)))$$
 operations if fast polynomial multiplications are used \cite{W86S}. Here $N_1$ still denotes the number of nonzero entries in $T_1$.

(2) Next at most $D$ structured linear equation sets with Hankel coefficient matrices are constructed and solved, each with maximum operations $O(D\log(D))$. Hence this procedure needs $O(D^2 \log(D))$ operations at most. 

(3) The squarefree part $\overline{f}_1$ of $f_1$ can be obtained with complexity at most $O(D^2\log(D))$ for the case when $\field{K}$ has characteristic $0$ and $O(D^2 \log(D) + D \log(q/p))$ for characteristic $p>0$ respectively, where $|\field{K}| = q$ \cite[Theorem 14.20 and Exercise 14.30]{VG2003M}. For the case when $f_1$ is not squarefree, suppose $t$ new polynomial sets $\overline{P_{1}}, \ldots, \overline{P_{t}}$ are needed, and $\deg(\overline{f}_{1,i}) =d_i$ for $i=1, \ldots, t$. To compute each set $\overline{P}_i$ of the form \eqref{eq: shape}, $n-1$ polynomial divisions are needed to find the remainders, with complexity $O(nd_iD)$. Hence the total complexity to obtain $\overline{P}_1, \ldots, \overline{P}_t$ is
$$O(\sum_{i=1}^t nd_iD) = O(nD \sum_{i=1}^t d_i) \leq O(nD^2),$$
for we have $\sum_{i=1}^t d_i =\deg(\overline{f}_1) < D$.

(4) Solving the modulo equation set \eqref{eq: crt} for each $j=2, \ldots, n$ requires $O(D^2)$ operations at most by \cite[Theorem 5.7]{VG2003M}. Thus in total $O(nD^2)$ operations are needed for the {\sf CRT} application.

Therefore, we have the following complexity result for the method with the deterministic Wiedemann algorithm.
\begin{theorem}
  Assume that $T_1$ is known. If the input ideal $I$ is in shape position, then this deterministic method will return the \grobner basis of $\sqrt{I}$ w.r.t.\ LEX with the complexity
  $$O(D(N_1 + D (\log(D) \log\log(D) + n))).$$
\end{theorem}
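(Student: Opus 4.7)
The plan is to combine the four cost contributions identified immediately before the statement into a single asymptotic bound. I would first note that parts (1)--(4) account for every operation performed by the algorithm: the deterministic Wiedemann computation of $f_1$, the resolution of at most $D$ Hankel systems of total size bounded by $D$ to recover the polynomials $f_{j,i}$, the squarefree decomposition of $f_1$ together with the splitting via Proposition \ref{prop: triangular}, and finally the $n-1$ applications of {\sf CRT}. Since no other operations occur in Algorithm \ref{alg:deterministic}, summing the four bounds yields an upper bound on the overall complexity.

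Next I would carry out the sum. Item (1) contributes $O(D N_1 + D^2 \log(D)\log\log(D))$. Item (2) gives $O(D^2 \log(D))$, which is dominated by (1) since $\log(D) \leq \log(D)\log\log(D)$ for $D$ large enough. Item (3) costs $O(D^2 \log(D))$ for the squarefree part plus $O(nD^2)$ for the triangular splittings, the second of which uses the bound $\sum_{i} d_i \leq D$. Item (4) adds $O(nD^2)$ from the $n-1$ {\sf CRT} reconstructions. Adding everything and dropping terms that are absorbed by larger ones gives
\begin{equation*}
O\bigl(D N_1 + D^2 \log(D)\log\log(D) + n D^2\bigr) = O\bigl(D(N_1 + D(\log(D)\log\log(D) + n))\bigr),
\end{equation*}
which is exactly the claimed bound.

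The only subtle point is the positive characteristic case in item (3), where the squarefree part costs $O(D^2 \log(D) + D \log(q/p))$ with $|\fk|=q$. I would argue that the extra $D \log(q/p)$ term is harmless under the standing assumption (implicit throughout the complexity analyses of this paper) that field operations are counted at unit cost and $\log(q/p)$ is treated as a constant, or otherwise incorporate it into the $O(D^2 \log(D))$ term since $\log(q/p)$ does not grow with $D$. This is the main place where one has to be careful, but it does not affect the final asymptotic expression. With that observation, the four cost contributions collapse to the stated complexity and the theorem follows.
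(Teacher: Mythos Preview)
Your proposal is correct and follows essentially the same approach as the paper: the theorem is stated immediately after the itemized cost breakdown (1)--(4), and the paper's proof consists precisely of that breakdown together with the implicit summation you have made explicit. Your handling of the dominated terms and of the positive-characteristic contribution in (3) matches the paper's intent.
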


%

\subsubsection{Example}
Here is a  toy example to illustrate how the deterministic method works. Consider an ideal $I$ in $\ff_2[x_1, x_2]$ generated by its \grobner basis w.r.t.\ DRL
$$G_1 :=[ x_2x_1^3+x_1^3 +x_1+1, x_1^4+x_1^3+x_2+1, x_1^2+x_2^2].$$
Its \grobner basis w.r.t.\ LEX is
$$G_2 = [ f_1 := (x_1+1)^3 (x_1^2+x_1+1)^2, x_2 + x_1^4 + x_1^3+1],$$
from which one can see that $I$ is in shape position.

From $G_1$ the canonical basis $B = [1, x_1, x_2, x_1^2, x_1x_2, x_1^3, x_1^2x_2]$ and the multiplication matrices $T_1$ and $T_2$ are first computed. With a vector $\p{r} = (1, 1, 0, 1, 0, 1, 0)^t \in \ff_2^{(7\times 1)}$ generated at random, the classical Wiedemann algorithm will only return a proper factor $(x_1+1)(x_1^2+x_1+1)$ of $f_1$, and whether $I$ is in shape position is unknown.

Next we use the deterministic Wiedemann algorithm to recover $f_1$. With $\p{e}_1 = (1, 0,\ldots, 0)^t$, a factor $f_{1,1} = (x_1+1)^2(x_1^2+x_1+1)$ of $f_1$ is found with the Berlekamp--Massey applied to the sequence \eqref{eq: deterministic-1}. Then we update the vector
$$\p{b} = f_{1,1}(T_1)\p{e} = (0,1,1,0,0,0,0)^t,$$
and execute the second round with $ \p{e}_2 = (0,1, 0, \ldots, 0)^t$, obtaining another factor $f_{1,2} = (x_1+1)(x_1^2+x_1+1)$. This time the updated vector $\p{b} = \p{0}$, thus the deterministic Wiedemann algorithm ends, and $f_1$ is computed as $f_{1,1}f_{1,2}$. As $\deg(f_{1,1}f_{1,2}) = D$, now $I$ is verified to be in shape position.

Then we construct the linear equation sets similar to \eqref{eqn:Hankel} to recover $f_{2,1}$ and $f_{2,2}$ respectively. The first one, for example, is
$$
  \left(\begin{tabular}{cccc}
1 & 0 & 0 &0 \\
0 &0 &0&1\\
0&0&1&1 \\
0&1&1&1
\end{tabular}
\right)
\cdot
\left(\begin{tabular}{c}
$c_0$\\$c_1$\\$c_2$\\$c_3$
\end{tabular}
\right)
=
\left(\begin{tabular}{c}
0\\0\\0\\1
\end{tabular}
\right).
$$
After solving them,  we have the \grobner bases of $I+\ideal{f_{1,1}}$ and $I+\ideal{f_{1,2}}$ respectively as
\begin{equation*}
  \begin{split}
    P_1 &= [(x_1+1)^2(x_1^2+x_1+1), x_2+x_1], \\
    P_2 &= [(x_1+1)(x_1^2+x_1+1), x_2+x_1].
  \end{split}
\end{equation*}

Then the squarefree part $\overline{f}_1$ of $f_1$ is computed, and we find that $I$ is not radical, and thus only the \grobner basis $\tilde{G}_2$ of $\sqrt{I}$ w.r.t.\ LEX may be computed. From $f_{1,2} = \overline{f}_1$, we directly have $\tilde{G}_2 = P_2$, and the algorithm ends.

The way to compute $\tilde{G}_2$ by {\sf CRT}, which is more general, is also shown in the following. Two new polynomial sets
\begin{equation*}
    \overline{P}_1 = [x_1+1, x_2+1], \quad \overline{P}_2 = [x_1^2+x_1+1, x_2+x_1]
\end{equation*}
are first computed and selected according to $\overline{f}_1$ by Proposition \ref{prop: triangular}. Then the modulo equation set
\begin{equation*}
\left\{
  \begin{split}
    f_2 &\equiv x_2+1 ~\,\mod x_1+1, \\
    f_2 &\equiv x_2+x_1 \mod x_1^2+x_1+1
  \end{split}
  \right.
\end{equation*}
as \eqref{eq: crt} is solved with {\sf CRT}, resulting in the same $\tilde{G}_2$. One can check that $\tilde{G}_2$ is the \grobner basis of $\sqrt{I}$ w.r.t.\ LEX with any computer algebra system.

\subsection{Incremental algorithm to compute the univariate polynomial}
For a 0-dimensional ideal $I\subset \kx$, the univariate polynomial in its \grobner basis w.r.t.\ LEX is of special importance. For instance, it may be the only polynomial needed to solve some practical problems. Furthermore, in the case when $\fk$ is a finite field, after the univariate polynomial is obtained, it will not be hard to compute all its roots, for one can simplify the original polynomial system by substituting the roots back, and sometimes the new system will become quite easy to solve.

Besides the two methods in the previous parts, next the well-known incremental Wiedemann algorithm dedicated to computation of the univariate polynomial is briefly recalled and discussed.

In the Wiedemann algorithm, the dominant part of its complexity comes from construction of the linearly recurring sequence ($O(DN_1)$), while the complexity of the Berlekamp--Massey algorithm is relatively low ($O(D\log(D))$). Hence the idea of the incremental method is to construct the sequence incrementally to save computation and apply the Berlekamp--Massey algorithm to each incremental step.

We start with the linearly recurring sequence $[\inner{\p{r}, T_1^i \p{e}}: i = 0, 1]$ and compute its minimal polynomial with the Berlekamp--Massey algorithm. Next we proceed step by step with the sequence
$$[\inner{\p{r}, T_1^i \p{e}}: i = 0, \ldots, 2k-1]$$
until the returned polynomial coincides with the one in the previous step. Then this minimal polynomial equals the univariate polynomial $f$ we want to compute with a large probability.

Suppose $\deg(f)=d$. Then the number of steps the method takes is bounded by $d+1$. In other words, the method stops at most after the sequence $[\inner{\p{r}, T_1^i \p{e}}: i = 0, \ldots, 2d+1]$ is handled. The number of field operations to construct the sequences is $O(dN_1)$, while the total complexity to compute the minimal polynomials with the Berlekamp--Massey algorithm is $O(\sum_{k=1}^{d+1}k^2 )= O(d^3)$ (note that in the incremental case, the fast Berlekamp--Massey with complexity $O(k\log(k))$ is not applicable). Therefore the overall complexity for the incremental Wiedemann method to compute the univariate polynomial is $O(dN_1+d^3)$. As can be seen here from this complexity, this incremental method is sensitive to the output polynomial $f$. When the degree $d$ is relatively small compared with $D$, this method will be useful.

\section{General case: {\sf BMS}-based algorithm}\label{sec:general}

In the general case when the ideal $I$ may not be in shape position, perhaps those methods described in Section \ref{sec:shape} will not be applicable. However, we still want to follow the idea of constructing linearly recurring sequences and computing their minimal polynomials with the Berlekamp--Massey algorithm. The way to do so is to generalize the linearly recurring sequence to a multi-dimensional linearly recurring relation and apply the {\sf BMS} algorithm to find its minimal generating set.

\subsection{Algorithm description}\label{subsec: general-1}
We first define a $n$-dimensional mapping $E: \znum_{\geq 0}^n \longrightarrow \fk$ as
\begin{equation}\label{eq: mapping}
  (s_1, \ldots, s_n) \longmapsto \ideal{\p{r}, T_1^{s_1}\cdots T_n^{s_n} \p{e}},
\end{equation}
where $\p{r}\in \fk^{(D\times 1)}$ is a random vector. One can easily see that such a mapping is a $n$-dimensional generalization of the linearly recurring sequence constructed in the Wiedemann algorithm.

Note that $T_1^{s_1}\cdots T_n^{s_n} \p{e}$ in the definition of $E$ above is the coordinate vector of $(s_1, \ldots, s_n)$ in the {\sf FGLM} algorithm. As a polynomial $f$ in the \grobner basis of $I$ is of form \eqref{eq: fglm-gb}, and the linear dependency \eqref{eq: fglm-linear} holds, one can verify that $f$ satisfies \eqref{eqn:charPoly} and thus is a polynomial in $I(E)$. The {\sf BMS} algorithm is precisely the one to compute the \grobner basis of $I(E)$ w.r.t.\ to a term ordering, so one may first construct the mapping $E$ via $T_1, \ldots, T_n$, and attempts to compute the \grobner basis of $I$ from the {\sf BMS} algorithm applied to $I(E)$.

We remark that $f$ is in $I(E)$ for any vector $\p{r}$. In fact, the idea above is a multi-dimensional generalization of the Wiedemann algorithm. The minimal polynomial $g$ of the Krylov sequence $[\p{b}, A\p{b}, A^2 \p{b}, \ldots]$ is what the Wiedemann algorithm seeks, for $g$ directly leads to a solution of the linear equation $A\p{x} = \p{b}$ for a nonsingular matrix $A$ and vector $\p{b}$. Then a random vector is chosen to convert the sequence to a scalar one
$$[\inner{\p{r}, \p{b}}, \inner{\p{r}, A\p{b}}, \inner{\p{r}, A^2 \p{b}}, \ldots],$$
and the Berlekmap--Massey algorithm is applied to find the minimal polynomial of this new sequence, in the hope that $g$ can be obtained. While the method proposed here converts the mapping from $(s_1, \ldots, s_n)$ to its coordinate vector in the {\sf FGLM} to a $n$-dimensional scalar mapping with a random vector, and then the {\sf BMS} algorithm (generalization of Berlekamp--Massey) is applied to find the minimal polynomial set, which is also the \grobner basis, w.r.t.\ to a term ordering.

This method for computing the \grobner basis of $I$ makes full use of the sparsity of $T_1, \ldots, T_n$, in the same way as how the Wiedemann algorithm takes advantage of the sparsity of $A$. The method is a probabilistic one, also the same as the Wiedemann algorithm. This is reasonable for the ideal $I(E)$ derived from the $n$-dimensional mapping may lose information of $I$ because of the random vector, with $I \subset I(E)$. Clearly, when $I$ is maximal (corresponding to the case when $g$ in the Wiedemann algorithm is irreducible), $I(E)$ will be equal to $I$. Furthermore, as polynomials in the \grobner basis are characterized by the linear dependency in \eqref{eq: fglm-linear}, we are always able to check whether the \grobner basis of $I(E)$ returned by the {\sf BMS} algorithm is that of $I$.

\begin{remarks}
  When the term ordering in the {\sf BMS} algorithm is LEX, computation of the univariate polynomial in this method is exactly the same as that described in Section \ref{subsec: shape-normal}. This is true because for the LEX ordering ($x_1 < \cdots < x_n$), the terms are ordered as $$[1, x_1, x_1^2, \ldots, x_2, x_1x_2, x_1^2x_2, \ldots ],$$
hence the first part of $E$ is $E((p_1, 0, \ldots, 0))=\ideal{\p{r}, T_1^{p_1} \p{e}}$,
and the {\sf BMS} algorithm degenerates to the Berlekamp--Massey one.
\end{remarks}

Another fact we would like to mention is that the {\sf BMS} algorithm from Coding Theory is mainly designed for graded term orderings like DRL, for such orderings are \emph{Archimedean} and have good properties to use in algebraic decoding \cite{CLO1998U}. But it also works for other orderings, though extra techniques not contained in the original literature have to be introduced for orderings dependent on LEX (like LEX itself or block orderings which break ties with LEX).

Take the term ordering LEX for instance, an extra polynomial reduction is performed after every $\sakata()$ step to control the size of intermediate polynomials. This is actually not a problem for orderings like DRL, for in that case the leading term of a polynomial will give a bound on the size of terms in that polynomial. We also have to add an extra termination check for each variable $x_i$, otherwise the {\sf BMS} algorithm will endlessly follow a certain part of the terms. For example, all variables in the sequence $[1, x_1, x_1^2, \ldots]$ are smaller than $x_2$, and the original {\sf BMS} does not stop handling that infinite sequence by itself.

With all the discussions, the algorithm is formulated as follows. The ``Termination Criteria" here in this description mean that $F$ does not change for a certain number of iterations. The subfunction $\reduce(F)$ performs reduction on $F$ so that every polynomial $f \in F$ is reduced w.r.t.\ $F\setminus \{f\}$, and  $\IsGB(F)$ returns \textsf{true} if $F$ is the \grobner basis of $I$ w.r.t.\ LEX and \textsf{false} otherwise.

\begin{algorithm}[h]\label{alg: bms}
    \KwIn{$G_1$, \grobner basis of a 0-dimensional ideal
    $I \subset \kx$ w.r.t.\ $<_1$}
    \KwOut{\grobner basis of $I$ w.r.t.\ $<_2$; or {\sf Fail}, if the {\sf BMS} algorithm fails returning the correct \grobner basis}

\BlankLine

Compute the canonical basis of $\kx/\ideal{G_1}$ and multiplication matrices $T_1, \ldots, T_n$\;

Choose $\p{r}\in \fk^{(D\times 1)}$ at random\;

$\p{u}:=\p{0}$; ~$F := [1]$; ~$G:= [~]$; ~$E:=[~]$\;

 \Repeat{Termination Criteria}
{\label{line:sakata-begin}
    $e := \ideal{\p{r}, T_1^{\p{u}_1}\cdots T_n^{\p{u}_n} \p{e}}$\label{line: mapping}\;

    $E := E ~\cat~ [e]$\;
    $F, G := \sakata(F, G, \p{u}, E)$\;
    $\p{u} := \Next(\p{u})$ w.r.t.\ $<_2$\;
    $F := \reduce(F)$\;
}\label{line:sakata-end}

\eIf{$\IsGB(F)$}{\Return{F}}{\Return{{\sf Fail}}}
\caption{General case $G_2 := \BMSbased(G_1, <_1)$}\label{alg: bmsbased}
\end{algorithm}


The correctness of Algorithm \ref{alg: bms} is obvious. Next we prove its termination. Once the loop ends, the algorithm almost finishes. Hence we shall prove the termination of this loop. Clearly when the polynomial set $F$ the {\sf BMS} algorithm maintains turns to the \grobner basis of $I(E)$ w.r.t.\ $<_2$, the current termination criterion, namely $F$ keeps unchanged for a certain number of passes, will be satisfied. And a sufficient condition for $F$ being the \grobner basis is given as Theorem \ref{thm: termination} below.

\subsection{Complexity}

Part of earlier computation of values of $E$ can be recorded to simplify the computation at Line \ref{line: mapping}. Suppose the value of $E$ at a certain term $(u_1, u_2, \ldots, u_{i-1}, u_i-1, u_{i+1}, \ldots, u_n)$
$$\tilde{\p{e}} = T_1^{\p{u}_1}\cdots T_i^{\p{u}_i-1} \cdots T_n^{\p{u}_n} \p{e}$$
has been computed and recorded. Then we know the value at $\p{u} = (u_1, \ldots, u_n)$ is
$$\inner{\p{r}, T_1^{\p{u}_1}\cdots T_n^{\p{u}_n} \p{e}} = \inner{\p{r}, T_i \tilde{\p{e}}}, $$
for all $T_i$ and $T_j$ commute. Thus the computation of one value of $E$ can be achieved within $O(N)$ operations, where $N$ is the maximal number of nonzero entries in matrices $T_1, \ldots, T_n$.

Next we focus on the case when the target term ordering is LEX. The complexities of the three steps in Algorithm \ref{alg:sakata} are analyzed below.

(1) As an extra reduction step is applied after each iteration, the numbers of terms of polynomials in $F$ are bounded by $D+1$. Denote by $\hat{N}$ the number of polynomials in $G_2$. Then checking whether $F$ is valid up to $\Next(\p{u})$ needs $O(\hat{N}D)$ operations.

(2) The computation of the new delta set $\Delta(\Next(\p{u}))$ only involves integer computations, and thus no field operation is needed.

(3) Constructing the new polynomial set $F^+$ valid up to $\Next(\p{u})$ requires $O(\hat{N}D)$ operations at most. The readers may refer to \cite{SH2002A, CLO1998U} for the way to construct new polynomials.

In step (1) above, new values of $E$ other than $e$ may be needed for the verification. The complexity for computing them is still $O(N)$, and this is another difference from the original {\sf BMS} algorithm for graded term orderings. After the update is complete, a polynomial reduction is applied to $F$ to control the size of every polynomial. This requires $O(\hat{N}\bar{N}D)$ operations, where $\bar{N}$ denotes the maximum term number of polynomials in $G_2$. To summarize, the total operations needed in each pass of the main loop in Algorithm \ref{alg: bms} is
$$O(N+\hat{N}D+\hat{N}\bar{N}D)=O(N+\hat{N}\bar{N}D).$$
Hence to estimate the whole complexity of the method, we only need an upper bound for the number of passes it takes in the main loop.

\begin{theorem}\label{thm:loop}\label{THM:LOOP}
  Suppose that the input ideal $I \subset \kx$ is of degree $D$. Then the number of passes of the loop in Algorithm \ref{alg: bms} is bounded by $2nD$.
\end{theorem}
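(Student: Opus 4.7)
The plan is to bound the total number of loop iterations by combining a Berlekamp--Massey-style per-variable bound with the extra per-variable termination check that Algorithm \ref{alg: bms} applies on top of the classical {\sf BMS} update procedure.

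First, since $I$ is zero-dimensional of degree $D$, its delta set (the LEX staircase of $I$) is contained in the box $[0, D-1]^n$, and consequently for every variable $x_i$ there must exist a leading monomial of the LEX \grobner basis of $I$ of the form $x_i^{d_i} m_i$ with $m_i \in \fk[x_1,\ldots,x_{i-1}]$ and $d_i \le D$. I would show that the extra per-variable termination criterion discussed right before Algorithm \ref{alg: bms} ensures that the main loop stops processing terms in the $x_i$-branch as soon as $F$ has recognized a leading monomial of this form and validated the corresponding polynomial. This is what prevents the algorithm from running forever along an infinite LEX-chain such as $1, x_1, x_1^2, \ldots$.

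Second, I would invoke a standard {\sf BMS} termination result (of the flavor given in \cite{SH2002A, CLO1998U}) to argue that within a single variable's branch, $F$ stabilizes after at most $2 d_i$ iterations of the loop. This is the direct multi-dimensional analogue of the classical $2d$ bound of the univariate Berlekamp--Massey algorithm: once a candidate polynomial with leading term $x_i^{d_i} m_i$ has been produced by $\sakata$, at most $d_i$ additional values $E(\p{u})$ with $\p{u}$ lying in the $x_i$-branch suffice either to confirm its validity or to force a correction, and each correction strictly enlarges $\Delta(\p{u})$ restricted to that branch, so corrections cannot occur more than $d_i$ times. Summing over $i = 1,\ldots,n$ then gives $\sum_{i=1}^n 2 d_i \le 2 n D$, which is the claimed bound.

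The main obstacle is making the per-variable bound rigorous in the LEX setting, since the classical {\sf BMS} termination theorems are usually stated for graded term orderings such as DRL and one must carefully track the interaction between $\sakata$, the subsequent $\reduce$ step, and the extra per-variable termination check. Once that adaptation is in place---precisely the point where the paper notes that the proof is now ``further simplified via introduction of known results in the literature''---the summation over the $n$ branches is routine and yields the stated $2nD$ bound.
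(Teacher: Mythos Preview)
Your per-variable decomposition has a genuine gap: the claim that the loop spends at most $2d_i$ iterations in ``the $x_i$-branch'' and that these add up to $\sum_i 2d_i$ does not reflect how the {\sf BMS} algorithm actually traverses terms under LEX. The algorithm does not process $n$ one-dimensional chains; it processes an $n$-dimensional region of terms. Concretely, take the paper's Cyclic5-2 example ($n=2$, $D=55$): the univariate polynomial in $x_1$ has degree $15$ and the pure leading term in $x_2$ is $x_2^7$, so your bound would give $2\cdot 15 + 2\cdot 7 = 44$ passes, whereas the algorithm actually performs $165$ passes (Table~\ref{tab:sakata-time}). Your intermediate inequality is simply false; the final bound $2nD$ survives only because you then weaken $\sum_i 2d_i$ to $2nD$, which hides the error.

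The paper's proof is structurally different. It argues by induction on the number of variables, setting $\chi_k$ to be the number of terms needed to obtain $G\cap\fk[x_1,\ldots,x_k]$ and $\Delta_k=\Delta(E)\cap\fk[x_1,\ldots,x_k]$, and proving $\chi_k\le 2k\,|\Delta_k|$. The inductive step slices the set of processed terms by the $x_{k+1}$-exponent and uses Proposition~\ref{prop: deltaset} to control each slice: the slice at level $2l$ is governed by $\Delta_{k+1}$ at level $l$ (this is where the ``doubling'' comes from), and odd-level slices are bounded by the most recently verified delta slice. Summing these contributions gives $(2k+1)|\Delta_{k+1}|$ for the main part plus at most $|\Delta_{k+1}|$ for the remaining terms, yielding $\chi_{k+1}\le 2(k+1)|\Delta_{k+1}|$. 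The bound $|\Delta(E)|\le D$ then finishes the argument. What you are missing is precisely this slice-by-slice accounting, which is unavoidable because the LEX traversal visits a genuinely multi-dimensional set of terms, not a union of $n$ univariate runs.
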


Before giving the proof, we need to introduce some of the proven results on the {\sf BMS} algorithm for preparations. Refer to \cite{BO2006C, CLO1998U} for more details.

Denote the previous term of $\p{u}$ w.r.t.\ $<$ by $\pre(\p{u})$. Given a $n$-dimensional array $E$, suppose now a polynomial $f \in \kx$ is valid for $E$ up to $\pre(\p{u})$ but not to $\p{u}$. Then the term $\p{u} - \lt(f)$ is called the \emph{span} of $f$ and denoted by $\Span(f)$, while the term $\p{u}$ is called the \emph{fail} of $f$ and written as $\fail(f)$. When $f \in I(E)$, $f$ is valid up to every term, and in this case we define $\Span(f) := \infty$. The following proposition reveals the importance of spans.

\begin{proposition}[\,{\cite[Corollary 9]{BO2006C}}]
 $\Delta(E) = \{\Span(f): f \not \in I(E)\}$.
\end{proposition}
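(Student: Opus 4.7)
The plan is to establish the set equality by proving both inclusions, using the family of linear functionals $L_{\p{r}}(h) := \sum_{\p{t}} h_{\p{t}} E_{\p{t}+\p{r}}$ on $\kx$, which transparently encode the recurrence conditions defining $I(E)$, validity, and span.

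For the inclusion $\{\Span(f) : f \not\in I(E)\} \subseteq \Delta(E)$, I would take $f \not\in I(E)$ with $\p{v} := \Span(f)$ and argue by contradiction. Unpacking the definition of $\Span$, one has $L_{\p{v}}(f) \neq 0$ while $L_{\p{r}}(f) = 0$ for every nonnegative shift $\p{r}$ strictly preceding $\p{v}$ in the term order. Suppose $\p{v}$ were the leading term of some monic $g \in I(E)$. Since $g \in I(E)$ forces $L_{\p{t}}(g) = 0$ for all $\p{t} \succeq 0$, I could solve for $E_{\p{v}+\p{t}} = -\sum_{\p{s} < \p{v}} g_{\p{s}} E_{\p{s}+\p{t}}$ at every $\p{t}$ in the support of $f$. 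Substituting these identities into the sum that defines $L_{\p{v}}(f)$ and interchanging the order of summation would rewrite $L_{\p{v}}(f) = -\sum_{\p{s} < \p{v}} g_{\p{s}} L_{\p{s}}(f)$, in which every inner term vanishes by the span hypothesis---forcing $L_{\p{v}}(f) = 0$, contrary to the choice of $\p{v}$.

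For the reverse inclusion $\Delta(E) \subseteq \{\Span(f) : f \not\in I(E)\}$, given $\p{v} \in \Delta(E)$ I would construct $f \not\in I(E)$ with $\Span(f) = \p{v}$ through a dimension argument. Because $I(E)$ is the common kernel of $\{L_{\p{r}}\}_{\p{r} \succeq 0}$, these functionals descend to a spanning set of the $D$-dimensional dual $(\kx/I(E))^*$, where $D = |\Delta(E)|$. Enumerating shifts in term order and greedily selecting those $\p{r}_1 < \p{r}_2 < \cdots < \p{r}_D$ at which $L_{\p{r}_i}$ is linearly independent from its predecessors (modulo $I(E)$) produces a basis. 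The key claim is that this selected set coincides with $\Delta(E)$: if $\p{r}_i \not\in \Delta(E)$, reducing $\p{x}^{\p{r}_i}$ modulo $I(E)$ expresses it via standard monomials strictly below $\p{r}_i$, and this reduction carries over to a linear relation on the functional side, contradicting independence. Granting the claim, $\p{v} = \p{r}_i$ for some $i$, linear independence furnishes $\bar f \in \kx/I(E)$ with $L_{\p{r}}(\bar f) = 0$ for $\p{r} < \p{v}$ and $L_{\p{v}}(\bar f) \neq 0$, and any lift $f \in \kx$ realizes $\Span(f) = \p{v}$.

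The main obstacle will be the claim $\{\p{r}_1, \ldots, \p{r}_D\} = \Delta(E)$ underlying the reverse direction, where one must carefully transfer the reduction of monomials modulo $I(E)$ into a corresponding relation among the functionals $L_{\p{r}}$ (the forward inclusion of the claim then combines with the fact that both sets have size $D$). A secondary subtlety throughout is the interplay between the componentwise partial order $\prec$, which controls which shifts may legitimately be invoked against $g \in I(E)$, and the total term order $<$, which governs the notions of validity, fail, and span---this must be tracked carefully to ensure the substitution-and-swap step in the first inclusion only uses identities whose shifts lie in the range where $f$ is already known to satisfy its recurrences.
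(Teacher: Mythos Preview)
The paper does not supply its own proof of this proposition: it is quoted as \cite[Corollary 9]{BO2006C} and used as a black box in the subsequent bound on the {\sf BMS} loop count. There is therefore no in-paper argument against which to compare your proposal.

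That said, your outline is a reasonable direct argument. The forward inclusion via the substitution-and-swap manoeuvre is essentially correct, modulo the subtlety you already flag: when you invoke $L_{\p{s}}(f)=0$ for each non-leading term $\p{s}$ of $g$, what validity of $f$ up to $\pre(\fail(f))$ literally gives you is indexed by the total order on shifts, and you must verify that every such $\p{s}$ (which a priori only satisfies $\p{s}<\p{v}$) actually lies in that range. For the reverse inclusion, your key claim that the greedily selected independent functionals are indexed exactly by $\Delta(E)$ does go through: whenever $\p{x}^{\p{r}}-\sum_{\p{s}} c_{\p{s}}\p{x}^{\p{s}}\in I(E)$, the defining recurrence yields $E_{\p{t}+\p{r}}=\sum_{\p{s}} c_{\p{s}}E_{\p{t}+\p{s}}$ for every $\p{t}\succeq 0$, which is precisely the dependence $L_{\p{r}}=\sum_{\p{s}} c_{\p{s}}L_{\p{s}}$ on the dual side; combined with the cardinality match $|\Delta(E)|=D$ this pins down the selected set.
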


Define $I(\p{u}) := \{ f \in \kx: \fail(f) > \p{u}\}$. Such a set is not an ideal but is closed under monomial multiplication: supposing that $F\in \ideal{a}(\p{u})$, we have $\p{t}F \in \ideal{a}(\p{u})$ for every term $\p{t}\in \kx$.

\begin{proposition}[\,{\cite[Proposition 6]{BO2006C}}]\label{prop: deltaset}
  For each $\p{u}$, $\Delta(\p{u}) = \{\Span(f): f \not \in I(\p{u})\}$. Furthermore, $\p{v} \in \Delta(\p{u})\setminus \Delta(\pre(\p{u}))$ if and only if $\p{v} \prec \p{u}$ and $\p{u} - \p{v} \in \Delta(\p{u})\setminus \Delta(\pre(\p{u}))$.
\end{proposition}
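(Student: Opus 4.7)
The plan is to establish the two assertions in sequence, leveraging the description of the BMS update procedure (Algorithm~\ref{alg: bmsUpdate}) together with the basic notions of $\Span$, $\fail$, and $I(\p{u})$ recalled just above the statement. The whole proof is driven by the observation that the delta set is updated \emph{only} when a polynomial currently stored in $F$ first loses its validity.

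First I would prove the equality $\Delta(\p{u})=\{\Span(f):f\notin I(\p{u})\}$ by induction on $\p{u}$ with respect to $<$. For $(\supseteq)$, pick $f\notin I(\p{u})$ so that $\fail(f)=\p{w}\leq \p{u}$; step~2 of Algorithm~\ref{alg: bmsUpdate} executed at $\p{w}$ detects the failure of $f$ and, in response, inserts $\Span(f)=\p{w}-\lt(f)$ into the accumulating delta set, giving $\Span(f)\in\Delta(\p{w})\subseteq\Delta(\p{u})$. For $(\subseteq)$, take $\p{v}\in\Delta(\p{u})$; either $\p{v}\in\Delta(\pre(\p{u}))$ and we are done by the inductive hypothesis, or $\p{v}$ was just added while processing $\p{u}$. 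In the latter case the BMS update rule attaches a specific witness polynomial $f\in F$ whose failure at $\p{u}$ triggered the insertion, and by construction $\lt(f)=\p{u}-\p{v}$ and $\fail(f)=\p{u}$, so $f\notin I(\p{u})$ and $\Span(f)=\p{v}$.

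For Part~2, note that by the symmetry $\p{u}-(\p{u}-\p{v})=\p{v}$ it suffices to prove one direction. Assume $\p{v}\in\Delta(\p{u})\setminus\Delta(\pre(\p{u}))$. The proof of Part~1 exhibits a polynomial $f$ with $\fail(f)=\p{u}$ and $\Span(f)=\p{v}$; hence $\lt(f)=\p{u}-\p{v}$, and since $\fail(f)=\p{u}$ requires $\p{u}\succ\lt(f)$, we obtain $\p{v}\succ\p{0}$ and $\p{v}\prec\p{u}$. It remains to produce a companion polynomial $g$ with $\lt(g)=\p{u}-\p{v}$ and $\fail(g)=\p{u}$, which by Part~1 will place $\p{u}-\p{v}$ in $\Delta(\p{u})\setminus\Delta(\pre(\p{u}))$. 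The natural candidate is the reflection $g_{\p{t}}:=f_{\lt(f)-\p{t}}$ for $\p{0}\preceq\p{t}\preceq\lt(f)$, which is precisely the \textbf{witness polynomial} that Algorithm~\ref{alg: bmsUpdate} keeps in the auxiliary set $G^+$. Under the change of index $\p{s}=\lt(f)-\p{t}$, the discrepancy of $g$ at a test point $\p{r}$ with $\p{0}\prec\p{r}\preceq\p{u}-\lt(g)=\p{v}$ is converted into a discrepancy of $f$ on an index that is either strictly below $\p{u}-\lt(f)$ in the partial order (hence zero by validity of $f$ up to $\pre(\p{u})$) or equal to $\p{u}-\lt(f)$ (hence equal to the nonzero discrepancy that caused $f$ to fail). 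This shows $g$ is valid on all terms below $\p{u}$ but fails at $\p{u}$, as required.

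The main obstacle is Step~3, namely justifying that the reflected polynomial $g$ is valid strictly below $\p{u}$ and fails exactly at $\p{u}$. This is not a formal manipulation but a bookkeeping exercise on the index set $\{\p{s}+\p{r}\}$ appearing in $\sum f_{\p{s}}E_{\p{s}+\p{r}}=0$, and it is the step that actually uses the structural feature of the BMS algorithm: the deliberate maintenance of the witness set $G$ is precisely what makes this symmetry a property of the algorithm's output rather than a miracle. Once this is in place, both directions of Part~2 fall out by the self-dual nature of the map $\p{v}\mapsto\p{u}-\p{v}$.
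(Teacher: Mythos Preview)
The paper does not prove this proposition; it is quoted from \cite[Proposition~6]{BO2006C} and used as a black box, so there is no in-paper proof to compare against and your outline must stand on its own.

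On its own terms, Part~2 has a genuine gap. First a bookkeeping slip: a polynomial $g$ with $\lt(g)=\p{u}-\p{v}$ and $\fail(g)=\p{u}$ has $\Span(g)=\p{v}$, not $\p{u}-\p{v}$, so it does not place $\p{u}-\p{v}$ in $\Delta(\p{u})$; you would need $\lt(g)=\p{v}$. More seriously, the reflection $g_{\p{t}}:=f_{\lt(f)-\p{t}}$ does not do what you claim. After your substitution $\p{s}=\lt(f)-\p{t}$ the discrepancy of $g$ at $\p{r}$ becomes $\sum_{\p{s}} f_{\p{s}}\,E_{\lt(f)+\p{r}-\p{s}}$, with $-\p{s}$ rather than $+\p{s}$ in the index; this is not a discrepancy of $f$ at any shift, so validity of $f$ up to $\pre(\p{u})$ says nothing about it. (The support of $f$ need not even lie below $\lt(f)$ for the partial order $\preceq$, so the reflection may be ill-defined.) The mechanism actually used in \cite{BO2006C} is not a reflected polynomial but the bilinear sum $\sum_{\p{a},\p{b}} f_{\p{a}} g_{\p{b}} E_{\p{a}+\p{b}+\p{c}}$, which can be grouped either as a combination of $f$-discrepancies or of $g$-discrepancies; this ``agreement'' argument shows that every $g$ with $\lt(g)\preceq\p{v}$ valid up to $\pre(\p{u})$ must also fail at $\p{u}$. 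Finally, even granting a correct $g$ with $\Span(g)=\p{u}-\p{v}$ and $\fail(g)=\p{u}$, Part~1 only yields $\p{u}-\p{v}\in\Delta(\p{u})$; you have not addressed $\p{u}-\p{v}\notin\Delta(\pre(\p{u}))$, which requires ruling out \emph{every} $g'$ with $\Span(g')=\p{u}-\p{v}$ and $\fail(g')<\p{u}$. A lesser issue in Part~1: the inclusion $(\supseteq)$ assumes the algorithm detects the failure of an arbitrary $f\notin I(\p{u})$, but Algorithm~\ref{alg: bmsUpdate} only tests the polynomials currently stored in $F$, so this direction needs more than you have written.
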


The above proposition states when a term in $\Delta(E)$ is determined, and it is going to be used extensively in the sequel. Also from this proposition, one can derive the following termination criteria for the {\sf BMS} algorithm, which are mainly designed for graded term orderings like DRL.

\begin{theorem}[\,{\cite[pp.529, Proposition (3.12)]{CLO1998U}} ]\label{thm: termination}
  Let $c_{max}$ be the largest element of $\Delta(E)$ and $s_{max}$ be the largest element of $\{\lt(g): g \in G\}$, where $G$ is the \grobner basis of $I(E)$ w.r.t.\ $<$.
  \begin{enumerate}
    \item[(1)] For all $\p{u} \geq c_{max} + c_{max}$, $\Delta(\p{u}) = \Delta(E)$ holds.
    \item[(2)] For all $\p{v} \geq c_{max} + \max\{c_{max}, s_{max}\}$, the polynomial set $F$ the {\sf BMS} algorithm maintains equals $G$.
  \end{enumerate}
\end{theorem}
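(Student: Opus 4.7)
My plan is to derive both statements from Proposition~\ref{prop: deltaset} and the identity $\Delta(E) = \{\Span(f) : f \notin I(E)\}$, treating Part (1) as a bookkeeping lemma about the symmetric way new elements enter $\Delta(\p{u})$ and Part (2) as a consequence that uses $c_{max}$ as an a priori bound on where a would-be ``failure'' could show up.

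For Part (1), the inclusion $\Delta(\p{u}) \subseteq \Delta(E)$ is immediate, so I would fix an arbitrary $\p{v} \in \Delta(E)$ and locate the first term $\p{u}'$ at which $\p{v} \in \Delta(\p{u}') \setminus \Delta(\pre(\p{u}'))$. Such a $\p{u}'$ exists because $\Delta(E)$ is finite, the chain $\Delta(\p{u})$ is monotone, and the pairing iff of Proposition~\ref{prop: deltaset} forbids an element of $\Delta(E)$ from being perpetually skipped. By the same proposition, the companion $\p{u}' - \p{v}$ is also newly added at step $\p{u}'$, so $\p{u}' - \p{v} \in \Delta(E)$. Since both $\p{v}$ and $\p{u}' - \p{v}$ lie in $\Delta(E)$, each is $\preceq c_{max}$ componentwise, whence $\p{u}' \preceq c_{max} + c_{max}$; compatibility of $<$ with componentwise order then yields $\p{u}' \leq c_{max} + c_{max}$, so $\p{v} \in \Delta(\p{u})$ for every $\p{u} \geq c_{max} + c_{max}$.

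For Part (2), assume $\p{v} \geq c_{max} + \max\{c_{max}, s_{max}\}$. By Part (1) we have $\Delta(\p{v}) = \Delta(E)$, so the leading terms of the polynomials the {\sf BMS} algorithm maintains in $F$ are forced to be the minimal elements of $\znum_{\geq 0}^n \setminus \Delta(E)$, coinciding with the leading terms of $G$. The remaining task is to show $f \in I(E)$ for every $f \in F$. Since $\lt(f) \leq s_{max}$ and $\p{v} \geq c_{max} + s_{max}$, each such $f$ has been verified against all shifts by $\p{r}$ with $\p{r} \preceq c_{max}$. If $f$ failed at some later term $\p{u}^*$, then $\Span(f) = \p{u}^* - \lt(f)$ would be an element of $\Delta(E)$ strictly exceeding $c_{max}$ componentwise, contradicting the choice of $c_{max}$. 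Hence $\fail(f) = \infty$ for every $f \in F$, so $F \subseteq I(E)$; combined with the agreement of leading terms and the minimality of $F$, this forces $F = G$.

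The main obstacle lies in Part (2): converting the quantitative bound $c_{max} + s_{max}$ into the qualitative membership $F \subseteq I(E)$. The key leverage is the structural fact that any span must already live in $\Delta(E)$, so any purported failure beyond distance $c_{max}$ from $\lt(f)$ is impossible. Part (1), by contrast, is essentially a symmetry argument driven by the pairing characterization and the finiteness of $\Delta(E)$, and it feeds directly into Part (2) by guaranteeing that the delta set has fully stabilized before the polynomial set $F$ does.
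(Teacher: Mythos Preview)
The paper does not prove this theorem; it is quoted from \cite[pp.~529, Proposition (3.12)]{CLO1998U} and used as a black box. So there is no ``paper's own proof'' to compare against, and your outline is essentially the standard argument one finds in the literature. That said, your write-up contains a genuine error that you should fix.

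In Part~(1) you assert that ``since both $\p{v}$ and $\p{u}'-\p{v}$ lie in $\Delta(E)$, each is $\preceq c_{max}$ componentwise.'' This is false: $c_{max}$ is the \emph{$<$-largest} element of $\Delta(E)$, not a componentwise upper bound. For instance, with $n=2$, LEX ($x_1<x_2$), and $\Delta(E)=\{(0,0),(1,0),(2,0),(0,1)\}$, one has $c_{max}=(0,1)$ while $(2,0)\in\Delta(E)$ is not $\preceq(0,1)$. The same confusion recurs in Part~(2) when you say a hypothetical $\Span(f)$ would ``strictly exceed $c_{max}$ componentwise.''

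The repair is immediate and does not require the componentwise detour at all. Term orderings are compatible with multiplication of monomials, i.e.\ $\p{a}\le\p{b}$ and $\p{c}\le\p{d}$ imply $\p{a}+\p{c}\le\p{b}+\p{d}$. In Part~(1): from $\p{v}\le c_{max}$ and $\p{u}'-\p{v}\le c_{max}$ (both in the term order, simply because they lie in $\Delta(E)$) you get $\p{u}'=\p{v}+(\p{u}'-\p{v})\le c_{max}+c_{max}$ directly. In Part~(2): if $f\in F$ were not in $I(E)$ then $\Span(f)\in\Delta(E)$, hence $\Span(f)\le c_{max}$, and since $\lt(f)\le s_{max}$ you obtain $\fail(f)=\lt(f)+\Span(f)\le s_{max}+c_{max}\le\p{v}$, contradicting that $f$ is valid up to $\p{v}$. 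With this correction your argument goes through; the remaining steps (that every $\p{v}\in\Delta(E)$ eventually enters some $\Delta(\p{u}')$, and that once $\Delta(\p{v})=\Delta(E)$ the leading terms of $F$ match those of $G$) are standard facts about the {\sf BMS} update and are exactly what the cited reference establishes.
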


As explained in Section \ref{subsec: general-1}, actually the term ordering LEX is not the one of interest in Coding Theory and does not possess some properties needed for a good order domain.  But the results stated above are still correct. In particular, Theorem \ref{thm: termination} indicates when the iteration in the {\sf BMS} algorithm ends. For graded term orderings like DRL, once the termination term is fixed, the whole intermediate procedure in the {\sf BMS} algorithm is also determined. However, for LEX it is not the case. We have to study carefully what happens between the starting term $\p{1}$ and the termination term indicated by Theorem \ref{thm: termination}.

Next we first illustrate the procedure for a 2-dimensional example derived from Cyclic5. Both the delta set (marked with crosses) and the terms handled by the {\sf BMS} algorithm (with diamonds) are shown in Figure \ref{fig:process}.

\begin{figure}
\centering
\includegraphics[width=10cm]{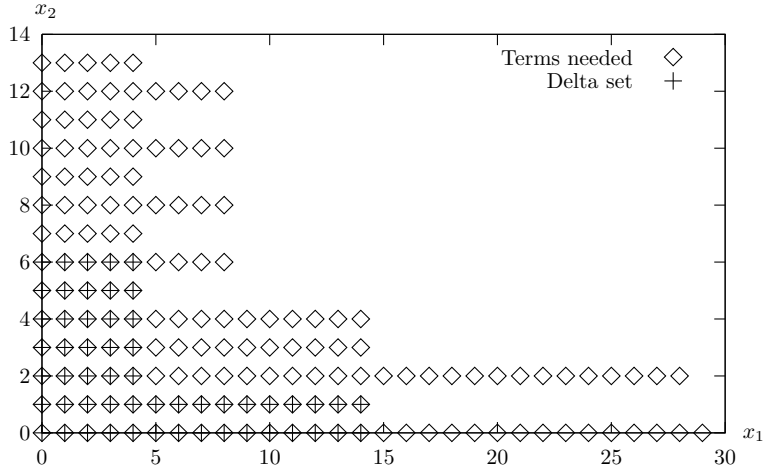}
\caption{Delta set ($+$) and terms needed ($\diamond$) for Cyclic5-2}\label{fig:process}
\end{figure}

The $c_{max}$ and $s_{max}$ in Theorem \ref{thm: termination} are respectively $(4,6)$ and $(0,7)$. In fact, the {\sf BMS} algorithm obtains the whole delta set at $(8,12) = c_{max}+c_{max}$, and the polynomial set it maintains grows to the \grobner basis at $(4, 13) = c_{max}+s_{max}$, which is also where the algorithm ends.

Next we go into some details of what happens when a diamond row is handled by the {\sf BMS} algorithm. We call a diamond (or cross) row the $j$th diamond (or cross) row if terms in this row are $(i, j)$. Then for the $0$th diamond row, the {\sf BMS} algorithm degenerates to the Berlekamp--Massey one to compute the univariate polynomial $f_1(x_1)$. Here $30$ diamond terms are needed because the minimal polynomial is of degree $15$.

For other rows in Figure \ref{fig:process}, from Proposition \ref{prop: deltaset}, one knows that at a $j$th diamond rows with an odd $j$, the delta set does not change. Thus such diamond rows are only bounded by the latest verified row in the delta set. This is because otherwise a wrong term in the delta set will be added if other diamond terms are handled. For example, the $3$rd diamond row is of the same length as the $1$st cross row, while the $5$th diamond row is as that of the $2$nd cross one.

For  a $2k$th diamond row, its number is related to two cirteria. On one hand, again from Proposition \ref{prop: deltaset}, the $k$th cross row is determined while the $2k$th diamond row is handled in the {\sf BMS} algorithm. Denote by  $c_{max}(k)$ the largest term in the $k$th cross row, then terms up to $c_{max}(k)+c_{max}(k)$ in the $2k$th diamond row have to be handled to furnish the $k$th cross row. On the other hand, the number of $2k$th diamond row is also bounded by the latest verified cross row, as the odd diamond ones. The first criterion is shown by the $6$th diamond and the $3$rd cross rows, while the $4$th diamond row is the result of both criteria.

For a term $\p{u} = (u_1, \ldots, u_i, 0, \ldots, 0) \in \fk[x_1, \ldots, x_i] $, in the proof below we write it as $\p{u} = (u_1, \ldots, u_i)$ for simplicity, ignoring the last $n-i$ zero components in the terms.

\begin{proof}(of Theorem \ref{THM:LOOP})
Suppose $G$ is the \grobner basis of $I(E)$ the {\sf BMS} algorithm computes. Denote the number of terms needed in the {\sf BMS} algorithm to compute $G \cap \fk[x_1, \ldots, x_i]$ by $\chi_i$, and $\Delta_i := \Delta(E) \cap \fk[x_1, \ldots, x_i]$. From $I\subseteq I(E)$ one knows that $\Delta(E)$ is a subset of the canonical basis of $\kx/I$, thus $|\Delta(E)| \leq D$. Therefore to prove the theorem, it suffices to show $2n|\Delta(E)|$ is an upper bound.

We induce on the number of variable $i$ of $\fk[x_1, \ldots, x_i]$. For $i=1$, the {\sf BMS} algorithm degenerates to the Berlekamp--Massey, and one can easily see that $\chi_1 \leq 2|\Delta_1|$ holds. Now suppose $\chi_k\leq 2k|\Delta_k|$ for $k (<n)$. Next we prove $\chi_{k+1}\leq 2(k+1)|\Delta_{k+1}|$.

As previously explained, in the terms to compute $G\cap \fk[x_1, \ldots, x_{k+1}]$, the terms $(u_1, \ldots, u_k, 2l)$ are determined by two factors: terms $(v_1, \ldots, v_k, l)$ in $\Delta_{k+1}$, and the latest verified terms in the delta set. First we ignore those $(u_1, \ldots, u_k, 2l)$ terms determined by the latter criterion, and denote by $\mathcal{T}_{k+1}$ all the remaining ones in $\fk[x_1, \ldots, x_{k+1}]$. We claim that $|\mathcal{T}_{k+1}|$ is bounded by $(2k+1)|\Delta_{k+1}|$.

From Theorem \ref{thm: termination}, we can suppose there exists some integer $m$, such that
  \begin{equation}\label{eq: proof}
      \mathcal{T}_{k+1} = \bigcup_{l=0, \ldots, 2m+1} \mathcal{T}_{k+1, l}, \quad \Delta_{k+1} = \bigcup_{j=0, \ldots, m} \Delta_{k+1, j},
  \end{equation}
where
  \begin{equation*}
    \begin{split}
      \mathcal{T}_{k+1, l} &:= \{\p{u} \in \mathcal{T}_{k+1} : \p{u}= (u_1, \ldots, u_k, l)\},\\
      \Delta_{k+1, j} &:= \{\p{u}\in \Delta_{k+1}: \p{u}= (u_1, \ldots, u_k, j)\} .
    \end{split}
  \end{equation*}

Clearly $|\mathcal{T}_{k+1,0}| = \chi_k \leq 2k|\Delta_k|$, and $\Delta_{k+1, 0} = \Delta_k$. One can see that $|\mathcal{T}_{k+1, 2j}|$ is bounded by either $2k|\Delta_k| = 2k|\Delta_{k+1, 0}|$ (if $j=0$) or $2|\Delta_{k+1, j}|~(\leq 2k|\Delta_{k+1, j}|)$.
Furthermore, $|\mathcal{T}_{k+1, 2j+1}|$ is bounded by $|\Delta_{k+1, j}|$, the number of the latest verified delta set. Hence we have
  $$|\mathcal{T}_{k+1, 2j}|+|\mathcal{T}_{k+1, 2j+1}|\leq (2k+1)|\Delta_{k+1, j}|,$$
which leads to $|\mathcal{T}_{k+1}|\leq (2k+1)|\Delta_{k+1}|$.

Now we only need to show the number of all the previously ignored terms, denoted by $\mathcal{T}'_{k+1}$, is bounded by $|\Delta_{k+1}|$. Suppose $\mathcal{T}'_{k+1} = \bigcup_{i \in S} \mathcal{T}'_{k+1, i}$, where $S$ is a set of indexes with $|S| \leq m$ in \eqref{eq: proof}, and
$$\mathcal{T}'_{k+1, i} := \{\p{u}\in \mathcal{T}'_{k+1}: \p{u} = (u_1, \ldots, u_k, i)\}.$$
Then for each $i$, $|\mathcal{T}'_{k+1, i}|$ is bounded by the number of the latest verified delta set, say $|\Delta_{k+1, p_i}|$. Thus the conclusion can be proved if one notices $\bigcup_{i\in S}\Delta_{k+1, p_i} \subseteq \Delta_{k+1}$.
\end{proof}

\begin{theorem}\label{prop:complexity general}
Assume that $T_1, \ldots, T_n$ are constructed. The complexity for Algorithm \ref{alg: bms} to complete the change of ordering is bounded by $O(nD(N+\hat{N}\bar{N}D))$, where $N$ is the maximal number of nonzero entries in the multiplication matrices $T_1, \ldots, T_n$, and $\hat{N}$ and $\bar{N}$ are respectively the number of polynomials and the maximal term number of all polynomials in the resulting \grobner basis.
\end{theorem}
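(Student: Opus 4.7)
My plan is to combine the per-iteration cost analysis already carried out in this subsection with the iteration-count bound given by Theorem \ref{THM:LOOP}. Concretely, the total running time of Algorithm \ref{alg: bmsbased} splits as
\[
(\text{cost per pass of the main loop}) \times (\text{number of passes}),
\]
plus the final $\IsGB$ check, and I would treat these factors in turn.

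For the per-pass cost, I would reassemble the four contributions isolated earlier: (i) the evaluation of $E$ at the new term $\p{u}$ at Line \ref{line: mapping}, which reuses a previously stored $T_1^{u_1}\cdots T_i^{u_i-1}\cdots T_n^{u_n}\p{e}$ and thus costs $O(N)$ since all $T_j$ commute and $N$ bounds the nonzero entries of any $T_j$; (ii) the validity test in step (1) of $\sakata$, which costs $O(\hat{N}D)$ after the reduction step keeps each polynomial in $F$ of size $\leq D+1$, plus at most $O(N)$ for any extra values of $E$ that the test may require; (iii) step (2), which is purely combinatorial on the delta set and uses no field operation; and (iv) step (3), constructing $F^+$, which also costs $O(\hat{N}D)$. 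Finally the $\reduce(F)$ call at the end of the loop body reduces each of the $\hat N$ polynomials modulo the others and thus costs $O(\hat{N}\bar{N}D)$. Summing gives $O(N + \hat{N}D + \hat{N}\bar{N}D) = O(N + \hat{N}\bar{N}D)$ per pass, exactly as recorded in the paragraph preceding the theorem.

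For the number of passes, I would simply invoke Theorem \ref{THM:LOOP}, which bounds it by $2nD$. Multiplying,
\[
2nD \cdot O(N + \hat{N}\bar{N}D) = O(nD(N + \hat{N}\bar{N}D)),
\]
which is the claimed bound. The final $\IsGB$ verification amounts to checking the {\sf FGLM}-style linear dependencies \eqref{eq: fglm-linear} for each leading term in $F$, which is dominated by the above quantity and can therefore be absorbed into the $O$-bound.

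There is no real obstacle here: both ingredients have already been isolated in the preceding discussion, so the proof is essentially bookkeeping. The only subtle point worth being explicit about is that the $O(N)$ cost of evaluating $E$ relies on caching the previously computed product $T_1^{u_1}\cdots T_i^{u_i-1}\cdots T_n^{u_n}\p{e}$ (which exists because the main loop traverses terms in $<_2$-order, so some immediate predecessor has been visited) and on commutation of the multiplication matrices; this justifies that only one extra $T_i$-multiplication is needed per new term handled. Once this is noted, the theorem follows immediately from Theorem \ref{THM:LOOP}.
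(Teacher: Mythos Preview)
Your proposal is correct and follows exactly the same approach as the paper: the theorem is stated there without a separate proof, as the immediate product of the per-pass cost $O(N+\hat N\bar N D)$ already derived in the preceding paragraphs and the $2nD$ iteration bound of Theorem~\ref{THM:LOOP}. Your write-up simply makes this bookkeeping explicit (and adds the harmless observation that the final $\IsGB$ check is dominated), so there is nothing to correct.
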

%


\subsection{Example}
Consider the ideal $I\subset \ff_{65521}[x_1, x_2]$ defined by its DRL \grobner basis ($x_1 < x_2$)
\begin{equation*}
\begin{split}
    G_1 = &\{x_2^4 + 2\,x_1^3 x_2 + 21\,x_2^3 + 11\,x_1 x_2^2 + 4\,x_1^2 x_2 + 22\,x_1^3 + 9\,x_2^2 + 17\,x_1 x_2 + 19\,x_1^2 + \\ &~2\,x_2 + 19\,x_1 + 5,  x_1^2 x_2^2 + 10\,x_2^3 + 12\,x_1^2 x_2 + 20\,x_1^3 + 21, ~x_1^4+ 15\,x_1^2 + 19\,x_1 + 3\}.
\end{split}
\end{equation*}
Here $\fnum_{65521}[x_1, x_2]/\ideal{G_1}$ is of dimension 12. Its basis, and further the multiplication matrices $T_1$ and $T_2$, can be computed accordingly.

Now we want to compute the \grobner basis $G_2$ of $I$ w.r.t.\ LEX. With a vector
\begin{equation*}
  \begin{split}
    \p{r} = (&6757, 43420, 39830, 45356, 52762, 17712, \\
    &27676,  17194, 138, 48036, 12649, 11037)^{t}\in \fnum_{65521}^{(12\times 1)}
  \end{split}
\end{equation*}
generated at random, the $2$-dimensional mapping $E$ is constructed. Then $\sakata()$ is applied term by term according to the LEX ordering, with $\Delta(\p{u})$ and the polynomial set $F$ valid up to $\p{u}$ shown in Table \ref{tab:sakata-exp}. For example, at the term $(4,0)$, the polynomial $x_1^2 + 62681\,x_1 + 41493 \in F$ is not valid up to $(4,0)$. Then the delta set is updated as $\{(0,0), (1,0), (2,0)\}$, and $F$ is reconstructed such that the new polynomial $x_1^3 + 62681\,x_1^2 + 35812\,x_1 + 18557$ is valid up to $(4,0)$.

The first polynomial in $G_2$:
$$g_1 =x_1^4+15\,x_1^2 + 19\,x_1 + 3$$
is obtained at the term $(7,0)$. Next $\sakata()$ is executed to compute other members of $I(E)$ according to the remaining term sequence $[x_2, x_1x_2, \ldots, $ $x_2^2, x_1^2x_2^2, \ldots,]$, until the other polynomial in $G_2$:
$$g_2=x_2^3+7\,x_1^2x_2^2 + 15\,x_1^2x_2 + 2\,x_1^3 + 9$$
is obtained at $(3,5)$. Now the main loop of Algorithm \ref{alg: bms} ends. Then one can easily verify that $\{g_1, g_2\}\subset G_2$ and $\dim(\fnum_{65521}[x_2, x_1]/\ideal{g_1, g_2})=12$, thus $G_2=\{g_1, g_2\}$.

\begin{table*}[h]
\centering
\resizebox{\textwidth}{!}{
\begin{tabular}{c c c}
\hline
Term $\p{u}$ & $\Delta(\p{u})$ & $F$: polynomial set valid up to $\p{u}$\\
\hline
$(0,0)$ & $(0,0)$&  $x_1, x_2$ \\
$(1,0)$ & -----& $x_1 + 65437, x_2$ \\
$(2,0)$ & $(0,0),(1,0)$& $x_1^2 + 65437\,x_1 +21672, x_2$\\
$(3,0)$ & -----& $x_1^2 + 62681\,x_1 + 41493, x_2$ \\
$(4,0)$ & $(0,0), (1,0), (2,0)$& $x_1^3 + 62681\,x_1^2 + 35812\,x_1 + 18557, x_2$ \\
$(5,0)$ & -----& $x_1^3 + 30688\,x_1^2 + 45566\,x_1 + 54643, x_2$ \\
$(6,0)$ & $(0,0), (1,0), (2,0), (3,0)$& $x_1^4 + 30688\,x_1^3 + 20026\,x_1^2 + 45766\,x_1 + 5434, x_2$ \\
$(7,0)$ & -----& $g_1, x_2$ \\
$(0,1)$ & -----& $g_1, x_2 + 65034\,x_1^3 + 24330\,x_1^2 + 14876\,x_1 + 52361$ \\
$(1,1)$ & -----& $g_1, x_2 + 64550\,x_1^3 + 37707\,x_1^2 + 48745\,x_1 + 7628$ \\
$(2,1)$ & -----& $g_1, x_2 + 38842\,x_1^3 + 5603\,x_1^2 + 45755\,x_1 + 44311$ \\
$(3,1)$ & -----& $g_1, x_2 + 9449\,x_1^3 + 20826\,x_1^2 + 39078\,x_1 + 38885$ \\
$(0,2)$ & $(0,0), (1,0), (2,0), (3,0), (0,1)$&
$\begin{array}{c}
    g_1, x_2^2  + 38885\,x_2 + 65360\,x_1^3 + 1782\,x_1^2 + 36000\,x_1 + 39469\\
    x_2x_1 + 20826\,x_1^3 + 28385\,x_1^2 + 55917\,x_1 + 37174
  \end{array}$\\
$\vdots$ & $\vdots$ & $\vdots$\\
\hline
\end{tabular}
}
\caption{Example for the {\sf BMS}-based method} \label{tab:sakata-exp}
\end{table*}

Here is an example where this method fails. Let $G=\{x_1^3, x_1^2x_2, $ $x_1x_2^2, x_2^3\}\subset \fnum_{65521}[x_1, x_2]$. Then the ideal $\ideal{G}$ is 0-dimensional with degree $D=6$. It is easy to see that $G$ is \grobner basis w.r.t.\ both DRL and LEX. Starting from $G$ as a \grobner basis w.r.t.\ DRL, the method based on the {\sf BMS} algorithm to compute the \grobner basis w.r.t.\ LEX will not be able to return the correct \grobner basis, even the base field itself is quite large and different random vectors $\p{r}$ are tried.

\section{Putting all methods together: top-level algorithm}\label{sec:main}

In this section, we combine the algorithms presented in the previous parts of this paper as the following integrated top-level algorithm, which performs the change of ordering of \grobner bases to LEX.

\begin{algorithm}[h]
  \KwIn{$G_1$, \grobner basis of a 0-dimensional ideal
    $I \subset \kx$ w.r.t.\ $<_1$}
  \KwOut{$G_2$, \grobner basis of $I$ or $\sqrt{I}$ w.r.t.\ LEX}

  $G_2 := \shapen(G_1, <_1)$\;
  \eIf{$G_2 \ne $ {\sf Fail}}
  {
    \Return{$G_2$}
  }
  {
    $G_2 := \shaped(G_1, <_1)$\;
    \eIf{$G_2 \ne$ {\sf Fail}}
    {
        \Return{$G_2$}
    }
    {
        $G_2 := \BMSbased(G_1, <_1)$\;
        \eIf{$G_2 \ne$ {\sf Fail}}
        {
            \Return{$G_2$}
        }
        {   \Return{$\fglm(G_1, <_1)$}
        }
    }
  }

\caption{Top-level algorithm $G_2 := \toplevel(G_1, <_1)$}\label{alg: main}
\end{algorithm}

We would like to mention that to integrate these three algorithms, one needs to skip some overlapped steps in the three algorithms, like computation of the canonical basis and the multiplication matrices, and the choice of random vectors, etc. If one does not seek for the \grobner basis of $\sqrt{I}$, that is to say, the multiplicities of the zeros are needed, then the deterministic invariant should be omitted.

Thanks to the feasibility in each algorithm to test whether the computed polynomial set is the correct \grobner basis, this top-level algorithm will automatically select which algorithm to use according to the input, until the original {\sf FGLM} one is called if all these algorithms fail. It is also a deterministic algorithm, though both the Wiedemann algorithm and the {\sf BMS}-based method will introduce randomness and probabilistic behaviors to the individual algorithms.

For an ideal in shape position, the probability for Algorithm \ref{alg: shapeN} to compute the correct \grobner basis is the same as that of computing the correct minimal polynomial in the Wiedemann algorithm for one choice of a random vector, which has been analyzed in \cite{W86S}. When Algorithm \ref{alg: shapeN} fails, the one based on the deterministic Wiedemann algorithm can tell us for sure whether the input ideal is in shape position, and return the \grobner basis of $\sqrt{I}$. However, the probability for the {\sf BMS}-based method to return the correct \grobner basis is still unknown.

\section{Multiplication matrix $T_1$: sparsity and complexity}\label{sec: random}

In the previous description and complexity analyses of all the algorithms, the multiplication matrices $T_1, \ldots, T_n$ are assumed known. In this section, for generic polynomial systems and the term ordering DRL, the multiplication matrix $T_1$ is exploited, on its sparsity and cost for construction. We are able to give an explicit formula to compute the number of dense columns in $T_1$, and we also analyze the asymptotic behavior of this number, which further leads to a finer complexity analysis for the change of ordering for generic systems. The term ordering is preassigned as DRL in this section without further notification.



\subsection{Construction of multiplication matrices}\label{subsec: construction-general}

Given the \grobner basis $G$ of a $0$-dimensional ideal $I$ w.r.t.\ DRL, let $B=[\epsilon_1, \ldots, \epsilon_D]$ be the canonical basis of $\kx / \ideal{G}$, and $L := \{\lt(g): g \in G\}$. The three cases of the multiplication $\epsilon_i x_j$ for the construction of the $i$th column of $T_j$ in {\sf FGLM} are reviewed below \cite{FGLM93E}.

\begin{enumerate}
  \item[(1)] The term $\epsilon_i x_j$ is in $B$: the coordinate vector of $\nf(\epsilon_i x_j)$ is $(0, \ldots, 0, 1, 0, \ldots, 0)^t$, where the position of $1$ is the same as that of $\epsilon_i x_j$ in $B$; \label{item: T1-1}
  \item[(2)] The term $\epsilon_i x_j$ is in $L$: the coordinate vector can be obtained easily from the polynomial $g\in G$ such that $\lt(g) = \epsilon_i x_j$; \label{item: T1-2}
  \item[(3)] Otherwise: the normal form of $\epsilon_i x_j$ w.r.t.\ $G$ has to be computed to get the coordinate vector. \label{item: T1-3}
\end{enumerate}

Obviously, the $i$th column of $T_j$ is sparse if case (a) occurs, thus a dense column can only come from cases (b) and (c). Furthermore, the construction for a column will not be free of arithmetic operations only if that column belongs to case (c). As a result, we are able to connect the cost for construction of the multiplication matrices with the numbers of dense columns in them.

%

\begin{proposition}\label{prop: construction}
  Denote by $M_i$ the number of dense columns in the multiplication matrix $T_i$. Then the matrices $T_1, \ldots, T_n$ can be computed within $O(D^2\sum_{i=1}^n M_i)$ arithmetic operations.
\end{proposition}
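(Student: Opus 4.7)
The plan is to attribute all the arithmetic cost to case (c) of the classification recalled just before the statement, bound the construction cost of a single case (c) column by $O(D^2)$, and observe that under the genericity in force in this section the number of case (c) columns in $T_i$ is bounded by the dense-column count $M_i$. First, I would note that the case analysis makes cases (a) and (b) arithmetically free: case (a) gives a standard unit vector, and case (b) merely reads the coordinates off the polynomial $g\in G$ whose leading monomial is $\epsilon_i x_j$. Hence the total cost is determined by the case (c) columns alone.

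Next I would bound the construction cost of one case (c) column by $O(D^2)$. Traverse the columns, across all $T_j$, in increasing DRL order of the associated product $\epsilon_i x_j$. When a case (c) term $\epsilon_i x_j$ is reached, pick any variable $x_k$ dividing $\epsilon_i$ and set $\epsilon_{i'} := \epsilon_i/x_k$; because the canonical basis $B$ is closed under taking divisors, $\epsilon_{i'} \in B$. One then uses the identity
\[
\nf(\epsilon_i x_j) \;=\; T_k \cdot \nf(\epsilon_{i'} x_j),
\]
in which $v := \nf(\epsilon_{i'} x_j)$ is known from a previous iteration since $\epsilon_{i'} x_j < \epsilon_i x_j$ in DRL. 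Each $\epsilon_l$ appearing in the support of $v$ satisfies $\epsilon_l x_k \leq \epsilon_i x_j$ (DRL is compatible with multiplication and $v$ is a normal form of $\epsilon_{i'} x_j$), so the column of $T_k$ indexed by $\epsilon_l$ has also already been built. The matrix–vector product $T_k v$ of size $D$ thus costs at most $O(D^2)$.

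Finally, under the genericity pervading this section, a case (c) column is a nontrivial linear combination of many basis vectors (the normal form of a term that is neither a basis monomial nor the leading term of a single polynomial of $G$) and is therefore dense. Consequently the number of case (c) columns in $T_i$ is at most $M_i$; multiplying the per-column cost $O(D^2)$ by this count and summing over $i = 1, \ldots, n$ delivers the claimed complexity $O(D^2\sum_{i=1}^n M_i)$.

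The main technical difficulty lies in the middle step: when forming the product $T_k v$, every column of $T_k$ indexed by the support of $v$ must already be available. The DRL inequality $\epsilon_l x_k \leq \epsilon_i x_j$ together with the divisor-closure of $B$ makes the induction close in all but a borderline equality case $\epsilon_l x_k = \epsilon_i x_j$, which must be handled separately — for instance by re-choosing $x_k$ when $\epsilon_i$ admits several divisors in $B$, or by a direct one-step reduction by $G$ as a fallback that preserves the $O(D^2)$ per-column bound.
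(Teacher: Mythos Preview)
Your argument is precisely the FGLM construction that the paper cites (its own proof is the single line ``Direct result from the proof of Proposition~3.1 in \cite{FGLM93E}''), so the approaches coincide. Two minor sharpenings are worth recording. First, the appeal to genericity in your last step is unnecessary: in this paper a ``dense column'' simply means one not arising from case~(a), so case~(c) columns are automatically among the $M_i$ dense ones without any probabilistic hypothesis. Second, the borderline equality $\epsilon_l x_k=\epsilon_i x_j$ can be eliminated outright rather than patched: in case~(c) the term $\epsilon_i x_j$ is a non-minimal element of $\lt(I)$, so some $x_k\mid\epsilon_i x_j$ has $\epsilon_i x_j/x_k\in\lt(I)$; necessarily $k\neq j$ (else $\epsilon_i\in\lt(I)$), hence $x_k\mid\epsilon_i$ and $\epsilon_{i'}x_j=\epsilon_i x_j/x_k\notin B$, which forces every $\epsilon_l$ in the support of $v=\nf(\epsilon_{i'}x_j)$ to satisfy $\epsilon_l<\epsilon_{i'}x_j$ strictly, whence $\epsilon_l x_k<\epsilon_i x_j$ strictly and the induction closes cleanly.
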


\begin{proof}
  Direct result from the proof of Proposition 3.1 in \cite{FGLM93E}.
\end{proof}

As shown in Section \ref{sec:shape},  among all multiplication matrices, $T_1$ is the most important one, and it is also of our main interest. However, for an arbitrary ideal, now we are not able to analyze the cost to construct $T_1$ by isolating it from the others in Proposition \ref{prop: construction}, for the analysis on $T_1$ needs information from the other matrices too.

In the following parts we first focus on generic sequences which impose stronger conditions on $T_1$ so that the analyses on it become feasible. We show that the construction of $T_1$ for generic sequences is free and present finer complexity results based on an asymptotic analysis. 

\subsection{Generic sequences and Moreno-Soc\'{\i}as conjecture}

Let $P = [f_1, \ldots, f_n]$ be a sequence of polynomials in $\kx$ of degree $d_1, \ldots, d_n$. If $d_1 = \cdots =d_n = d$, we call it a sequence of degree $d$. We are interested in the properties of the multiplication matrices for the ideal generated by $P$ if $f_1, \ldots, f_n$ are chosen ``at random". Such properties can be regarded generic in all sequences. More precisely, let $U$ be the set of all sequences of $n$ polynomials of degree $d_1, \ldots, d_n$, viewed as an affine space with the coefficients of the polynomials in the sequences as the coordinates. Then a property of such sequences is \emph{generic} if it holds on a Zariski-open in $U$. Next for simplicity, we will say some property holds ``for a generic sequence" if it is a generic one, and also $P$ is ``a generic sequence" if its properties of our interest are generic.

For a generic sequence $[f_1, \ldots, f_n]$, its properties concerning the \grobner basis computation, in particular the canonical basis, are the same as $[f_1^h, \ldots, f_n^h]$, where $f_i^h$ is the homogeneous part of $f_i$ of the highest degree. That is to say, we only need to study homogeneous generic sequences, which are also those studied in the literature. Hence in the following part of this section, a generic sequence is further assumed homogeneous.

Since we restrict to the situation where the number of polynomials is equal to that of variables, a generic sequence is a \emph{regular} one \cite{L83G}. We first recall the well-known characterization of a regular sequence via its Hilbert series.

\begin{theorem}\label{thm: hilbertS}
   Let $[f_1, \ldots, f_r]$ be a sequence in $\kx$ with $\deg(f_i) = d_i$. Then it is regular if and only if its Hilbert series is
        $$\frac{\prod_{i=1}^r (1-z^{d_i})}{(1-z)^n}.$$
\end{theorem}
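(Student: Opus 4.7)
The plan is to prove both directions by induction on $r$, using the short exact sequence of graded modules induced by multiplication by the last polynomial in the sequence. Let $I_k := \ideal{f_1, \ldots, f_k}$ with $I_0 = \ideal{0}$, and write $HS_M(z)$ for the Hilbert series of a graded module $M$. Recall that Hilbert series are additive on short exact sequences of graded modules and that shifting degree by $d$ multiplies the series by $z^d$. The base case $r=0$ is immediate: $HS_{\kx}(z) = 1/(1-z)^n$, which is the empty product formula.

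For the forward direction, suppose $[f_1, \ldots, f_r]$ is regular. By definition $f_r$ is a non-zero-divisor on $\kx/I_{r-1}$, so the graded map of multiplication by $f_r$,
\begin{equation*}
    (\kx/I_{r-1})(-d_r) \xrightarrow{\cdot f_r} \kx/I_{r-1},
\end{equation*}
is injective. Its cokernel is $\kx/I_r$, giving the short exact sequence
\begin{equation*}
    0 \longrightarrow (\kx/I_{r-1})(-d_r) \xrightarrow{\cdot f_r} \kx/I_{r-1} \longrightarrow \kx/I_r \longrightarrow 0.
\end{equation*}
Additivity of Hilbert series yields $HS_{\kx/I_r}(z) = (1-z^{d_r}) HS_{\kx/I_{r-1}}(z)$, and the formula follows by the induction hypothesis applied to the regular subsequence $[f_1, \ldots, f_{r-1}]$.

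For the converse, assume the Hilbert series equals $\prod_{i=1}^r(1-z^{d_i})/(1-z)^n$; I want to show $f_r$ is a non-zero-divisor on $\kx/I_{r-1}$, granted (by induction) that $[f_1, \ldots, f_{r-1}]$ is regular so its Hilbert series matches the $(r-1)$-fold product formula. Let $K$ be the kernel of multiplication by $f_r$ on $\kx/I_{r-1}$; then the four-term exact sequence
\begin{equation*}
    0 \longrightarrow K \longrightarrow (\kx/I_{r-1})(-d_r) \xrightarrow{\cdot f_r} \kx/I_{r-1} \longrightarrow \kx/I_r \longrightarrow 0
\end{equation*}
gives $HS_{\kx/I_r}(z) = (1-z^{d_r}) HS_{\kx/I_{r-1}}(z) + HS_K(z)$. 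Comparing with the assumed formula forces $HS_K(z) = 0$, hence $K = 0$, so $f_r$ is a non-zero-divisor and the sequence is regular.

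The only delicate point is making sure the induction hypothesis is legitimately available in the converse direction: knowing only $HS_{\kx/I_r}$ does not a priori determine $HS_{\kx/I_{r-1}}$. The cleanest way around this is to run the induction on the pair (regularity of the truncated sequence, Hilbert series formula), i.e.\ prove by induction on $r$ the joint statement ``$[f_1, \ldots, f_r]$ is regular iff $HS_{\kx/I_r}$ equals the product formula''; then in the converse step I first invoke the hypothesis on $[f_1, \ldots, f_{r-1}]$ via its Hilbert series (which is a formal power-series quotient of the two product formulas, well-defined since $(1-z^{d_r})$ is a unit in $\field{Z}[[z]]$), and only then conclude $K = 0$ as above. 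This is the main obstacle, and it is resolved by this bootstrapping argument; everything else is routine bookkeeping with graded Hilbert series.
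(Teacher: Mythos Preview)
The paper does not prove this theorem; it is quoted as a well-known characterization of regular sequences. So there is no paper proof to compare against, and your attempt stands or falls on its own.

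Your forward direction is correct and standard.

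Your converse direction has a genuine gap, and it is precisely at the spot you flagged as ``delicate.'' You acknowledge that $HS_{\kx/I_{r-1}}$ is not determined by $HS_{\kx/I_r}$, yet your proposed fix assumes it anyway: you say you ``invoke the hypothesis on $[f_1,\ldots,f_{r-1}]$ via its Hilbert series (which is a formal power-series quotient of the two product formulas).'' But the quotient $HS_{\kx/I_r}(z)/(1-z^{d_r})$ equals $HS_{\kx/I_{r-1}}(z)$ only when $K=0$, which is exactly what you are trying to prove. The inductive biconditional for length $r-1$ tells you nothing about $[f_1,\ldots,f_{r-1}]$ unless you already know either its regularity or its Hilbert series, and at this point in the argument you know neither. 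Reformulating the induction as a ``joint statement'' does not change this: you still need an independent handle on one side of the $(r-1)$-equivalence.

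Two standard repairs are available. One is dimension-theoretic: the order of the pole at $z=1$ of the assumed Hilbert series is $n-r$, so $\dim \kx/I_r = n-r$; since $\kx$ is Cohen--Macaulay, a homogeneous ideal generated by $r$ elements has height $r$ if and only if those elements form a regular sequence. The other stays within your framework: unwind the exact sequences all the way down to get
\[
HS_{\kx/I_r}(z) \;=\; \frac{\prod_{i=1}^r(1-z^{d_i})}{(1-z)^n} \;+\; \sum_{k=1}^r \Bigl(\prod_{j>k}(1-z^{d_j})\Bigr) HS_{K_k}(z),
\]
where $K_k$ is the kernel of multiplication by $f_k$ on $(\kx/I_{k-1})(-d_k)$. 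The hypothesis forces the sum to vanish; since each factor $\prod_{j>k}(1-z^{d_j})$ has constant term $1$ and each $HS_{K_k}$ has non-negative coefficients, inspecting the lowest-degree nonzero term of the sum shows that every $K_k$ must be zero.
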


Let $P$ be a generic sequence of degree $d$. Then we know its Hilbert series is
\begin{equation}\label{eq: hilbert}
  H(n, d) := (1-z^d)^n / (1-z)^n = (1+z+z^2+\cdots + z^{d-1})^n,
\end{equation}
from which one can easily derive that the degree of $\ideal{P}$ is $d^n$, and that the greatest total degree of terms in the canonical basis is $(d-1)n$.

\grobner bases of generic sequences w.r.t.\ DRL have been studied in \cite{M2003d}.  A term ideal $J$ is said to be a \emph{weakly reverse lexicographic ideal} if the following condition holds: if $\p{t} \in J$ is a minimal generator of $J$, then $J$ contains every term of the same total degree as $\p{t}$ which is greater than $\p{t}$ w.r.t.\ some term ordering. For the term ordering DRL, we have the following conjecture due to Moreno-Soc\'{\i}as.

~\newline
{\bf Moreno-Soc\'{\i}as conjecture} (\cite{M1991a})~
\emph{Let $\fk$ be an infinite field, and $P = [f_1, \ldots, f_n]$ a generic sequence in $\kx$ with $\deg(f_i) = d_i$. Then $\lt(\ideal{P})$, the leading term ideal of $\ideal{P}$ w.r.t.\ DRL, is a weakly reverse lexicographical ideal.}
~\newline

The Moreno-Soc\'{\i}as conjecture is proven true for the codimension $2$ case and for some special ideals for the codimension $3$ case \cite{AJL2001g, C2007g}. It has been proven that this conjecture implies the Fr\"{o}berg conjecture on the Hilbert series of a generic sequence, which is well-known and widely acknowledged true in the symbolic computation community \cite{P2010g}. 

\begin{proposition}\label{prop: conjecture-prop}
  Use the same notations as those in the Moreno-Soc\'{\i}as conjecture. If this conjecture holds, then for
  a term $\p{u}\in \lt(\ideal{P})$, any term $\p{v}$ such that $\deg(\p{u}) = \deg(\p{v})$ and $\p{v} > \p{u}$ is also in $\lt(\ideal{P})$.
\end{proposition}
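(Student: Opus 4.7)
The plan is to proceed by induction on the common total degree $d = \deg(\p{u}) = \deg(\p{v})$, with the base case $d = 0$ immediate. For the inductive step, I would fix a minimal generator $\p{t}$ of $\lt(\ideal{P})$ dividing $\p{u}$ and set $e = \deg(\p{t}) \leq d$. When $e = d$ we have $\p{u} = \p{t}$, so the weakly reverse lexicographic property supplied by the Moreno-Soc\'ias conjecture (applied at the minimal generator $\p{t}$) yields $\p{v} \in \lt(\ideal{P})$ at once. When $e < d$, I would write $\p{u} = \p{s}\p{t}$ with $\deg(\p{s}) \geq 1$; every variable $x_k$ dividing $\p{s}$ satisfies $\p{u}/x_k = (\p{s}/x_k)\p{t} \in \lt(\ideal{P})$, and whenever such an $x_k$ also divides $\p{v}$, the DRL order is preserved under common-variable division, so $\p{v}/x_k > \p{u}/x_k$ in degree $d-1$. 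The inductive hypothesis then gives $\p{v}/x_k \in \lt(\ideal{P})$, and hence $\p{v} \in \lt(\ideal{P})$.

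The delicate case is $\gcd(\p{s}, \p{v}) = 1$, where no variable of $\p{s}$ divides $\p{v}$ and the reduction to degree $d-1$ breaks down. My plan for this case is to instead locate a degree-$e$ factor $\p{t}'$ of $\p{v}$ with $\p{t}' > \p{t}$ in DRL: once such $\p{t}'$ is produced, the weakly reverse lexicographic property applied at $\p{t}$ places $\p{t}'$ in $\lt(\ideal{P})$, and since $\p{t}'$ divides $\p{v}$ we conclude $\p{v} \in \lt(\ideal{P})$. Producing such a $\p{t}'$ is what I expect to be the main obstacle: the idea would be to choose $\p{t}$ as the DRL-smallest minimal generator dividing $\p{u}$ and to draw on the full strength of the hypothesis --- namely that $\lt(\ideal{P})$ arises from a \emph{generic} regular sequence, so that by Moreno-Soc\'ias together with the Fr\"oberg conjecture that it implies, the Hilbert function of $\lt(\ideal{P})$ is pinned down and the minimal generators in each degree are dense enough to furnish the required factor $\p{t}'$.
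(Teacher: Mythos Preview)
Your inductive scheme is sound through the case $e=d$ and the case where some variable of $\p{s}$ also divides $\p{v}$, but the ``delicate case'' $\gcd(\p{s},\p{v})=1$ contains a genuine gap. The appeal to the Fr\"oberg conjecture and the Hilbert function is off target: those supply only the \emph{number} of monomials in each graded piece of $\lt(\ideal{P})$, not their identity, and in particular say nothing about which degree-$e$ monomials happen to divide the specific term $\p{v}$. More seriously, the combinatorial goal you set yourself --- a degree-$e$ divisor $\p{t}'$ of $\p{v}$ with $\p{t}'>\p{t}$ in DRL --- is not always attainable. With $n=3$, $x_1<x_2<x_3$, take $\p{t}=x_3^{\,2}$, $\p{s}=x_1$, $\p{u}=\p{s}\,\p{t}=x_1x_3^{\,2}$ and $\p{v}=x_2^{\,3}$: then $\deg(\p{v})=\deg(\p{u})=3$, $\p{v}>\p{u}$, and $\gcd(\p{s},\p{v})=1$, yet the only degree-$2$ divisor of $\p{v}$ is $x_2^{\,2}$, which is strictly DRL-smaller than $x_3^{\,2}$. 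Thus no ``density of minimal generators'' argument can manufacture the required $\p{t}'$; this branch of the plan must be rethought from scratch.

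The paper's argument is organised differently and avoids the induction on $d$ altogether. Given a minimal generator $\tilde{\p{u}}$ dividing $\p{u}$, it writes down in one stroke the candidate $\tilde{\p{v}}=\tilde{\p{u}}-\p{u}+\p{v}$ (exponent-wise). Since $\tilde{\p{v}}-\tilde{\p{u}}=\p{v}-\p{u}$ and $\p{v}-\tilde{\p{v}}=\p{u}-\tilde{\p{u}}\succ 0$, whenever $\tilde{\p{v}}$ has nonnegative entries it is automatically a divisor of $\p{v}$ of degree $\deg(\tilde{\p{u}})$ with $\tilde{\p{v}}>\tilde{\p{u}}$, and the weakly reverse lexicographic hypothesis applied at the minimal generator $\tilde{\p{u}}$ places $\tilde{\p{v}}$, hence $\p{v}$, in $\lt(\ideal{P})$. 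This single construction already absorbs both of your ``easy'' sub-cases without any inductive bookkeeping. The residual situation --- $\tilde{\p{v}}$ acquiring a negative entry --- is exactly your delicate case $\p{s}\nmid\p{v}$, and there the paper does not reach for Hilbert-function information either, but instead replaces $\tilde{\p{v}}$ by the DRL-maximal term $\overline{\p{v}}$ of degree $\deg(\tilde{\p{u}})$.
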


\begin{proof}
  If $\p{u}$ is a minimal generator of $\lt(\ideal{P})$, then the conclusion is a direct result from the Moreno-Soc\'{\i}as conjecture. Else there exists one minimal generator  $\tilde{\p{u}} \neq \p{u}$ such that $\p{u} \succ \tilde{\p{u}}$. As for any $\p{w}$ such that $\deg(\p{w})=\deg(\tilde{\p{u}})$ and $\p{w} > \tilde{\p{u}}$, we know $\p{w}\in \lt(\ideal{P})$. Then we can always find a term $\tilde{\p{v}}\in \lt(\ideal{P})$ such that $\p{v}\succ \tilde{\p{v}}$. For example, construct $\tilde{\p{v}} = \tilde{\p{u}} - \p{u} + \p{v}$. If $\tilde{\p{v}}$ is a term, then it suffices; otherwise the biggest term $\overline{\p{v}}$ such that $\deg(\overline{\p{v}}) = \deg(\tilde{\p{u}})$ will work. This ends the proof.
\end{proof}

As Proposition \ref{prop: conjecture-prop} implies, the Moreno-Soc\'{\i}as conjecture imposes a stronger requirement on the structure of the terms in $\lt(\ideal{P})$ for a generic sequence $P$. For the bivariate case, once a term $\p{u}$ is known to be an element in $\lt(\ideal{P})$, the terms in $\lt(\ideal{P})$ determined by it are illustrated in Figure \ref{fig: conjecturenew} (left), and furthermore, in the right figure the shape all terms in $\lt(\ideal{P})$ form.

\begin{figure}[h]
\centering
\begin{tikzpicture}[y=0.45cm, x=0.45cm]
        \draw[->] (0,0) --  (11,0);
        \draw[->] (0,0) -- (0,11);
        \foreach \x in {0,...,10}
                \draw (\x,1pt) -- (\x,-3pt)
                        node[anchor=north] {\x};
        \foreach \y in {0,...,10}
                \draw (1pt,\y) -- (-3pt,\y)
                        node[anchor=east] {\y};
        \node[below=0.1cm] at (11,0 ) {\(x_{1}\)};
        \node[rotate=90] at (-0.5,11) {\(x_{2}\)};
        \foreach \y in {7,...,10}\path[draw=black,fill=red] (0-0.15,\y-0.15) rectangle +(0.3,0.3);
        \foreach \y in {6,...,10}\path[draw=black,fill=red] (1-0.15,\y-0.15) rectangle +(0.3,0.3);
        \foreach \y in {5,...,10}\path[draw=black,fill=red] (2-0.15,\y-0.15) rectangle +(0.3,0.3);
        \foreach \y in {4,...,10}\path[draw=black,fill=red] (3-0.15,\y-0.15) rectangle +(0.3,0.3);
        \foreach \y in {3,...,10}\path[draw=black,fill=red] (4-0.15,\y-0.15) rectangle +(0.3,0.3);
        \foreach \y in {3,...,10}\path[draw=black,fill=red] (5-0.15,\y-0.15) rectangle +(0.3,0.3);
         \path[draw=black,fill=green!50!blue] (5,2) circle (0.1cm);
        \foreach \x in {6,...,10}
           \foreach \y in {2,...,10}\path[draw=black,fill=red] (\x-0.15,\y-0.15) rectangle +(0.3,0.3);
 \draw[dashed] (0,7) -- (5,2) -- (10,2);
\end{tikzpicture}
~~
\begin{tikzpicture}[y=0.45cm, x=0.45cm]
        \draw[->] (0,0) --  (11,0);
        \draw[->] (0,0) -- (0,11);
        \foreach \x in {0,...,10}
                \draw (\x,1pt) -- (\x,-3pt)
                        node[anchor=north] {\x};
        \foreach \y in {0,...,10}
                \draw (1pt,\y) -- (-3pt,\y)
                        node[anchor=east] {\y};
        \node[below=0.1cm] at (11,0 ) {\(x_{1}\)};
        \node[rotate=90] at (-0.5,11) {\(x_{2}\)};
        \foreach \y in {6,...,10}\path[draw=black,fill=red] (0-0.15,\y-0.15) rectangle +(0.3,0.3);
        \foreach \y in {5,...,10}\path[draw=black,fill=red] (1-0.15,\y-0.15) rectangle +(0.3,0.3);
        \foreach \y in {4,...,10}\path[draw=black,fill=red] (2-0.15,\y-0.15) rectangle +(0.3,0.3);
        \foreach \y in {3,...,10}\path[draw=black,fill=red] (3-0.15,\y-0.15) rectangle +(0.3,0.3);
        \foreach \y in {3,...,10}\path[draw=black,fill=red] (4-0.15,\y-0.15) rectangle +(0.3,0.3);
        \foreach \y in {3,...,10}\path[draw=black,fill=red] (5-0.15,\y-0.15) rectangle +(0.3,0.3);
        \foreach \x in {7,...,10}\path[draw=black,fill=red] (\x-0.15,1-0.15) rectangle +(0.3,0.3);
         \path[draw=black,fill=red] (10-0.15,0-0.15) rectangle +(0.3,0.3);
         \path[draw=black,fill=green!50!blue] (0,5) circle (0.1cm);
         \path[draw=black,fill=green!50!blue] (1,4) circle (0.1cm);
         \path[draw=black,fill=green!50!blue] (2,3) circle (0.1cm);
         \path[draw=black,fill=green!50!blue] (5,2) circle (0.1cm);
         \path[draw=black,fill=green!50!blue] (6,1) circle (0.1cm);
         \path[draw=black,fill=green!50!blue] (9,0) circle (0.1cm);
        \foreach \x in {6,...,10}
           \foreach \y in {2,...,10}\path[draw=black,fill=red] (\x-0.15,\y-0.15) rectangle +(0.3,0.3);
 \draw[dashed] (0,5) -- (2,3) -- (4,3) -- (6,1) -- (8,1) -- (9,0);
\end{tikzpicture}
\caption{One term $\p{u}\in \lt(\ideal{P})$ (\protect\tikz{\protect \path[draw=black,fill=green!50!blue] (0,0) circle (0.1cm);}) and the terms it determines (\protect\tikz{\protect \path[draw=black,fill=red] (0,0) rectangle +(0.2,0.2);}) ~/~ Minimal generators of $\lt(\ideal{P})$ (\protect\tikz{\protect \path[draw=black,fill=green!50!blue] (0,0) circle (0.1cm);}) and terms in $\lt(\ideal{P})$ (\protect\tikz{\protect \path[draw=black,fill=red] (0,0) rectangle +(0.2,0.2);})}\label{fig: conjecturenew}

\end{figure}

The base field in the Moreno-Soc\'{\i}as conjecture is restricted infinite. According to our preliminary experiments on randomly generated sequences over fields of large cardinality, we find no counterexample of this conjecture. As a result, we will consider it true and use it directly. The following variant of Moreno-Soc\'{\i}as conjecture, which is more convenient in our setting, can be derived easily from Proposition \ref{prop: conjecture-prop}. 

~\newline
{\bf Variant of  Moreno-Soc\'{\i}as conjecture}~
  \emph{Let $\fk$ be an infinite field, $P\in \kx$ a generic sequence of degree $d$, and $B$ the canonical basis of $\kx/\ideal{P}$ w.r.t.\ DRL. Denote by $B(k)$ the set of terms of total degree $k$ in $B$. Then for $k=1, \ldots, (d-1)n$, $B(k)$ consists of the first $|B(k)|$ smallest terms in all terms of total degree $k$.}
~\newline

\subsection{Sparsity and construction}\label{subsec: sparsity1}

Let $P\subset \kx$ be a generic sequence, and $G$ the \grobner basis of $\ideal{P}$. Then polynomials in $G$ can be assumed dense (in the case when $\fk$ is of characteristic $0$ or of large cardinality). As the number of dense columns in $T_1$ will directly lead to a bound on the number of nonzero entries in $T_1$, the study of $T_1$ sparsity is reduced to that of how many cases of (2) and (3) happen. Combining the Hilbert series of a generic sequence and our variant of Moreno-Soc\'{\i}as conjecture, we are able to give the counting of the dense columns in $T_1$.

\begin{proposition}\label{thm: dense}
    Let $\fk$, $P$, $B$ and $B(k)$ be the same as those in the Moreno-Soc\'{\i}as conjecture variant. If the Moreno-Soc\'{\i}as conjecture holds, then the number of dense columns in the multiplication matrix $T_1$ is equal to the greatest coefficient in the expansion of $(1+z+\cdots+z^{(d-1)})^n$.
\end{proposition}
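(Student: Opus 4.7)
The plan is to reduce the count of dense columns to the count of elements of $B$ that are free of $x_1$, and then evaluate this count using the structure imposed by the Moreno-Soc\'{i}as variant and the Hilbert series of a regular sequence.

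First I would establish the key bijective identity
\[
\#\{\text{dense columns of } T_1\} \;=\; |B \cap \fk[x_2,\ldots,x_n]|.
\]
A column indexed by $\epsilon \in B$ is sparse (case (a)) precisely when $\epsilon x_1 \in B$. Since $B$ is the complement of the monomial ideal $\lt(I)$, it is closed under taking divisors, and multiplication by $x_1$ therefore gives a bijection between $\{\epsilon \in B : \epsilon x_1 \in B\}$ and $\{m \in B : x_1 \mid m\}$. Subtracting from $|B| = D$ yields the claim.

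Next I would split by total degree and apply the Moreno-Soc\'{\i}as variant to pin down $|B(k) \cap \fk[x_2,\ldots,x_n]|$. The crucial DRL observation in the paper's convention $x_1 < \cdots < x_n$: among all $\binom{k+n-1}{n-1}$ degree-$k$ monomials, the $p_k := \binom{k+n-2}{n-1}$ monomials divisible by $x_1$ form an initial segment of the DRL order, strictly smaller than any $x_1$-free degree-$k$ monomial. Indeed, if $\p{u}$ and $\p{v}$ are of the same degree with $u_1 > 0$ and $v_1 = 0$, then the leftmost nonzero entry of $\p{u} - \p{v}$ sits in position~1 and is positive, giving $\p{u} < \p{v}$. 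Combined with $|B(k)| = h_k$ (from Theorem~\ref{thm: hilbertS}, where $h_k := [z^k](1+z+\cdots+z^{d-1})^n$) and the variant statement that $B(k)$ consists of the $h_k$ smallest degree-$k$ monomials, this forces
\[
|B(k) \cap \fk[x_2,\ldots,x_n]| \;=\; \max\!\bigl(h_k - p_k,\;0\bigr).
\]

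The final step is to identify
\[
\sum_{k\ge 0}\max\!\bigl(h_k - p_k,\;0\bigr) \;=\; \max_k h_k.
\]
Here I would rely on the symmetry $h_k = h_{(d-1)n - k}$ and the log-concavity/unimodality of $h_k$ (which follow from $H(n,d)$ being an $n$-fold convolution of the uniform sequence on $\{0,\ldots,d-1\}$), together with the hockey-stick identity $\sum_{k=0}^{K} p_k = \binom{K+n-1}{n}$. The idea is that the sign of $h_k - p_k$ changes exactly once, at some index $k^\ast$, and the partial sum $\sum_{k=0}^{k^\ast}(h_k - p_k)$ can be rewritten via symmetry of $\sum h_k = D$ as $h_{k^\ast}$, which is precisely the mode of the Hilbert series.

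I expect the last step to be the main obstacle. Steps~(i) and (ii) are essentially bookkeeping: (i) is a clean bijection from divisor-closure, and (ii) is a direct structural consequence of MS together with the elementary DRL comparison. The combinatorial identity in (iii), however, must be argued with care: one must verify that the sign-change index of $h_k - p_k$ coincides with the mode of $h_k$, and that the partial sum telescopes (via hockey-stick and the palindromic symmetry of $H(n,d)$) to exactly $\max_k h_k$, matching $\binom{k^\ast + n - 1}{n}$ against $(D - h_{k^\ast})/2$ in the even case and with a parity adjustment in the odd case.
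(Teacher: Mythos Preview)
Your route differs from the paper's: the paper counts dense columns degree-by-degree via the forward map $\epsilon\mapsto x_1\epsilon$, obtaining $\sum_k\max(h_k-h_{k+1},0)$, which telescopes to $\max_k h_k$ by the unimodality and symmetry of the Hilbert series; no analogue of your step~(iii) is ever needed. Your bijection in step~(i) and the per-degree count in step~(ii) are each individually sound, and together they give the alternative expression $\sum_k\max(h_k-p_k,0)$.

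The gap is step~(iii): the identity $\sum_k\max(h_k-p_k,0)=\max_k h_k$ that you plan to establish is \emph{false}. Take $n=4$, $d=3$: then $h=(1,4,10,16,19,16,10,4,1)$ and $p_k=\binom{k+2}{3}$, so $\sum_k\max(h_k-p_k,0)=1+3+6+6=16\neq 19=\max_k h_k$. The sign-change index of $h_k-p_k$ is $k^\ast=3$, not the mode $k_0=4$, and the hockey-stick matching $\binom{k_0+n-1}{n}\stackrel{?}{=}(D-h_{k_0})/2$ reads $35\neq 31$. What this computation really exposes is that your two ingredients are incompatible in this example: divisor-closure of $B$ (used in~(i)) and the segment description ``$B(k)=\text{smallest }h_k$ monomials'' from the variant (used in~(ii)) cannot both hold, since the $19$ smallest degree-$4$ monomials are all $x_1$-divisible and their $x_1$-quotients give $19$ degree-$3$ monomials, exceeding $h_3=16$. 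The paper's proof never invokes divisor-closure and so does not surface this tension; your argument does. To salvage your approach you would have to replace the formula in~(ii) by $\max(h_k-h_{k-1},0)$ (obtained from divisor-closure rather than from the segment description, via $|\{m\in B(k):x_1\mid m\}|=\min(h_k,h_{k-1})$), after which step~(iii) collapses to the same one-line telescoping the paper uses.
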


\begin{proof}
    Let $k' = (d-1)n$, and denote by $\term{T}^{(k)}$ be set of all terms in $\kx$ of total degree $k$.

    Suppose that $\p{u}$ is the $l$th smallest term in $\mathcal{T}^{(k)}$. Then $x_1\p{u}$ is still the $l$th smallest term in $\mathcal{T}^{(k+1)}$. Hence from the conjecture variant, if $|B(k)| \leq |B(k+1)|$, then for every $\p{u}\in B(k)$, $x_1\p{u}$ is still in $B(k+1)$. Therefore it belongs to case (1) we reviewed in Section \ref{subsec: sparsity1}, and the corresponding column in $T_1$ is a sparse one. If $|B(k)| > |B(k+1)|$, we will have $|B(k)| - |B(k+1)|$ dense columns which come from the fact that they belong to case (2) or (3).

    As the coefficients in the the expansion of $(1+z+\cdots+z^{(d-1)})^n$ are symmetric to the central coefficient (or the central two when $(d-1)n$ is odd), the condition $|B(k)| > |B(k+1)|$ holds for the first time when $k = k_0$, the index of the central term (or of the second one in the central two terms). Then the number of dense columns is
      \begin{equation*}
        \begin{split}
          &(|B(k_0)| - |B(k_0 + 1)|) + (|B(k_0+1)| - |B(k_0 + 2)|) \\
          &+ \cdots + (|B(k'-1)| - |B(k')|) + |B(k')| =  |B(k_0)|.
        \end{split}
      \end{equation*}

      That ends the proof, for such a coefficient $|B(k_0)|$ is exactly the greatest one.
\end{proof}

The Hilbert series is usually used to analyze the behaviors of \grobner basis computation, for example the regularity of the input ideal. As the leading terms of polynomials in the \grobner basis and the canonical basis determine each other completely, it is also natural to have Proposition \ref{thm: dense}, which links the canonical basis and Hilbert series.

\begin{remarks}
    When $d=2$, the number of dense columns in $T_1$ is the binomial coefficient $C_n^{k_0}$, where
    \begin{equation*}
      k_0 = \left\{
    \begin{array}{cl}
        n/2 & \mbox{\quad if $n$ is even;} \\
        \frac{n+1}{2} & \mbox{\quad if $n$ is odd.}
    \end{array}
\right.
    \end{equation*}
    For the case $d=3$, such the greatest coefficient is called the \emph{central trinomial coefficient}.
\end{remarks}

\begin{corollary}\label{cor: sparsity}
  If the Moreno-Soc\'{\i}as conjecture holds, then the percentage of nonzero entries in $T_1$ for a generic sequence of degree $d$ is bounded by $(m_0+1)/D$, where $m_0$ is the number of dense columns computed from Proposition \ref{thm: dense}.
\end{corollary}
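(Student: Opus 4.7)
My plan is to bound the number of nonzero entries of $T_1$ column by column, using the classification of the three cases (1)--(3) from Section~\ref{subsec: construction-general} and the counting of dense columns provided by Proposition~\ref{thm: dense}.

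First I would recall that the $i$-th column of $T_1$ records the coordinates of $\nf(\epsilon_i x_1)$ in the basis $B$. If $\epsilon_i x_1 \in B$ (case (1)), then this coordinate vector is a unit vector, contributing exactly one nonzero entry. The remaining columns are those falling under cases (2) or (3); by Proposition~\ref{thm: dense}, there are exactly $m_0$ such columns. For these I would use the trivial bound that each column has at most $D$ nonzero entries.

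Summing over all columns, the total number of nonzero entries of $T_1$ is at most
\begin{equation*}
m_0 \cdot D + (D - m_0) \cdot 1 \;=\; (m_0+1) D - m_0 \;\leq\; (m_0+1) D.
\end{equation*}
Dividing by the total number $D^2$ of entries then gives the announced bound $(m_0+1)/D$ on the density.

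Since Proposition~\ref{thm: dense} already does the real work of counting dense columns (this is where Moreno-Soc\'ias is used), there is no substantive obstacle left here; the statement is essentially an immediate corollary that packages the count into a density estimate. The only care needed is to justify that case (1) columns are genuinely unit vectors, which follows directly from the definition of the normal form when $\epsilon_i x_1$ already belongs to $B$.
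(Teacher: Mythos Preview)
Your proof is correct and follows exactly the paper's approach: bound the nonzero entries in the $m_0$ dense columns by $m_0 D$, bound those in the remaining $D-m_0$ unit-vector columns by $D$, and divide by $D^2$. The paper's own proof is a terse two-sentence version of precisely this argument.
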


\begin{proof}
  The number of nonzero entries in the dense columns is bounded by $m_0 D$, and that in the other columns is smaller than $D$.
\end{proof}

Assuming the correctness of the Moreno-Soc\'{\i}as conjecture, we can take a step forward from Proposition \ref{thm: dense}. That is, we show case (3) will not occur during the construction of $T_1$.

\begin{proposition}
  Follow the notations in the Moreno-Soc\'{\i}as conjecture. If the conjecture holds, then for any term $\p{u} \not \in \lt(\ideal{P})$, $x_1 \p{u}$ is either not in $\lt(\ideal{P})$ or a minimal generator of $\lt(\ideal{P})$.
\end{proposition}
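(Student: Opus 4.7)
The plan is to argue by contradiction: assume $\p{u}\notin\lt(\ideal{P})$, $x_1\p{u}\in\lt(\ideal{P})$, and that $x_1\p{u}$ is not a minimal generator; I will derive a contradiction. Since $x_1\p{u}$ is not minimal, some minimal generator $\tilde{\p{u}}\in\lt(\ideal{P})$ properly divides $x_1\p{u}$. I split on whether $x_1\mid\tilde{\p{u}}$. If $x_1\nmid\tilde{\p{u}}$, then $\tilde{\p{u}}$ automatically divides $\p{u}$, forcing $\p{u}\in\lt(\ideal{P})$---already a contradiction. Otherwise I write $\tilde{\p{u}}=x_1\tilde{\p{v}}$, whence $\tilde{\p{v}}\mid\p{u}$ and $\tilde{\p{v}}\neq\p{u}$, so $\tilde{\p{v}}$ is a proper divisor of $\p{u}$; in particular $\tilde{\p{v}}\in B$ (since the canonical basis is closed under division) and $l:=\deg(\tilde{\p{v}})<\deg(\p{u})$.

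The heart of the proof is to contradict the three hypotheses $\tilde{\p{v}}\in B(l)$, $x_1\tilde{\p{v}}\in\lt(\ideal{P})$, and $\tilde{\p{v}}\mid\p{u}$ properly. I invoke the variant of the Moreno-Soc\'{\i}as conjecture, which states that $B(l+1)$ consists of the $|B(l+1)|$ smallest degree-$(l+1)$ terms in DRL and is therefore a downward-closed (in DRL) segment of the degree-$(l+1)$ terms. The decisive observation is that under the convention $x_1<x_2<\cdots<x_n$ adopted in the paper, $x_1$ is the \emph{smallest} variable in DRL. Consequently for every $j\in\{1,\ldots,n\}$ one has $x_j\tilde{\p{v}}\geq x_1\tilde{\p{v}}$ (both of total degree $l+1$), and since $x_1\tilde{\p{v}}\notin B(l+1)$ while $B(l+1)$ is downward-closed, we get $x_j\tilde{\p{v}}\notin B(l+1)$, i.e., $x_j\tilde{\p{v}}\in\lt(\ideal{P})$ for every $j$.

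To close, I write $\p{u}=\tilde{\p{v}}\cdot\p{w}$ with $\p{w}$ a nonconstant term (which exists since $\tilde{\p{v}}$ divides $\p{u}$ properly), and pick any variable $x_{j_0}$ dividing $\p{w}$. Then $\tilde{\p{v}}\cdot x_{j_0}$ divides $\p{u}$ and lies in $\lt(\ideal{P})$ by the previous step, forcing $\p{u}\in\lt(\ideal{P})$---the desired contradiction. Hence $x_1\p{u}$ must be a minimal generator whenever it belongs to $\lt(\ideal{P})$.

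The main obstacle I anticipate is book-keeping around the DRL convention: because the paper orders the variables as $x_1<\cdots<x_n$, the role usually played by the largest variable in ``textbook'' DRL descriptions is here played by $x_n$, while $x_1$ is the smallest. Once this is digested, the inequality $x_1\tilde{\p{v}}\leq x_j\tilde{\p{v}}$ in DRL for all $j$ becomes transparent, and the remainder of the argument is essentially a propagation of ideal membership along multiples of $\tilde{\p{v}}$, relying squarely on the variant of the Moreno-Soc\'{\i}as conjecture that $B(l+1)$ is an initial segment in DRL.
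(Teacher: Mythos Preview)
Your proof is correct. Both your argument and the paper's proceed by contradiction and ultimately rest on the same DRL fact (that $x_1$ is the smallest variable, so multiplying by $x_1$ yields the smallest degree-$(l+1)$ multiple of any degree-$l$ term), but the implementations differ. The paper's proof is more direct: writing $x_1\p{u}=(u_1,\ldots,u_n)$, it observes that since $x_1\p{u}$ is not a minimal generator, some $\p{u}^{(k)}:=x_1\p{u}/x_k$ already lies in $\lt(\ideal{P})$; for $k=1$ this is $\p{u}$ itself, and for $k>1$ one has $\deg(\p{u}^{(k)})=\deg(\p{u})$ with $\p{u}^{(k)}<\p{u}$ in DRL, so Proposition~\ref{prop: conjecture-prop} immediately forces $\p{u}\in\lt(\ideal{P})$. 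Your route instead descends all the way to a minimal generator $\tilde{\p{u}}$, passes to $\tilde{\p{v}}=\tilde{\p{u}}/x_1$, proves the stronger intermediate statement that \emph{every} $x_j\tilde{\p{v}}$ lies in $\lt(\ideal{P})$, and then climbs back up to $\p{u}$ via a variable dividing $\p{u}/\tilde{\p{v}}$. What the paper's approach buys is brevity: a single degree-preserving comparison at $\deg(\p{u})$, with no case split on whether $x_1\mid\tilde{\p{u}}$ and no need to track a proper divisor chain. What your approach buys is a slightly more explicit structural picture (once $x_1\tilde{\p{v}}$ leaves $B$, all one-step multiples of $\tilde{\p{v}}$ do), which is not needed here but is a clean standalone observation.
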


\begin{proof}
  Suppose $x_1\p{u} = (u_1, \ldots, u_n) \in \lt(\ideal{P})$ is not a minimal generator. We will draw a contradiction by showing $\p{u} \in \lt(\ideal{P})$ under such an assumption.

  Without loss of generality, we can assume each $u_i \neq 0$ for $i=1,\ldots, n$, otherwise we can reduce to the $n-1$ case by ignoring the $i$th component of $\p{u}$. As $x_1\p{u}$ is not a minimal generator, there exist a $k~(1\leq k \leq n)$ such that $\p{u}^{(k)} := (u_1, \ldots, u_k -1, \ldots, u_n)$ is in $\lt(\ideal{P})$. The case when $k=1$ is trivial. Otherwise, since $\deg(\p{u}^{(k)}) = \deg(\p{u})$ and $\p{u}^{(k)} < \p{u}$, by Proposition \ref{prop: conjecture-prop} we know $\p{u} \in \lt(\ideal{P})$.
\end{proof}

\begin{corollary}\label{cor: case2only}
  If the Moreno-Soc\'{\i}as conjecture holds, then the number of dense columns in $T_1$ for generic sequences is equal to the cardinality of $\{g \in G_1:\, x_1 | \lt(g)\}$, where $G_1$ is the \grobner basis w.r.t.\ DRL.
\end{corollary}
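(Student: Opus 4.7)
The plan is to use the previous proposition to eliminate case (3) in the column construction of $T_1$, and then to set up an explicit bijection between the dense columns of $T_1$ and the elements of $G_1$ whose leading terms are divisible by $x_1$.

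First, I would recall the three cases (a)--(c) for the column $T_1 \cdot \epsilon_i$ described in Section~\ref{subsec: construction-general}. The previous proposition states that for every $\epsilon_i \in B$ (so $\epsilon_i \notin \lt(\ideal{P})$), the product $x_1 \epsilon_i$ is either not in $\lt(\ideal{P})$ (which is case~(1), giving a sparse unit-vector column) or is a minimal generator of $\lt(\ideal{P})$ (which is case~(2), giving a dense column). In particular, case~(3) does not occur. Therefore the dense columns of $T_1$ are in bijection with the set
\[
S := \bigl\{ \epsilon_i \in B : x_1 \epsilon_i \text{ is a minimal generator of } \lt(\ideal{P}) \bigr\}.
\]

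Next I would identify $S$ with $\{ g \in G_1 : x_1 \mid \lt(g) \}$. Since $G_1$ is the reduced DRL Gröbner basis of $\ideal{P}$, the set of minimal generators of $\lt(\ideal{P})$ coincides with $\{\lt(g) : g \in G_1\}$. Define the map $\varphi : S \to \{g \in G_1 : x_1 \mid \lt(g)\}$ by sending $\epsilon_i \in S$ to the unique $g \in G_1$ with $\lt(g) = x_1 \epsilon_i$. The map is well-defined and injective because distinct $\epsilon_i$ yield distinct $x_1 \epsilon_i$, and in a reduced Gröbner basis the leading term determines the element. For surjectivity, given $g \in G_1$ with $x_1 \mid \lt(g)$, set $\epsilon := \lt(g)/x_1$; since $g$ is a minimal generator, $\epsilon$ cannot lie in $\lt(\ideal{P})$ (otherwise some proper divisor of $\lt(g)$ would already be in $\lt(\ideal{P})$), so $\epsilon \in B$, and $\varphi(\epsilon) = g$.

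Combining the two bijections gives the claimed equality. The proof is essentially bookkeeping once the previous proposition is in hand; the only point that merits care is the elementary verification that $\lt(g)/x_1$ lies in $B$, which follows directly from the minimality of $\lt(g)$ among the leading terms of $\ideal{P}$.
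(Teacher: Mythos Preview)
Your argument is correct and is precisely the reasoning the paper leaves implicit: the preceding proposition rules out case~(3), so every dense column of $T_1$ arises from case~(2), and the bijection $\epsilon_i \mapsto g$ with $\lt(g)=x_1\epsilon_i$ identifies these columns with $\{g\in G_1 : x_1\mid \lt(g)\}$. The paper states the corollary without proof, and your write-up simply makes the intended bookkeeping explicit, including the minor point that $\lt(g)/x_1\in B$ by minimality of $\lt(g)$.
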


\begin{remarks}
  By Corollary \ref{cor: case2only}, for generic sequences, to construct $T_1$ one only needs to find the leading term of which polynomial in $G_1$ is a given term $x_1 \p{u}$. Thus we can conclude that the construction of $T_1$ is free of arithmetic operations. Even for real implementations, the cost for constructing $T_1$ is also quite small compared with that for the change of ordering (see Section \ref{sec:exp} for the timings). Bearing in mind that the ideal generated by a generic sequence is in shape position, we know the complexity in Theorem \ref{prop:complexity shape} is indeed the complete complexity for the change of ordering for generic sequences, including construction of $T_1$, the only multiplication matrix  needed.
\end{remarks}


\subsection{Asymptotic analysis}
Next we study the asymptotic behavior of the number of dense columns in $T_1$ for a generic sequence of degree $d$, with $n$ fixed and $d$ increasing to $+\infty$. These results are mainly derived from a more detailed asymptotic analysis of coefficients of the Hilbert series of semi-regular systems in \cite{B2004e, BFS04}, where standard methods in asymptotic analysis, like the saddle-point and coalescent saddle points methods, are applied.

The target of this subsection is to find the dominant term of the greatest coefficient in the expansion of the Hilbert series $H(n,d)$ in \eqref{eq: hilbert}, as $d$ tends to $+\infty$ and $n$ is fixed. First one writes the $m$th coefficient $I_d(m)$ of $H(n,d)$ with the Cauchy integration:
    $$I_d(m) = \frac{1}{2\pi \im}\oint \frac{H(n,d)(z)}{z^{m+1}} dz = \frac{1}{2\pi \im}\oint \frac{(1-z^d)^n}{(1-z)^n z^{m+1}} dz.$$

With $F(z) := \frac{(1-z^d)^n}{(1-z)^n z^{m+1}} = e^{f(z)}$ and $g(z) :=1$, $I_d(m)$ becomes the form convenient to the asymptotic analysis
$$I_d(m) =  \frac{1}{2\pi \im}\oint g(z) e^{f(z)} dz.$$
Suppose the greatest coefficient in $H(n,d)$ comes from the $m_0$th term. Since we are interested in the asymptotic behavior, we can assume $m_0 = (d-1)n/2$. As a special case of \cite[Lemma 4.3.1]{B2004e}, we have the following result.

\begin{proposition}
Suppose $m_0 = (d-1)n/2$. Then the dominant term of $I_d(m_0)$ is
$$I_d(m_0) \sim \sqrt{\frac{1}{2\pi f''(r_0)}}  e^{f(r_0)},$$
where $f(z) = n\log(\frac{1-z^d}{1-z}) - (m+1)\log(z)$, and $r_0$ is the positive real root of $f'(z)$. Furthermore, $r_0$ tends to $1$ as $d$ increases to $+\infty$.
\end{proposition}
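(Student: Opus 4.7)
The plan is to apply the saddle-point method (method of steepest descent) to the Cauchy integral
$$I_d(m_0) = \frac{1}{2\pi \im}\oint e^{f(z)}\, dz, \qquad f(z) = n\log\frac{1-z^d}{1-z} - (m_0+1)\log z,$$
by deforming the contour to the circle $|z| = r_0$ centered at the origin, where $r_0 > 0$ is the unique positive real saddle point, i.e., the solution of $f'(z) = 0$. On such a circle $|e^{f(z)}|$ is maximized at $z = r_0$ and decays as one moves away along the circle, so the dominant contribution to the integral comes from a vanishingly small arc around $z = r_0$.

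First I would establish the existence, uniqueness, and limit of $r_0$. Writing $P(z) = 1+z+\cdots+z^{d-1}$, we have $f'(z) = n\,P'(z)/P(z) - (m_0+1)/z$, which tends to $-\infty$ as $z\to 0^+$ and to a positive constant as $z\to +\infty$; strict monotonicity of $z\mapsto zf'(z)$ then gives uniqueness of the positive real root $r_0$. At $z=1$ a direct computation yields $f'(1) = n(d-1)/2 - (m_0+1) = -1$, so $r_0>1$, and a second-order expansion around $z=1$ gives $r_0 = 1 + O(1/d^2)$, hence $r_0 \to 1$ as $d\to+\infty$.

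Next comes the local expansion. Parametrize the deformed contour by $z = r_0 e^{\im\theta}$, so that $dz = \im z\, d\theta$, and, using $f'(r_0) = 0$,
$$f(r_0 e^{\im\theta}) = f(r_0) - \tfrac{1}{2}\, r_0^2\, f''(r_0)\,\theta^2 + O(\theta^3).$$
Substituting, and using that the odd correction arising from $e^{\im\theta}$ integrates to zero by symmetry, gives
$$I_d(m_0) \sim \frac{\im r_0}{2\pi \im}\, e^{f(r_0)}\int_{-\infty}^{+\infty}\!\! e^{-\frac{1}{2}r_0^2 f''(r_0)\theta^2}\, d\theta = \frac{r_0\, e^{f(r_0)}}{2\pi}\sqrt{\frac{2\pi}{r_0^2 f''(r_0)}} = e^{f(r_0)}\sqrt{\frac{1}{2\pi f''(r_0)}},$$
which is precisely the claimed asymptotic (the factor $r_0$ from $dz$ cancels the $1/r_0$ produced by the Gaussian integral).

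The main technical obstacle is the tail estimate: one must pick $\theta_0 = \theta_0(d) \to 0$ such that the quadratic approximation of $f$ is uniformly valid on $|\theta|<\theta_0$ while simultaneously $r_0^2 f''(r_0)\theta_0^2 \to +\infty$, and then show that the contribution of the arc $|\theta| \ge \theta_0$ is exponentially smaller than $e^{f(r_0)}$. This reduces to verifying that $\mathrm{Re}\, f(r_0 e^{\im\theta})$ is strictly decreasing in $|\theta|$ on $(0,\pi]$ and that $f''(r_0) > 0$ is of order $nd^2$; both facts follow from careful inspection of the logarithmic derivatives, and together they amount to the hypotheses of the general saddle-point lemma invoked from \cite[Lemma 4.3.1]{B2004e} and \cite{BFS04}.
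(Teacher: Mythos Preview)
Your approach is the same as the paper's: the paper does not give an independent proof but simply invokes the saddle-point lemma from \cite[Lemma~4.3.1]{B2004e} and remarks that the only additional ingredient is the verification that $r_0\to 1$, which uses the specific value $m_0=(d-1)n/2$. Your sketch fills in exactly these standard saddle-point details (existence and uniqueness of $r_0$, local Gaussian expansion, tail control), so the strategies coincide; you in fact supply more than the paper does. One minor slip: $f'(z)$ does not tend to a positive constant as $z\to+\infty$ but to $0^+$ (since $P'(z)/P(z)\sim (d-1)/z$); however your subsequent argument via the monotonicity of $z\mapsto z f'(z)=n\,zP'(z)/P(z)-(m_0+1)$, which ranges strictly from $-(m_0+1)$ to $n(d-1)-(m_0+1)>0$, is correct and yields both uniqueness of $r_0$ and $r_0>1$, and your second-order estimate $r_0-1=O(1/d^2)$ matches the paper's later computation $f''(1)=nd^2/12+O(d)$.
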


To prove the fact that the positive real root $r_0$ of $f'(z)$ tends to $1$, one needs to use the equality $m_0 = (d-1)n/2$. Other parts of the proof are the same as those in \cite[Section 4.3.2]{B2004e}.

Next we investigate the value of $f''(r_0)$ and $F(r_0)$ in the dominant part of $I_d(m_0)$ as $r_0$ tends to $1$. Set $h(z) := \frac{1-z^d}{1-z} = 1+z+\cdots +z^{d-1}$. Then
$$f''(z) = n \frac{h''(z) h(z) - h'(z)^2}{h(z)^2} + \frac{d+1}{z}.$$
Noting that $h(1) = d$, $h'(1) = d(d-1)/2$, and $h''(1) = d(d-1)(d-2)/3$, we have
$$f''(1) = nd^2/12 + O(d).$$
With the easily obtained equality $F(1)=d^n$, we have the following asymptotic estimation of $I_d(m_0)$.

\begin{corollary}\label{prop: m0}
  Let $n$ be fixed. As $d$ tends to $+\infty$, $I_d(m_0) \sim \sqrt{\frac{6}{n\pi}}d^{n-1}$.
\end{corollary}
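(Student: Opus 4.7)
The plan is to take the asymptotic formula supplied by the preceding proposition
\[
I_d(m_0) \sim \sqrt{\frac{1}{2\pi f''(r_0)}}\, e^{f(r_0)},
\]
and evaluate both factors in the limit $d \to +\infty$, using the already-noted fact that $r_0 \to 1$. The strategy is therefore to replace $r_0$ by $1$ in the dominant part, after justifying that the corrections contribute only to lower-order terms.

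First I would compute $e^{f(1)} = F(1)$. Since $F(z) = (1-z^d)^n/((1-z)^n z^{m+1})$ and $(1-z^d)/(1-z) = 1 + z + \cdots + z^{d-1}$ equals $d$ at $z = 1$, we obtain $F(1) = d^n$. Next, I would substitute the value $f''(1) = nd^2/12 + O(d)$ stated immediately before the corollary, giving $\sqrt{1/(2\pi f''(1))} \sim \sqrt{6/(n\pi d^2)} = \sqrt{6/(n\pi)}\,d^{-1}$. Multiplying these two contributions yields
\[
I_d(m_0) \sim \sqrt{\frac{6}{n\pi}}\, d^{n-1},
\]
as required.

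The one point needing a little care is why replacing $r_0$ by $1$ in the dominant term is legitimate. The argument would be: $f$ and its derivatives are analytic near $z = 1$ (in the sense required after the explicit calculations at $z = 1$), and by the previous proposition $r_0 \to 1$; therefore $f''(r_0) = nd^2/12 + O(d)$ still holds and $e^{f(r_0)}/e^{f(1)} \to 1$. Both the square-root factor and the exponential factor thus differ from their $z=1$ counterparts by a multiplicative factor $1+o(1)$, which is absorbed by the $\sim$ relation. I do not expect any real obstacle here because the heavy lifting (the saddle-point asymptotic and the convergence $r_0 \to 1$) has been done in the preceding proposition; this corollary is essentially a bookkeeping calculation that plugs the two closed-form values $F(1) = d^n$ and $f''(1) \sim nd^2/12$ into that asymptotic.
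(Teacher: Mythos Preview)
Your proposal is correct and matches the paper's approach exactly: the paper computes $F(1)=d^n$ and $f''(1)=nd^2/12+O(d)$ in the paragraph immediately preceding the corollary and then simply substitutes these into the saddle-point asymptotic from the proposition, just as you do. Your extra paragraph justifying the replacement of $r_0$ by $1$ is a bit more care than the paper itself supplies, but it is not a different route.
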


This asymptotic estimation of the greatest coefficient in $H(n,d)$ accords with the theoretical one. Figure \ref{fig: asymp} shows the number of dense columns derived from both Proposition \ref{thm: dense} and Corollary \ref{prop: m0}. As can be shown from this figure, the asymptotic estimation is good, even when $d$ is small.

\begin{figure}[h]
  \centering
  \includegraphics[width=6cm]{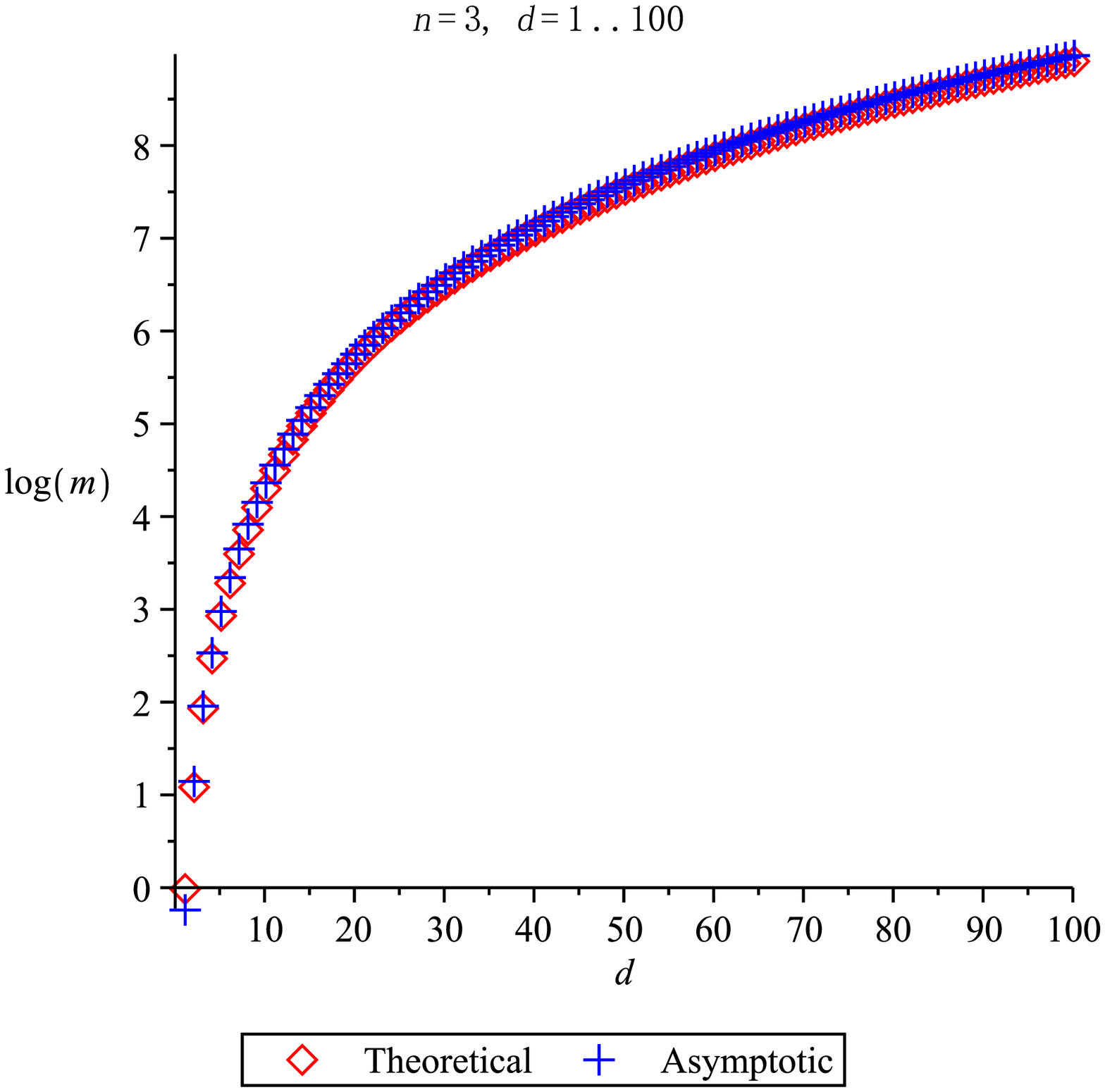}
  \hspace{-0.2cm}%
  \includegraphics[width=6cm]{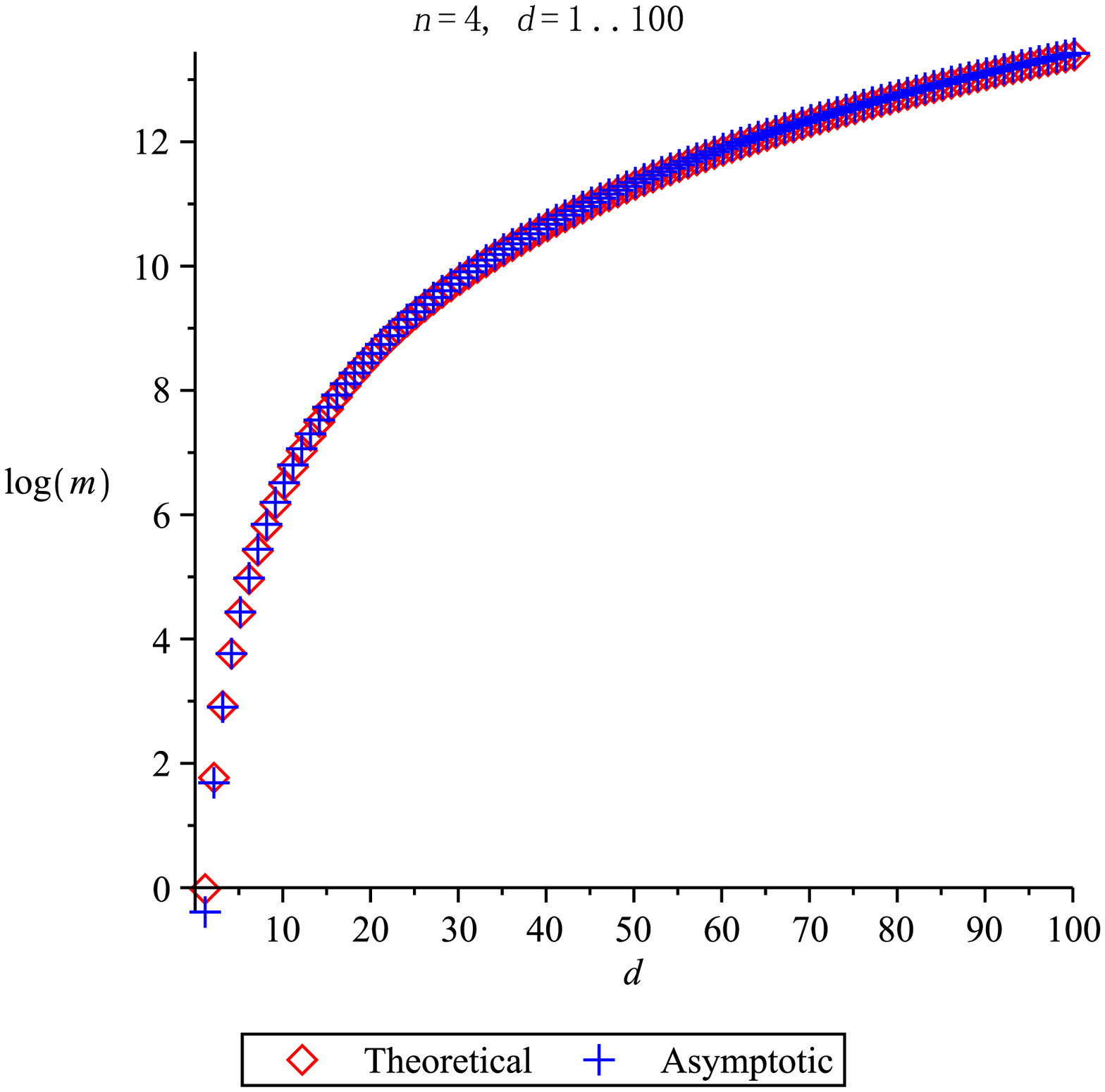}
  \caption{Number of dense columns in $T_1$ for $n=3, 4$ and $d=1, \ldots, 100$}\label{fig: asymp}
\end{figure}

\begin{corollary}\label{cor: random-complexity}
  Let $n$ be fixed. As $d$ tends to $+\infty$, if the Moreno-Soc\'{\i}as conjecture holds, then the following statements hold:
  \begin{enumerate}
    \item the percentage of nonzero entries in $T_1$ is $\sim \sqrt{\frac{6}{n\pi}}/d$;
    \item for a generic sequence of degree $d$, the complexity in Theorem \ref{prop:complexity shape} is $O(\sqrt{\frac{6}{n\pi}} D^{2+\frac{n-1}{n}})$.
  \end{enumerate}
\end{corollary}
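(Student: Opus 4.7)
The plan is to obtain both statements as essentially arithmetic consequences of the already-established Corollaries \ref{cor: sparsity} and \ref{prop: m0}, together with Theorem \ref{prop:complexity shape}. Since everything hard has been done in the asymptotic analysis of $I_d(m_0)$, the remaining work is bookkeeping: express the relevant quantities in terms of $D = d^n$ and collect the dominant term.

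For statement (1), I would start from the bound in Corollary \ref{cor: sparsity}, which says that the fraction of nonzero entries in $T_1$ is at most $(m_0 + 1)/D$, where $m_0$ is the number of dense columns. By Proposition \ref{thm: dense}, $m_0$ equals the greatest coefficient in the expansion of $(1 + z + \cdots + z^{d-1})^n$, i.e.\ $m_0 = I_d(m_0)$ in the notation of the preceding subsection. Corollary \ref{prop: m0} then gives $m_0 \sim \sqrt{6/(n\pi)}\, d^{n-1}$. Using $D = d^n$ (which follows from specializing the Hilbert series \eqref{eq: hilbert} at $z = 1$), the bound becomes
\[
\frac{m_0 + 1}{D} \sim \frac{\sqrt{6/(n\pi)}\, d^{n-1}}{d^n} = \frac{\sqrt{6/(n\pi)}}{d},
\]
which is the claimed density.

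For statement (2), I would feed the same estimate into the bound $N_1 \leq (m_0 + 1) D$ used in Corollary \ref{cor: sparsity}, yielding $N_1 = O\!\bigl(\sqrt{6/(n\pi)}\, d^{2n-1}\bigr)$. Substituting into the complexity $O(D(N_1 + n\log D))$ of Theorem \ref{prop:complexity shape}, the $D N_1$ contribution is $O\!\bigl(\sqrt{6/(n\pi)}\, d^{3n-1}\bigr)$, while $D\, n\log D = O(n^2 d^n \log d)$ is negligible for $n$ fixed and $d \to +\infty$. Converting back via $d = D^{1/n}$ gives $d^{3n-1} = D^{(3n-1)/n} = D^{2 + (n-1)/n}$, so the overall complexity is $O\!\bigl(\sqrt{6/(n\pi)}\, D^{2+(n-1)/n}\bigr)$, as claimed. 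I would also remark briefly that, by the observation following Corollary \ref{cor: case2only}, the construction of $T_1$ is free and the ideal generated by a generic sequence is in shape position, so Theorem \ref{prop:complexity shape} does apply and no further hidden cost needs to be accounted for.

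There is no real obstacle here beyond being careful with the two changes of variable (between the counting variable of the Hilbert series and $D$, and between $d$-asymptotics and $D$-asymptotics) and with confirming that the $D \cdot N_1$ term dominates $D \cdot n \log D$ in the stated asymptotic regime. The only point that deserves a line of justification is that the $+1$ and the sparse columns contribute only lower-order terms to $N_1$, so the leading constant $\sqrt{6/(n\pi)}$ is preserved through the substitution.
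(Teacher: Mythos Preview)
Your proposal is correct and matches the paper's intent: the corollary is stated in the paper without an explicit proof, as a direct arithmetic consequence of Corollary~\ref{cor: sparsity}, Corollary~\ref{prop: m0}, and Theorem~\ref{prop:complexity shape}, and your derivation carries out precisely that bookkeeping. The only point worth a brief remark is that Corollary~\ref{cor: sparsity} gives an upper bound on the density, so to justify the $\sim$ (rather than merely $\lesssim$) you could note, as the paper does at the start of Section~\ref{subsec: sparsity1}, that for a generic sequence the polynomials in $G$ are dense, hence each of the $m_0$ dense columns contributes essentially $D$ nonzero entries and the bound is asymptotically sharp.
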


  As Corollary \ref{cor: random-complexity} shows, for a generic sequence, the multiplication matrix $T_1$ become sparser as $d$ increases. Furthermore, the complexity of Algorithm \ref{alg: shapeN} is smaller in both the exponent and constant compared with {\sf FGLM}.

\begin{remarks}
  Here we only consider the case when $n$ is fixed and $d$ tends to $+\infty$, while the asymptotic behaviors of the dual case when $d$ is fixed and $n$ tends to $+\infty$ have been studied in \cite{BFS04} for the special value $d=2$.
\end{remarks}

\section{Experiments} \label{sec:exp}

The first method for the shape position case, namely Algorithm \ref{alg: shapeN}, has been implemented in C over fields of characteristic $0$ and finite fields. A preliminary implementation of the {\sf BMS}-based method for the general case has been done in {\sf Magma} over large finite fields. Benchmarks are used to test the correctness and efficiency of these two methods. All the experiments were made under Scientific Linux OS release 5.5 on 8 Intel(R) Xeon(R) CPUs E5420 at 2.50 GHz with 20.55G RAM.

Table \ref{tab:time} records the timings (in seconds) of our implementations of $F_5$ and Algorithm \ref{alg: shapeN} applied to benchmarks including theoretical ones like Katsura systems (Katsura$n$) and randomly generated quadratic polynomial systems of $n$ variables (Random$n$), and practical ones like MinRank problems from Cryptography~\cite{FSS10} and algebraic cryptanalysis of some curve-based cryptosystem (Edwards). In this table, $D$ denotes the degree of the input ideal, and the column "Density" means the percentage of nonzero entries in the multiplication matrix $T_1$. The instances marked with \textdagger~ are indeed not in shape position, and the timings for such instances only indicate those of computing the univariate polynomial in the LEX \grobner basis. The performances of the DRL \grobner basis computation and {\sf FGLM} in {\sf Magma} (version 2-17-1) and {\sf Singular} (version 3-1-2), together with the speedup factors of our implementation for the change of ordering, are also provided.

As shown by this table, the current implementation of Algorithm \ref{alg: shapeN} outperforms the {\sf FGLM} implementations in {\sf Magma} and {\sf Singular}. Take the Random13 instance for example, the {\sf FGLM} implementations in {\sf Magma} and {\sf Singular} take $10757.4$ and $19820.2$ seconds respectively, while the new implementation only needs $193.5$ seconds. This is around $54$ and $101$ times faster. Such an efficient implementation is now able to manipulate ideals in shape position of degree greater than $40000$. It is also important to note that with this new algorithm, the time devoted to the change of ordering is somehow of the same order of magnitude as the DRL Gr\"obner basis computation.

\begin{table*}[ht]
\centering
\label{tab:time}
\resizebox{\textwidth}{!}{
\begin{tabular}{c|cc|cccccc|cc}
\multicolumn{3}{c}{ } & \multicolumn{2}{c}{FGb}& \multicolumn{2}{c}{Magma}&\multicolumn{2}{c}{Singular}&\multicolumn{2}{c}{Speedup}\\
\hline
Name & $D$ & Density & $F_5$(C) &\begin{small}New Algorithm\end{small} & $F_4$ & FGLM&DRL &FGLM &Magma &Singular\\
\hline
Katsura11  & $2^{11}$ &21.53\%&4.9  &3.4  &18.2  &178.6  &632.0  &328.4  &52.7&96.9\\
Katsura12 & $2^{12}$ & 21.26\%&31.9  &26.3  &147.9  & 1408.1  &5061.8  &2623.5  &53.6&99.8\\
Katsura13 & $2^{13}$ &19.86\%&186.3  &189.1  &1037.2  &10895.4  &&&57.6&\\
Katsura14 & $2^{14}$ &19.64\%&1838.9  &1487.4  &9599.0  &87131.9  &&&58.5&\\
Katsura15 & $2^{15}$ &18.52\%&11456.3  &12109.2  &&&&&&\\
\hline
Random 11&$2^{11}$&21.53\%&4.7  &3.4  &18.1  &169.3  &623.9  &328.6  &49.2&95.5\\
Random 12&$2^{12}$&21.26\%&26.6  &26.9  &134.9  &1335.8  &4867.4  &2581.1  &49.6&95.8\\
Random 13&$2^{13}$&19.98\%&146.8  &193.5  &949.6  &10757.4  &36727.0  &19820.2&55.6&102.4\\
Random 14&$2^{14}$&19.64\%&1000.7  &1489.5  &7832.4  &84374.6&&&56.6&\\
Random 15&$2^{15}$&18.52\%&6882.5  &10914.02  &&&&&&\\
\hline
MinR(9,6,3) & 980&  26.82\%&1.1  &0.5  &6.3  &22.7  &137.5  &38.1  &43.6&73.2\\
MinR(9,7,4) & 4116& 22.95\%&28.4  &28.5  &208.1  &1360.4  &4985.8  &2490.3  &47.7&87.4\\
MinR(9,8,5) & 14112&19.04\%&543.6  &1032.8  &&&&&&\\
MinR(9,9,6) & 41580 &16.91\%&9048.2  &22171.3  &&&&&&\\
\hline
Weierstrass & 4096& 7.54\%&4.0  &9.0  &5.8  &418.3  &72.4  &1823.6  &46.7&203.7\\
Edwards \textdagger  & 4096 &3.41\%&0.1  &2.4  &0.2  &176.7  &1.0  &839.9  &72.7&345.6\\
Cyclic 10 \textdagger  &    34940 &1.00\%&&3586.9  &
\multicolumn{2}{r}{\(>\){\small 16 hrs}
{\small and} \(>\){\small 16 Gig}}&&&&\\
\hline
\end{tabular}}
\caption{Timings of the method for the shape position case from DRL to LEX}
\end{table*}

Table \ref{tab:sakata-time} illustrates the performances of Algorithm \ref{alg: bmsbased} for the general case. As currently this method is only implemented preliminarily in {\sf Magma}, only the number of field multiplications and other critical parameters are recorded, instead of the timings.

Benchmarks derived from Cyclic 5 and 6 instances are used. Instances with ideals in shape position (marked with \textdaggerdbl) are also tested to demonstrate the generality of this method. Besides $n$ and $D$ denoting the number of variables and degree of the input ideal, the columns ``Mat Density" and ``Poly Density" denote the maximal percentage of nonzero entries in the matrices $T_1, \ldots, T_n$ and the density of resulting \grobner bases respectively. The following 4 columns record the numbers of passes in the main loop of Algorithm \ref{alg: bmsbased}, matrix multiplications, reductions and field multiplications.

As one can see from this table, the numbers of passes accord with the bound derived in theorem \ref{thm:loop}, and the number of operations is less than the original {\sf FGLM} algorithm for Cyclic-like benchmarks. However, for instances of ideals in shape position, this method works but the complexity is not satisfactory. This is mainly because the resulting \grobner bases in these cases are no longer sparse, and thus the reduction step becomes complex. Fortunately, in the top-level algorithm \ref{alg: main}, it is not common to handle such ideals in shape position with this method.

\begin{table}[h]
 \centering
 \resizebox{\textwidth}{!}{
\begin{tabular}{c |c@{} c @{}c@{} c| c @{}c @{}c @{}c}
\hline
Name & ~~$n$~~ & ~~$D$~~ & ~Mat Density~ & Poly Density & ~\#\,Passes~ & ~\#\,Mat.~ & ~\#\,Red.~ & ~\#\,Multi.~\\
\hline
Cyclic5-2 & $2$  &  $55$  & $4.89\%$ &  $17.86\%$   & 165 & 318 & 107 & $nD^{2.544}$\\
Cyclic5-3 & $3$  &  $65$  & $8.73\%$ &  $19.7\%$   & 294 & 704 & 227 & $nD^{2.674}$\\
Cyclic5-4 & $4$  &  $70$  & $10.71\%$ &  $21.13\%$   & 429 & 1205 & 355 & $nD^{2.723}$\\
Cyclic5& $5$  &  $70$  & $12.02\%$ &  $21.13\%$   & 499 & 1347 & 421 & $nD^{2.702}$\\
Cyclic6 & $6$  &  $156$  & $11.46\%$ &  $17.2\%$   & 1363 & 4464 & 1187 & $nD^{2.781}$\\
\hline
Uteshev Bikker \textdaggerdbl & $4$  &  $36$  & $60.65\%$ &  $100\%$   & 179 & 199 & 105 & $nD^{2.992}$\\
D1 \textdaggerdbl  & $12$ &  $48$  & $34.2\%$ &  $51.02\%$   & 624 & 780 & 517 & $nD^{2.874}$\\
Dessin2-6 \textdaggerdbl & $6$  &  $42$  & $46.94\%$ &  $100\%$   & 294 & 336 & 205 & $nD^{2.968}$\\
\hline
\end{tabular}
}
\caption{Performances of Algorithm \ref{alg: bmsbased} for the general case from DRL to LEX}\label{tab:sakata-time}
\end{table}

%
%

~\newline
{\bf Acknowledgements}
~\newline

The authors would like to thank Daniel Augot for his very helpful comments on the incremental Wiedemann algorithm and literatures about the BMS algorithm. This work is supported by the EXACTA grant of the French National Research Agency (ANR-09-BLAN-0371-01) and the National Science Foundation of China (NSFC 60911130369), the HPAC grant of the French National Research Agency, and the ECCA project by the Sino-French Laboratory for Computer Science, Automation and Applied Mathematics.

\bibliographystyle{plain}

\begin{thebibliography}{10}

\bibitem{AJL2001g}
E.~Aguirre, A.S. Jarrah, and R.~Laubenbacher.
\newblock Generic ideals and {Moreno-Soc{\'\i}as} conjucture.
\newblock In {\em Proceedings of ISSAC 2001}, pages 21--23. ACM Press, 2001.

\bibitem{B2004e}
M.~Bardet.
\newblock {\'E}tude des syst{\`e}mes alg{\'e}briques surd{\'e}termin{\'e}s.
  {A}pplications aux codes correcteurs et {\`a} la cryptographie.
\newblock {\em PhD thesis, Universit{\'e} Paris VI}, 2004.

\bibitem{BFS04}
M.~Bardet, J.-C. Faug\`{e}re, and B.~Salvy.
\newblock On the complexity of {Gr{\"o}bner} basis computation of semi-regular
  overdetermined algebraic equations.
\newblock In {\em International Conference on Polynomial System Solving -
  ICPSS}, pages 71--75, 2004.

\bibitem{BF03C}
A.~Basiri and J.-C. Faug{\`e}re.
\newblock {Changing the ordering of Gr{\"o}bner bases with LLL: case of two
  variables}.
\newblock In {\em Proceedings of ISSAC 2003}, pages 23--29. ACM Press, 2003.

\bibitem{B1994T}
E.~Becker, T.~Mora, M.G. Marinari, and C.~Traverso.
\newblock {The shape of the Shape Lemma}.
\newblock In {\em Proceedings of ISSAC 1994}, pages 129--133. ACM Press, 1994.

\bibitem{B93G}
Thomas Becker, Volker Weispfenning, and Heinz Kredel.
\newblock {\em {Gr{\"o}bner Bases: a Computational Approach to Commutative
  Algebra}}.
\newblock Graduate Texts in Mathematics. Springer, New York, 1993.

\bibitem{BO2006C}
M.~Bras-Amor{\'o}s and M.E. O'Sullivan.
\newblock {The correction capability of the Berlekamp--Massey--Sakata algorithm
  with majority voting}.
\newblock {\em Applicable Algebra in Engineering, Communication and Computing},
  17(5):315--335, 2006.

\bibitem{Brent1980}
Richard~P. Brent, Fred~G. Gustavson, and David Y.~Y. Yun.
\newblock {Fast solution of Toeplitz systems of equations and computation of
  Pad\'e approximants}.
\newblock {\em Journal of Algorithms}, 1(3):259--295, 1980.

\bibitem{B85G}
B.~Buchberger.
\newblock Gr\"{o}bner bases: An algorithmic method in polynomial ideal theory.
\newblock In {\em Multidimensional Systems Theory}, pages 184--232. Reidel,
  Dordrecht, 1985.

\bibitem{BPW06}
J.~Buchmann, A.~Pyshkin, and R.-P. Weinmann.
\newblock A zero-dimensional {Gr{\"o}bner} basis for {AES-128}.
\newblock In Matthew Robshaw, editor, {\em Fast Software Encryption}, volume
  4047 of {\em LNCS}, pages 78--88. Springer, Berlin/Heidelberg, 2006.

\bibitem{C2007g}
M.~Cimpoeas.
\newblock Generic initial ideal for complete intersections of embedding
  dimension three with strong {Lefschetz} property.
\newblock {\em Bulletin Math\'{e}matique de la Soci\'{e}t\'{e} des Sciences
  Math\'{e}matiques de Roumanie. Nouvelle S\'{e}rie}, 50 (98)(1):33--66, 2007.

\bibitem{CKM97C}
S.~Collart, M.~Kalkbrener, and D.~Mall.
\newblock Converting bases with the {G}r\"{o}bner walk.
\newblock {\em Journal of Symbolic Computation}, 24(3--4):465--469, 1997.

\bibitem{CLO1998U}
D.A. Cox, J.B. Little, and D.~O'Shea.
\newblock {\em {Using Algebraic Geometry}}.
\newblock Springer Verlag, 1998.

\bibitem{DJMS08}
X.~Dahan, X.~Jin, M.~{Moreno Maza}, and E.~Schost.
\newblock Change of order for regular chains in positive dimension.
\newblock {\em Theoretical Computer Science}, 392:37--65, 2008.

\bibitem{FM11}
J.-C. Faug\`{e}re and C.~Mou.
\newblock Fast algorithm for change of ordering of zero-dimensional
  {Gr\"{o}bner} bases with sparse multiplication matrices.
\newblock In {\em Proceedings of ISSAC 2011}, pages 115--122. ACM Press, 2011.

\bibitem{FSS10}
J.-C. Faug\`{e}re, M.~{Safey El Din}, and P.-J. Spaenlehauer.
\newblock Computing loci of rank defects of linear matrices using {Gr\"{o}bner}
  bases and applications to cryptology.
\newblock In {\em Proceedings of ISSAC 2010}, pages 257--264. ACM Press, 2010.

\bibitem{F1999A}
Jean-Charles Faug\`{e}re.
\newblock A new efficient algorithm for computing {Gr\"{o}bner} bases
  (${F_4}$).
\newblock {\em Journal of Pure and Applied Algebra}, 139(1--3):61--88, 1999.

\bibitem{F2002A}
Jean-Charles Faug\`{e}re.
\newblock A new efficient algorithm for computing {Gr\"{o}bner} bases without
  reduction to zero (${F_5}$).
\newblock In {\em Proceedings of ISSAC 2002}, pages 75--83. ACM Press, 2002.

\bibitem{FGLM93E}
Jean-Charles Faug\`{e}re, P~Gianni, D~Lazard, and T~Mora.
\newblock Efficient computation of zero-dimensional {Gr\"{o}bner} bases by
  change of ordering.
\newblock {\em Journal of Symbolic Computation}, 16(4):329--344, 1993.

\bibitem{FR1993D}
G.L. Feng and T.R.N. Rao.
\newblock {Decoding algebraic-geometric codes up to the designed minimum
  distance}.
\newblock {\em IEEE Transactions on Information Theory}, 39(1):37--45, 1993.

\bibitem{H1998a}
T.~H{\o}holdt, J.H. van Lint, and R.~Pellikaan.
\newblock {\em Algebraic Geometry Codes}.
\newblock Handbook of Coding Theory. Elsevier, Amsterdam, 1998.

\bibitem{Jo1989}
Edmund Jonckheere and Chingwo Ma.
\newblock {A simple Hankel interpretation of the Berlekamp--Massey algorithm}.
\newblock {\em Linear Algebra and its Applications}, 125:65--76, 1989.

\bibitem{L83G}
D.~Lazard.
\newblock {Gr{\"o}bner bases, Gaussian elimination and resolution of systems of
  algebraic equations}.
\newblock In {\em Computer Algebra, EUROCAL' 83}, pages 146--156. Springer,
  1983.

\bibitem{L1992S}
D.~Lazard.
\newblock {Solving zero-dimensional algebraic systems}.
\newblock {\em Journal of symbolic computation}, 13(2):117--131, 1992.

\bibitem{LY1997O}
P.~Loustaunau and E.V. York.
\newblock {On the decoding of cyclic codes using Gr{\"o}bner bases}.
\newblock {\em Applicable Algebra in Engineering, Communication and Computing},
  8(6):469--483, 1997.

\bibitem{M1991a}
G.~Moreno-Soc{\'\i}as.
\newblock Autour de la fonction de {Hilbert-Samuel} (escaliers d'id\'{e}aux
  polynomiaux).
\newblock {\em PhD thesis, Ecole Polytechnique}, 1991.

\bibitem{M2003d}
G.~Moreno-Soc{\'\i}as.
\newblock {Degrevlex Gr{\"o}bner bases of generic complete intersections}.
\newblock {\em Journal of Pure and Applied Algebra}, 180(3):263--283, 2003.

\bibitem{P2010g}
K.~Pardue.
\newblock Generic sequences of polynomials.
\newblock {\em Journal of Algebra}, 324(4):579--590, 2010.

\bibitem{PC06}
C.~Pascal and E.~Schost.
\newblock Change of order for bivariate triangular sets.
\newblock In {\em Proceedings of ISSAC 2006}, pages 277--284. ACM Press, 2006.

\bibitem{R1999S}
F.~Rouillier.
\newblock Solving zero-dimensional systems through the rational univariate
  representation.
\newblock {\em Applicable Algebra in Engineering, Communication and Computing},
  9(5):433--461, 1999.

\bibitem{SH2002A}
K.~Saints and C.~Heegard.
\newblock {Algebraic-geometric codes and multidimensional cyclic codes: a
  unified theory and algorithms for decoding using Gr{\"o}bner bases}.
\newblock {\em IEEE Transactions on Information Theory}, 41(6):1733--1751,
  2002.

\bibitem{S88F}
S.~Sakata.
\newblock {Finding a minimal set of linear recurring relations capable of
  generating a given finite two-dimensional array}.
\newblock {\em Journal of Symbolic Computation}, 5(3):321--337, 1988.

\bibitem{S90E}
S.~Sakata.
\newblock {Extension of the Berlekamp--Massey algorithm to $N$ dimensions}.
\newblock {\em Information and Computation}, 84(2):207--239, 1990.

\bibitem{VG2003M}
J.~Von Zur~Gathen and J.~Gerhard.
\newblock {\em {Modern Computer Algebra}}.
\newblock Cambridge University Press, 2003.

\bibitem{W2001E}
D.~Wang.
\newblock {\em {Elimination Methods}}.
\newblock Springer Verlag, 2001.

\bibitem{W86S}
D.~Wiedemann.
\newblock {Solving sparse linear equations over finite fields}.
\newblock {\em IEEE Transactions on Information Theory}, 32(1):54--62, 1986.

\end{thebibliography}

\end{document}